\documentclass[11pt,letterpaper]{article}
\usepackage{epsfig,rotating,setspace,latexsym,amsbsy,amsmath,epsf,amssymb,bm}
\usepackage{cite,graphicx,authblk,color,subfigure}
\usepackage{multirow}
\usepackage{ctable}

\newlength{\Oldarrayrulewidth}
\newcommand{\Cline}[2]{%
  \noalign{\global\setlength{\Oldarrayrulewidth}{\arrayrulewidth}}%
  \noalign{\global\setlength{\arrayrulewidth}{#1}}\cline{#2}%
  \noalign{\global\setlength{\arrayrulewidth}{\Oldarrayrulewidth}}}
  
\title{The Capacity of Private Information Retrieval from Uncoded Storage Constrained Databases
\author{Mohamed Adel Attia \qquad Deepak Kumar \qquad Ravi Tandon}
\affil{Department of Electrical and Computer Engineering\\
University of Arizona, Tucson, AZ, USA.\\
E-mail: {\{\textit{madel, deepakkumar, tandonr}\}}@email.arizona.edu}}

\newcommand{\A}{{\mathbf{A}}}
\newcommand{\Q}{{\mathbf{Q}}}
\newcommand{\W}{{\mathbf{W}}}
\newcommand{\Z}{{\mathbf{Z}}}
\newcommand{\bsigma}{{\boldsymbol{\sigma}}}
\newcommand{\bpi}{{\boldsymbol{\pi}}}
\newcommand{\bdelta}{{\boldsymbol{\delta}}}

\newtheorem{theorem}{Theorem}
\newtheorem{lemma}{Lemma}
\newtheorem{claim}{Claim}
\newtheorem{remark}{Remark}
\newtheorem{example}{Example}

\newenvironment{proof}[1]{\medskip\par\noindent
{\bf Proof:\,}\,#1}{{\mbox{\,$\blacksquare$}\par}}
\pagestyle{plain}
\allowdisplaybreaks

\setstretch{1.2}
\textwidth 6.5 in
\oddsidemargin 0.0 in
\evensidemargin  0.0 in
\textheight 9.2 in
\topmargin -0.8 in

\begin{document}
\maketitle
\newcommand\blfootnote[1]{%
  \begingroup
  \renewcommand\thefootnote{}\footnote{#1}%
  \addtocounter{footnote}{-1}%
  \endgroup
}

\blfootnote{This work was supported by the NSF grants  CAREER-1651492 and CNS-1715947. This paper was presented in parts at 2018 IEEE International Conference on Communications (ICC), and 2018 IEEE International Symposium on Information Theory (ISIT).}

\thispagestyle{empty}

\begin{abstract}
 Private information retrieval (PIR) allows a user to retrieve a desired message from a set of databases without revealing the identity of the desired message. The replicated databases scenario was considered by Sun and Jafar in \cite{SunAndJaffar1}, where $N$ databases can store the same $K$ messages completely. A PIR scheme was developed to achieve the optimal download cost given by  $\left(1+ \frac{1}{N}+ \frac{1}{N^{2}}+ \cdots + \frac{1}{N^{K-1}}\right)$.
In this work, we consider the problem of PIR from \textit{storage constrained} databases. Each database has a storage capacity of $\mu KL$ bits, where $L$ is the size of each message in bits, and $\mu \in [1/N, 1]$ is the normalized storage.  
On one extreme, $\mu=1$ is the replicated databases case considered in \cite{SunAndJaffar1}. On the other hand, when $\mu= 1/N$, then in order to retrieve a message privately, the user has to download all the messages from the databases achieving a download cost of $1/K$. We aim to characterize the optimal download cost versus storage trade-off for any storage capacity in the range $\mu \in [1/N, 1]$.

In the storage constrained PIR problem, there are two key challenges: 
a) construction of communication efficient schemes through storage content design at each database that allow download efficient PIR; and 
b) characterizing the optimal download cost via information-theoretic lower bounds.
The novel aspect of this work is to characterize the optimum download cost of PIR from uncoded storage constrained databases for any value of storage.
 In particular, for any $(N,K)$,  we show that the optimal trade-off between storage, $\mu$, and the download cost, $D(\mu)$, is given by the lower convex hull of the $N$ pairs $\left(\frac{t}{N}, \left(1+ \frac{1}{t}+ \frac{1}{t^{2}}+ \cdots + \frac{1}{t^{K-1}}\right)\right)$ for $t=1,2,\ldots, N$. 
 To prove this result, we first present the storage constrained PIR scheme for any $(N,K)$. 
 We next obtain a general lower bound on the download cost for PIR, which is valid for the following storage scenarios: replicated or storage constrained, coded or uncoded, and fixed or optimized.
We then specialize this bound using the uncoded storage assumption to obtain the converse proof, i.e., obtaining lower bounds on the download cost for PIR from uncoded storage constrained databases as a function of the available storage.  Given the system constraints, we express the lower bound as a linear program (LP). We then solve the LP to obtain the best lower bounds on the download cost for different regimes of storage, which matches the proposed storage constrained PIR scheme.
\end{abstract}

\section{Introduction}
\label{sec:introduction}

With the paradigm-shifting developments towards distributed storage systems (DSS), assuring privacy while retrieving information from public databases has become a crucial need for users. This problem, also referred to as private information retrieval (PIR) has direct practical applications in cloud storage or social networking, with examples in many areas such as: protecting investors seeking stock market records or bank loans, customers searching for online reservations or insurance policies with potential fare increase due to frequent searching, or even activists seeking files that might be considered anti-regime. 
The original formulation  of the PIR problem involves $N$ non-colluding databases, each database identically replicates the storage of $K$ messages. Upon receiving queries from the legitimate user, the databases answer truthfully with the required information, which means they are curious but honest. Successful PIR must satisfy two properties: first, each of the $N$ message queries sent from the user to the N databases must reveal nothing about the identity of the message being requested; and second, the user must be able to correctly decode the message of interest from the answers received from the $N$ databases.

A trivial solution to PIR is to download all the messages from the databases. Indeed this approach will make the databases indifferent towards  the identity of the message being requested, but it is highly impractical especially when the number of messages $K$ is too large. The goal is to design an efficient protocol, which is characterized by the total upload/download cost the user has to pay in order to download a message privately.
The PIR problem has been studied extensively within the computer science community \cite{chor1995private,yekhanin2012locally,gasarch2004survey,song2000practical,ostrovsky2007survey}. 
In the pioneering work by Chor et al. \cite{chor1995private}, 
the authors considered one bit length messages. The privacy was based on cryptographic assumptions, which require NP hard computations to break. The privacy cost is calculated based on the total amount of communication between the user and the databases, i.e., the upload cost which is the size of the $N$ queries, and the download cost which is the size of the $N$ answers.

The Shannon theoretic approach for this problem is to allow the size of the messages to be arbitrary large, and therefore the upload cost is considered negligible with respect to the download cost, which is the more practical scenario.
Based on the Shannon theoretic formulation, the rate of a PIR scheme is the ratio between the number of desired vs downloaded bits, and PIR capacity is then defined as the maximum achievable rate.  
Under these assumptions, the first achievable PIR rate was found by Shah et al. in \cite{shah2014one} equal to $1-\frac{1}{N}$ for $N$ replicated non-colluding databases and any number of messages, which is a close approximation to the PIR capacity, characterized later in \cite{SunAndJaffar1}, for large number of databases. 
%It also shows that the privacy comes for free for infinite  number of messages at the databases.
In a recent interesting work by Sun and Jafar \cite{SunAndJaffar1}, the exact capacity of PIR (or the inverse of download cost) for any $(N,K)$ was characterized  as $(1 + \frac{1}{N}+\frac{1}{N^2} + \cdots+ \frac{1}{N^{K-1}})^{-1}$. 
%This expression converges to the previous result in \cite{} when $K\rightarrow \infty$, and has the same convergence as $N\rightarrow \infty$.

Since the appearance of \cite{SunAndJaffar1}, significant recent progress has been made on a variety of variations of the basic PIR problem.  The case of $T$-colluding PIR (or TPIR in short) was investigated by Sun and Jafar in \cite{SunAndJaffar2}, where any $T$ databases out of $N$ are able to collude by sharing their received queries. 
Robust PIR, in which any subset of $N$ databases out of $M$ fail to respond was also investigated in \cite{SunAndJaffar2}.
The exact capacity of robust TPIR for any $(T,M,N,K)$ and $M\geq N\geq T$ was characterized  as $(1 + \frac{T}{N}+\frac{T^2}{N^2} + \cdots+ \frac{T^{K-1}}{N^{K-1}})^{-1}$.
  The problem of PIR with databases storing coded messages, using $(N,M)$ MDS codes, was considered by Tajeddine and El-Rouayheb in \cite{Salim} and the capacity was subsequently characterized by Banawan and Ulukus in \cite{BanawanAndUlukus} to take the value $(1 + \frac{M}{N}+\frac{M^2}{N^2} + \cdots+ \frac{M^{K-1}}{N^{K-1}})^{-1}$.
This setting was further investigated for the scenario where any $T$ out of $N$ databases can collude, also referred to as MDS-TPIR \cite{FreijGnilkeHollantiKarpuk, SunAndJafar3} although its capacity remains open for general set of parameters.

As opposed to the original PIR setting where the user privacy is the only concern, the problem of symmetric  PIR (SPIR) was first studied by Sun and Jafar in \cite{SunAndJafar4}. In this setting, privacy is enforced in both directions, i.e.,  user must be able to retrieve the message of interest  privately, while at the same time the databases must avoid any information leakage to the user about the remaining messages. The capacity of SPIR was characterized as $1-\frac{1}{N}$ regardless the number of messages, $K$.
Later, the exact capacity for SPIR problem and $(M,N)$ MDS-coded   messages, or MDS-SPIR, was characterized by Wang and Skoglund in \cite{WangAndSkoglund} as $1-\frac{M}{N}$.
The capacity of cache aided PIR was recently characterized in \cite{TandonCachePIR}, where the user has a local cache of limited storage $0 \leq S\leq K$ and contents known to the databases. 
It was shown that memory sharing between full storage and no-cache-aided PIR schemes is information-theoretically optimal, i.e., the download cost is given by $(1-\frac{S}{K})(1 + \frac{1}{N}+\frac{1}{N^2} + \cdots+ \frac{1}{N^{K-1}})^{-1}$.
Later in \cite{Ulukus-NewCache, Kadhe-NewCache}, the capacity for cache aided PIR was studied when the side information at the user is unknown to the databases for both coded and uncoded storage scenarios. The capacity of PIR with private side information, or PIR-PSI, was characterized in \cite{Jafar-NewCache} to take the value $(1 + \frac{1}{N}+\frac{1}{N^2} + \cdots +\frac{1}{N^{K-M-1}})^{-1}$, where $M$ is the number of messages known as side information to the user.

The case of multi-message PIR, or MPIR, was investigated in \cite{BanawanAndUlukus3, ZhangAndGennian}, in which the user wants to privately retrieve $P\geq 1$ out of $K$ messages in one communication round. The capacity of multi-round PIR, where the queries in each round are allowed to depend on the answers received in previous rounds, was considered by Sun and Jafar in \cite{sun2018multiround}. Although no advantage in terms of capacity of having multi-rounds as opposed to the single round case considered in \cite{SunAndJaffar1}, it was shown that the multi-round queries can help in reducing the storage overhead at the databases.
 In a recent work by Banawan and Ulukus \cite{BanawanAndUlukus2}, the capacity of PIR with Byzantine databases (or BPIR) was characterized, i.e., a scenario in which any subset of databases are adversarial and they can respond with incorrect answers. Recently, Tajeddine et al. studied in \cite{tajeddine2018robust} the PIR problem in the presence of colluding, non-responsive and Byzantine databases.
The PIR through wiretap channel, or PIR-WTC, was considered by Banawan and Ulukus in \cite{banawan2018private}, where the user wants to retrieve a single message in the presence of an external eavesdropper that is adversely tries to decode the contents being sent. 
Adding the constraint of  asymmetric traffic for the databases was recently considered by Banawan and Ulukus in \cite{banawan2018asymmetry}. The results show strict loss in PIR capacity due to the asymmetric traffic constraints compared with the symmetric case in \cite{SunAndJaffar1}.

Majority of above works, however, assume the presence of replicated databases, each storing all the $K$ messages. Indeed, exceptions to this statement include the works on the case when database store MDS coded data, where the databases must satisfy the $M$-out-of-$N$ recoverability constraint. 
Furthermore, \cite{sun2018multiround} also investigated the problem of limited storage PIR for the special case of $K=2$ messages and $N=2$ databases. The authors presented interesting lower and upper bounds on the capacity for this special case, and show the optimality of the proposed scheme for the case of linear schemes.
More recently in \cite{tian2018shannon}, the authors proposed a non-linear scheme for the canonical case $K=2$ and $N=2$, showing that the proposed  non-linear scheme uses less storage than the optimal linear scheme when the retrieval rate is kept optimal.

\textbf{Summary of Contributions--} In this work, we consider the problem of PIR from \textit{uncoded storage constrained} databases. Each database has a storage capacity of $\mu KL$ bits, where $K$ is the number of messages, $L$ is the size of each message in bits, and $\mu \in [1/N, 1]$ is the normalized storage.  
On one extreme, $\mu = 1/N$ is the minimum storage at databases so that the user can retrieve any required message. If the user is interested in retrieving a message, he has to download all the $K$ messages from databases to achieve privacy. On the other hand, $\mu = 1$ is the replicated databases case settled in \cite{SunAndJaffar1}, where the download cost is minimal. Thus, we aim to characterize this trade-off for any value of $\mu$ between these two extremes.
In this work, we present an achievable scheme for the storage constrained PIR problem which works for all $(N, K)$. The achieved PIR download cost is given by the lower convex hull of the $N$ storage-cost pairs $\left(\frac{t}{N},1 + \frac{t}{N}+\frac{t^2}{N^2} + \cdots+ \frac{t^{K-1}}{N^{K-1}})^{-1}\right)$, where $t\in [1:N]$. The achievablility proof, which was presented in parts in \cite{ICC2017}, works as follows: For the discrete storage values $\mu =\frac{t}{N}$ where $\mu \in[1:N]$, the storage design at the databases is inspired by storage design schemes in the caching literature \cite{Maddah}, where the users prefetch popular content into memories to reduce peak traffic rates when downloading from a single server.
As opposed to caching schemes, the storage placement for storage constrained PIR scheme occurs at the databases which should span  the whole set of files.
Our storage design allows dividing the PIR scheme into blocks where only a subset of databases of size $t$ is involved. 
The storage points in between the discrete storage values can be achieved by a memory sharing argument, which is given by the lower convex hull of the achieved rate-storage pairs.

 As a first step in understanding the fundamental limits, we proved in \cite{ISIT2018} the optimality of our storage constrained scheme for the special case of $K=3$ messages, $N=3$ databases, and any storage value at the databases, under \textit{uncoded and symmetric assumptions} for the storage placement.
Our second main contribution of this paper is to show that the proposed scheme is information-theoretically optimal for any $(N,K,\mu)$, under \textit{uncoded storage placement} assumption. 
The key technical challenge in proving the lower bounds is dealing with all possible components of storage at the databases limited by the storage and the message size constraints, which significantly go beyond the techniques introduced in \cite{SunAndJaffar1}. 
To this end, we tailor the mutual information of the key components used in \cite{SunAndJaffar1} for the full storage setting to the case of limited storage. The main differences, however, is that we retain the terms of information theoretic capacity which was discarded in \cite{SunAndJaffar1}.
We factorize these terms to arrive to the first universal lower bound on the download cost, which is valid for both \textit{coded and uncoded placement} strategies.
Next, we specialize the obtained lower bound to uncoded placement strategies inspired by the methodology recently proposed in \cite{wan2016optimality} for uncoded caching systems, and later applied in our previous work on coded data shuffling \cite{attia2018near,Attia-ISIT2018-2}, and uncoded caching systems with secure delivery \cite{bahrami2017towards}. Applying these ideas help in obtaining a linear program (LP) subject to the storage and message size constraints, which can be solved for different regimes of storage to provide a set of lower bounds on the download cost, and show that these bounds exactly match the storage-constrained PIR scheme.

\subsection{Notation}
The notation ${\left[n_1:n_2\right]}$ for $n_1<n_2$, and $n_1,n_2\in \mathbb{N}$ represents the set of all integers between $n_1$, and $n_2$, i.e., ${\{n_1,n_1+1,\ldots, n_2\}}$. 
The combination coefficient ${{{n}\choose {k}}=\frac{n!}{(n-k)! k!}}$ equals zero for ${k>n}$, or $k<0$.
Elements of ordered sets are placed between round brackets $()$, while for regular sets we use curly brackets $\{\}$. 
We use  bold letters to represent ordered sets, and calligraphy letters for regular sets.
In order to describe subsets of ordered sets, we use the subscript to give the indexes of the elements being chosen from the set, e.g., for the ordered set $\bpi=(\pi_1\,\ldots,\pi_n)$, $\bpi_{[1:4]} = (\pi_1,\pi_2,\pi_3,\pi_4)$.
We denote Random Variables (RVs) by capital letters, and ordered sets of RVs by capital bold letters.
The set in the subscript of a set of ordered RVs  is used for short notation of a subset of the set of RVs, e.g.,  for an ordered set of RVs $\Z =(Z_1,\ldots,Z_n)$, we use $\Z_{\mathcal{W}}$ to denote the all the random variables $Z_i$ where ${i\in\mathcal{W}}$.
We use $W_k$ to denote a message of index $k$, and $w_k^j$ to represent the bit number $j$ of the message $W_k$.

\begin{figure}[t]
  \begin{center}
  \includegraphics[width=0.7\columnwidth]{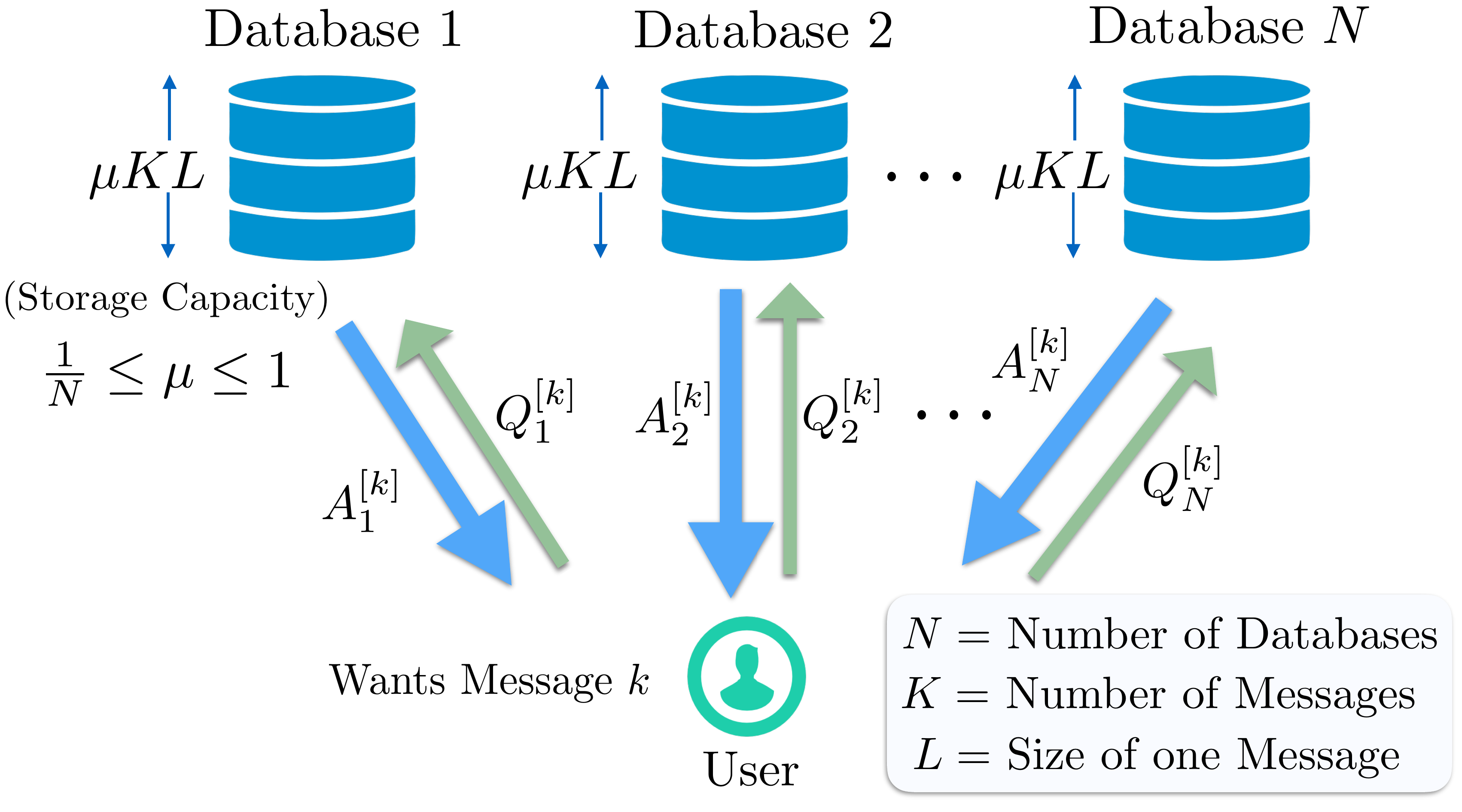}
\caption{Storage Constrained Private Information Retrieval. \label{Figmodel}}
\vspace{-10pt}
  \end{center}
\end{figure}

\section{Storage Constrained PIR: Problem Statement}
\label{sec:problem}

We consider the PIR problem with $N$ non-colluding databases, labeled as $\text{DB}_1,\text{DB}_2,\ldots,\text{DB}_N$,  and $K$ independent messages, labeled as $W_1,W_2,\ldots,W_K$, where each message is of size $L$ bits, i.e., 
\begin{equation}
H(W_1)=H(W_2)=\cdots=H(W_K)=L.
\end{equation}
We assume that each database has a storage capacity of $\mu KL$ bits. If we denote $Z_1, Z_2, \ldots, Z_N$ as the contents stored across the databases, where $Z_n$ is the storage content of DB$_{n}$, then we have the following \textit{storage constraint} for each database:
\begin{equation}
\label{eq:storage-const0}
H(Z_1)=H(Z_2)=\cdots=H(Z_N)=\mu K L.
\end{equation}
%We allow the user to design what contents can be stored at each database subject to the storage constraint. 
We assume that the storage strategy employed by the user is completely public, i.e., each database knows which contents are stored at all the other databases. The normalized storage parameter $\mu$ can take values in the range $1/N \leq \mu \leq 1$. The case when $\mu=1$ is the setting of replicated databases, with each database storing all the $K$ messages. The lower bound $\mu \geq 1/N$ is in fact a necessary condition for reliable decoding.

The storage-constrained PIR model is shown in Figure~\ref{Figmodel}. To request a message, a user privately selects a number $k$ between $1$ and $K$ corresponding to the desired message $W_{k}$. Then the user generates $N$ queries $Q_1^{[k]}, Q_2^{[k]},\ldots, Q_N^{[k]}$, where $Q_n^{[k]}$ is sent to the $n^{{th}}$ database (DB$_n$), and the queries are independent of the messages, i.e.,
\begin{align} 
 I(W_1,\ldots,W_K;Q_1^{[k]},\ldots,Q_N^{[k]})=0, \quad \forall k\in[1:K].\label{cons3}
\end{align}
Upon receiving the query $Q_n^{[k]}$, DB$_n$ returns an answer $A_n^{[k]}$ to the user, which  is a function of the corresponding query and the data stored in the DB$_n$, i.e., 
\begin{align}
H(A_n^{[k]}|Q_n^{[k]},Z_n)=0,\quad \forall k\in[1:K],\;\forall n\in[1:N].\label{cons4}
\end{align}

From all of the answers from databases, the user must be able to correctly decode the desired message $W_k$ with a small probability of error, i.e., the following \textit{correctness constraint} must be satisfied
\begin{equation}
\label{eq:decoding-const}
 H(W_k|A_1^{[k]},\ldots,A_N^{[k]},Q_1^{[k]},\ldots,Q_N^{[k]})=o(L), \quad \forall k\in[1:K], 
\end{equation}
where $o(L)$ represents a function of $L$ such that $o(L)/L$ approaches $0$ as $L$ approaches infinity. 
In order to prevent the database DB$_n$ from learning the identity of requested message, privacy must be guaranteed through the following statistical equivalent constraint for all $k_1 \neq k_2 \in [1:K]$:
\begin{align}
\label{eq:privacy-const0}
&(Q_n^{[k_1]},A_n^{[k_1]},W_1,\ldots,W_K, Z_1, \ldots, Z_N)\ \sim \ (Q_n^{[k_2]},A_n^{[k_2]},W_1,\ldots,W_K, Z_1, \ldots, Z_N).
\end{align}
In other words, the index of the desired message $k$ must be hidden from the query and answer of DB$_n$ as well as all the messages and the storage content of other databases, i.e.,
\begin{align}
\label{eq:privacy-const}
&I(k;Q_n^{[k]},A_n^{[k]},W_1,\ldots,W_K, Z_1, \ldots, Z_N) =0.
\end{align}

For a storage parameter $\mu$, we say that a pair $(D,L)$ is achievable if there exists a PIR scheme with storage, querying, and decoding functions, which satisfy the storage, correctness and privacy constraints in \eqref{eq:storage-const0}, \eqref{eq:decoding-const} and \eqref{eq:privacy-const0}, respectively. The performance of a PIR scheme is characterized by the number of bits of desired information per downloaded bit. In particular, if $D$ is the total number of downloaded bits, and $L$ is the size of the desired message, then the normalized downloaded cost is $D/L$. In other words, the PIR rate is $L/D$. 
The goal is to characterize the optimal normalized download cost as a function of the database storage parameter $\mu$:
\begin{align}
\label{optimal-download-cost}
D^{*}(\mu)= \text{min }\{D/L: (D,L) \text{ is achievable}\}.
\end{align}
The storage-constrained capacity of PIR is the inverse of the normalized download cost,
\begin{align}
C^{*}(\mu)= \text{max }\{L/D: (D,L) \text{ is achievable}\}.
\end{align}

We next present a claim which shows that the optimal download cost $D^{*}(\mu)$ (or the inverse of capacity $1/C^{*}(\mu)$) is a convex function of the normalized storage $\mu$. 

\begin{claim}\label{claim1}
The optimal download cost $D^{*}(\mu)$ is a convex function of $\mu$. In other words, for any $(\mu_1, \mu_2)$, and $\alpha \in [0,1]$, the following inequality is true:
\begin{align}
D^{*}(\alpha \mu_1 + (1-\alpha)\mu_2) \leq \alpha D^{*}(\mu_1) + (1-\alpha)D^{*}(\mu_2).
\end{align}
\end{claim}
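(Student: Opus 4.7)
The plan is to prove convexity by \emph{memory sharing} (the standard operational/resource-splitting argument used for such capacity functions). Fix $\mu_1,\mu_2\in[1/N,1]$ and $\alpha\in[0,1]$, and set $\bar\mu = \alpha\mu_1+(1-\alpha)\mu_2$. Pick sequences of achievable pairs $(D^{(1)},L^{(1)})$ and $(D^{(2)},L^{(2)})$ whose normalized download costs approach $D^*(\mu_1)$ and $D^*(\mu_2)$ respectively; by scaling block lengths (both quantities in the tradeoff are invariant under replication of the scheme) we can line up the two message sizes so that $L^{(1)}=\alpha L$ and $L^{(2)}=(1-\alpha)L$ for arbitrarily large $L$. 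Rationality of $\alpha$ can be assumed first and then extended by continuity/limit arguments as $L\to\infty$, since the $o(L)$ decoding error tolerance in \eqref{eq:decoding-const} absorbs the rounding slack.

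The combined scheme partitions each message as $W_k=(W_k^{(1)},W_k^{(2)})$ with $|W_k^{(1)}|=\alpha L$ and $|W_k^{(2)}|=(1-\alpha)L$. The databases allocate their storage as $Z_n=(Z_n^{(1)},Z_n^{(2)})$, where $Z_n^{(1)}$ is produced by Scheme~1 acting on $\{W_k^{(1)}\}$ and $Z_n^{(2)}$ by Scheme~2 acting on $\{W_k^{(2)}\}$. The per-database storage budget becomes
\begin{equation}
H(Z_n)\le \mu_1 K (\alpha L) + \mu_2 K ((1-\alpha)L) = \bar\mu K L,
\end{equation}
so the storage constraint \eqref{eq:storage-const0} for $\bar\mu$ is met. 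To retrieve $W_k$, the user independently generates the two sub-scheme queries $Q_n^{[k](1)}$ and $Q_n^{[k](2)}$ and sends the concatenation to DB$_n$; DB$_n$ replies with the concatenation of the corresponding answers, which is a valid function of $(Q_n^{[k]},Z_n)$ as required by \eqref{cons4}. The user decodes $W_k^{(1)}$ and $W_k^{(2)}$ separately with vanishing error, giving \eqref{eq:decoding-const}.

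For privacy, observe that because the two sub-schemes use independently generated queries and operate on disjoint portions of the messages and storage, the joint tuple
\begin{equation}
\bigl(Q_n^{[k](1)},A_n^{[k](1)},Q_n^{[k](2)},A_n^{[k](2)},W_1,\ldots,W_K,Z_1,\ldots,Z_N\bigr)
\end{equation}
factorizes, conditioned on $(W_{1:K},Z_{1:N})$, into the product of the two sub-scheme marginals. Each marginal is statistically invariant under changes of $k$ by \eqref{eq:privacy-const0} applied to the respective sub-scheme, so the joint is invariant as well, establishing \eqref{eq:privacy-const0} for the combined scheme. Finally, the total download cost is $D^{(1)}+D^{(2)}$, whose per-bit normalization tends to $\alpha D^*(\mu_1)+(1-\alpha)D^*(\mu_2)$; since this is achievable at storage $\bar\mu$, we conclude $D^*(\bar\mu)\le\alpha D^*(\mu_1)+(1-\alpha)D^*(\mu_2)$.

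The only delicate step is the joint privacy verification: one must be careful that sending both queries in parallel to the same database does not create correlations that leak $k$. This is handled cleanly by generating the two query ensembles with independent randomness and noting that independence is preserved after conditioning on the messages and storage, which themselves are independent of $k$. The remaining issues (block-length divisibility, irrational $\alpha$) are standard asymptotic housekeeping and do not affect the bound.
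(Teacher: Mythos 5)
Your proposal is correct and follows essentially the same route as the paper: a memory-sharing argument that splits each message into an $\alpha L$ and a $(1-\alpha)L$ part, splits each database's storage accordingly, runs the two optimal schemes in parallel, and concludes $D^{*}(\bar\mu)\leq \alpha D^{*}(\mu_1)+(1-\alpha)D^{*}(\mu_2)$. Your additional care about the joint privacy of the concatenated queries and the block-length/irrational-$\alpha$ housekeeping is fine detail the paper leaves implicit, but it does not change the argument.
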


\begin{proof}
Claim~\ref{claim1} follows from a simple memory sharing argument. Consider any two storage parameters $\mu_1$, and $\mu_2$, with optimal download costs $D^{*}(\mu_1)$, and $D^{*}(\mu_2)$, respectively, then for any storage parameter $\bar{\mu} = \alpha \mu_1 + (1-\alpha)\mu_2$, $\alpha\in[0,1]$, there exists a PIR scheme which achieves a download cost of $\bar{D}(\bar{\mu})=\alpha D^{*}(\mu_1) + (1-\alpha)D^{*}(\mu_2)$.
This is done as follows: first, we divide each message $W_k$ into two partitions $W_k= \left\{W_k^{(1)}, W_k^{(2)}\right\}$, where $W_{k}^{(1)}$ and $W_{k}^{(2)}$ are of size $\alpha L$ and $(1-\alpha)L$, respectively. 
Likewise, the storage of each database $Z_n$ is divided into two partitions $Z_n= \left\{Z_n^{(1)}, Z_n^{(2)}\right\}$, where $Z_n^{(1)}$ and $Z_n^{(2)}$ are of size $\alpha\mu_1KL $ and $(1-\alpha)\mu_2KL $, respectively. Now, for messages partitions denoted by $W_k^{(1)}$ for $k\in[1:K]$ and databases partitions denoted by $Z_n^{(1)}$ for $n\in[1:N]$, we can apply the PIR scheme which achieves a download cost of $\alpha D^{*}(\mu_1)$, while for messages partitions denoted by $W_k^{(2)}$ for $k\in[1:K]$ and databases partitions denoted by $Z_n^{(2)}$ for $n\in[1:N]$, we can achieve a download cost of $(1-\alpha) D^{*}(\mu_2)$, which gives a total download cost of $\bar{D}(\bar{\mu})=\alpha D^{*}(\mu_1) + (1-\alpha)D^{*}(\mu_2)$.
Since $D^{*}(\alpha \mu_1 + (1-\alpha)\mu_2)$ by definition is optimal download cost for the storage parameter $\bar{\mu}$, it cannot be larger than the download cost of the memory sharing scheme, i.e., $D^{*}(\alpha \mu_1 + (1-\alpha)\mu_2)\leq \bar{D}(\bar{\mu})=\alpha D^{*}(\mu_1) + (1-\alpha)D^{*}(\mu_2)$, which completes the proof of Claim~\ref{claim1}. 
\end{proof}

\subsection{Storage Constrained PIR: Uncoded Storage Assumption}
\label{re:uncoded-assump}

%For the scope of this paper, we focus on uncoded storage, where each database can store uncoded bits of each message subject to the storage constraint.

%\begin{remark}[\textbf{Uncoded Storage Assumption}]
Now, we specialize the above system model for the storage constrained PIR using uncoded storage assumption, where the databases only store uncoded functions of the $K$ messages subject to the storage constraint. We consider a generic uncoded placement strategy such that if we consider a message $W_k$, we denote $W_{k,\mathcal{S}}$ as the set of bits of $W_k$ that are fully stored at the databases in the set $\mathcal{S}\subseteq [1:N]$, where  $\vert \mathcal{S}\vert \geq 1$, and are not stored at any of the other databases in the set $[1:K]\setminus \mathcal{S}$. 
As a result, we can write the content of DB$_n$, $Z_n$ as
\begin{align}
\label{eq:storage-content2}
Z_n= \underset{k\in [1:K]}{\cup} \underset{\substack{\mathcal{S}\subseteq [1:N]\\ n\in \mathcal{S}}}{\cup} W_{k,\mathcal{S}}.
\end{align}
% \end{remark}
Furthermore, the message $W_k$ consists of  $2^{N}-1$ partitions, $W_{k,\mathcal{S}}$, for $\mathcal{S}\in \mathcal{P}([1:N])$, where $\mathcal{P}([1:N])$ is the power set of all possible subsets of the set $[1:N]$ not including the empty set.  Therefore, the message $W_k$ can also be equivalently expressed  as
\begin{align}
\label{eq:def_W_k}
W_k =\underset{\substack{\mathcal{S}\subseteq [1:N] \\  \vert \mathcal{S}\vert \geq 1}}{\cup} W_{k,\mathcal{S}}.
\end{align}
Now, let us consider $W_{k,\mathcal{S}}$ as a random variable with entropy $H(W_{k,\mathcal{S}})= \vert W_{k,\mathcal{S}}\vert L$, and size $\vert W_{k,\mathcal{S}} \vert$  normalized by the message size $L$. Therefore, the following two constraints are obtained:

\noindent$\bullet$\hspace{5pt}\textbf{Message size constraint:} The first constraint is related to the total size of all the messages, $W_k$ and $k\in[1:K]$, given by $KL$ bits,
\begin{align}
1 &= \frac{1}{KL} H(\W_{[1:K]}) = \frac{1}{KL} H(W_1,W_2,\cdots,W_K)\overset{(a)}{=}\frac{1}{KL} \sum_{k=1}^K H(W_k)\overset{(b)}{=} \frac{1}{KL}\sum_{k=1}^K \sum_{\substack{\mathcal{S}\subseteq [1:N]\\ \vert \mathcal{S}\vert \geq 1}} H(W_{k,\mathcal{S}})\nonumber\\
&=  \sum_{\ell=1}^N \frac{1}{K} \sum_{k=1}^K \sum_{\substack{\mathcal{S}\subseteq [1:N]\\ \vert \mathcal{S}\vert =\ell}} \vert W_{k,\mathcal{S}} \vert = \sum_{\ell=1}^N \binom{N}{\ell} \underbrace{\frac{1}{K\binom{N}{\ell}} \sum_{k=1}^K \sum_{\substack{\mathcal{S}\subseteq [1:N]\\ \vert \mathcal{S}\vert =\ell}} \vert W_{k,\mathcal{S}} \vert}_{\triangleq x_{\ell}} = \sum_{\ell =1}^N \binom{N}{\ell} x_{\ell},\label{eq:size-const}
\end{align}
where $(a)$ follows since the messages are independent, $(b)$ follows from \eqref{eq:def_W_k}, and $x_{\ell}\geq 0$ is defined as
\begin{align}
 x_{\ell}\: \overset{\Delta}{=}\:  \frac{1}{K\binom{N}{\ell}} \sum_{k=1}^K \sum_{\substack{\mathcal{S}\subseteq [1:N]\\ \vert \mathcal{S}\vert =\ell}} \vert W_{k,\mathcal{S}} \vert, \quad \ell\in[1:N].\label{eq:def-x_ell}
\end{align}

\noindent$\bullet$\hspace{5pt}\textbf{Storage constraint:} The second constraint is related to the total  storage of all the $N$ databases, which cannot exceed $\mu NKL$ bits for $\mu \in [\frac{1}{K},1]$,
\begin{align}
\mu N &\geq \frac{1}{KL} \sum_{n=1}^N H(Z_n) \overset{(a)}{=} \frac{1}{KL} \sum_{n=1}^N \sum_{k=1}^K \sum_{\substack{\mathcal{S}\subseteq [1:N]\\ n\in \mathcal{S}}} H(W_{k,\mathcal{S}}) 
\overset{(b)}{=} \frac{1}{K} \sum_{k=1}^K \sum_{\substack{\mathcal{S}\subseteq [1:N]\\ \vert \mathcal{S}\vert \geq 1}} \vert \mathcal{S}\vert \: \vert W_{k,\mathcal{S}}\vert
\nonumber \\
& = \sum_{\ell=1}^N  \frac{\ell}{K} \sum_{k=1}^K \sum_{\substack{\mathcal{S}\subseteq [1:N]\\ \vert  \mathcal{S}\vert =\ell}}  \vert W_{k,\mathcal{S}}\vert \overset{(c)}{=} \sum_{\ell=1}^K \ell\binom{N}{\ell} x_{\ell},\label{eq:storage-const}
\end{align}
where $(a)$ follows from \eqref{eq:storage-content2}, $(b)$ is true because when we sum up the contents of the storage at all the databases, the chunk $W_{k,\mathcal{S}}$ is counted $\vert \mathcal{S}\vert$ number of times, which is the number of databases storing this chunk, and $(c)$ follows from the definition of $x_{\ell}$ in \eqref{eq:def-x_ell}.
The message size and storage constraints defined in this sub-section will be used in the converse proofs for PIR from uncoded storage constrained databases.

\section{Main Result and Discussions}

Our first result is a general information theoretic lower bound on the download cost of the PIR problem from both coded or uncoded storage constrained databases.
\begin{theorem} \label{theorem1}
For the storage constrained PIR problem with $N$ databases, $K$ messages (of size $L$ bits each), and arbitrary storage $Z_1,Z_2,\ldots,Z_N$ at the $N$ databases, the optimal download cost is lower bounded as follows,
\begin{align}
\label{eq:theorem1}
D^*(\mu)&\geq  1 + \sum_{n_1=1}^{N} \frac{\lambda_{(N-n_1,1)} }{n_1}+ \sum_{n_1=1}^{N}\sum_{n_2=n_1}^{N} \frac{\lambda_{(N-n_1,2)}}{n_1n_2}  +\cdots+ \sum_{n_1=1}^{N}\ldots\hspace{-6pt}\sum_{n_{K-1}=n_{K-2}}^{N}\hspace{-4pt} \frac{\lambda_{(N-n_1,K-1)}}{n_1 \times\cdots\times n_{K-1}},
\end{align}
where $\lambda_{(n,k)}$ is defined as follows,
\begin{align}
\label{eq:def-L_nk}
\lambda_{(n,k)}\ \overset{\Delta}{=}\ \frac{1}{KL\binom{K-1}{k}\binom{N}{n}} \sum_{\mathcal{K}\subseteq[1:K] \atop \left|\mathcal{K}\right|=k} \sum_{\mathcal{N}\subseteq[1:N]\atop\left|\mathcal{N}\right|=n}\sum_{j\in [1:K]\backslash \mathcal{K}} H(W_j|\Z_{\mathcal{N}},\W_{\mathcal{K}}),
\end{align}
for $n\in[0:N]$ and $k\in[0:K]$.
\end{theorem}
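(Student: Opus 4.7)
My plan is to extend the Sun-Jafar inductive converse \cite{SunAndJaffar1} to the storage-constrained setting, retaining the conditional-entropy residuals that collapse to zero in the fully replicated case. Once symmetrized over subsets of databases and messages, these residuals become exactly the coefficients $\lambda_{(n,k)}$ defined in \eqref{eq:def-L_nk}, and unrolling the induction $K-1$ times then produces the nested sum on the right-hand side of \eqref{eq:theorem1}. The seed inequality comes from combining correctness \eqref{eq:decoding-const}, independence of queries from messages \eqref{cons3}, and $H(A_n^{[k]}\mid Q_n^{[k]},Z_n)=0$ from \eqref{cons4}. These give $L - o(L) \le I(W_1;A_1^{[1]},\ldots,A_N^{[1]}\mid Q_1^{[1]},\ldots,Q_N^{[1]})$, and then expanding the mutual information and using $D \ge H(A_1^{[1]},\ldots,A_N^{[1]})$ yields
\begin{align*}
D \;\ge\; L - o(L) \;+\; H\!\left(A_1^{[1]},\ldots,A_N^{[1]}\,\big|\,Q_1^{[1]},\ldots,Q_N^{[1]},W_1\right),
\end{align*}
which, after dividing by $L$ and letting $L \to \infty$, accounts for the leading ``$1$'' in \eqref{eq:theorem1}.

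The core of the argument is a one-step recursive lemma. For any $\mathcal{N}\subseteq[1:N]$, any $\mathcal{K}\subseteq[1:K]$ with $|\mathcal{K}|=k$, and any $j,j'\in[1:K]\setminus\mathcal{K}$, I plan to establish an inequality of the form
\begin{align*}
\sum_{n\in\mathcal{N}} H\!\left(A_n^{[j]}\,\big|\,Q_n^{[j]},\W_{\mathcal{K}},\Z_{[1:N]\setminus\mathcal{N}}\right) \;\ge\; H\!\left(W_{j'}\,\big|\,\Z_{[1:N]\setminus\mathcal{N}},\W_{\mathcal{K}}\right) \;+\; H\!\left(A_{\mathcal{N}}^{[j']}\,\big|\,Q_{\mathcal{N}}^{[j']},\W_{\mathcal{K}\cup\{j'\}},\Z_{[1:N]\setminus\mathcal{N}}\right) \;-\; o(L).
\end{align*}
The first right-hand term arises by invoking correctness for $W_{j'}$ after swapping the request index via the privacy equivalence \eqref{eq:privacy-const0}, then peeling off the storage contribution $\Z_{[1:N]\setminus\mathcal{N}}$ with the chain rule; the second term is the level-$(k{+}1)$ quantity that feeds the next iteration. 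Averaging this bound uniformly over all $\mathcal{N}$ of size $n_1$, all $\mathcal{K}$ of size $k$, and all $j'\notin\mathcal{K}$, together with the normalization $1/(KL\binom{K-1}{k}\binom{N}{n_1})$ from \eqref{eq:def-L_nk}, collapses the first right-hand term into $L\,\lambda_{(N-n_1,k)}$, while the division by $|\mathcal{N}|=n_1$ from averaging the left-hand side over subsets supplies the reciprocal factor $1/n_1$.

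Iterating this averaged recursion starting from the seed inequality, each of the $K-1$ successive rounds introduces a new summation variable $n_j \ge n_{j-1}$ with a factor $1/n_j$, while appending a new $\lambda_{(N-n_1,k)}$ term to the running sum. After $K-1$ rounds, the innermost conditional-answer entropy is $o(L)$ by correctness applied to the last remaining message and can be discarded, leaving exactly the nested sum claimed in \eqref{eq:theorem1}. The combinatorial bookkeeping of which $(n_j,k)$ subscripts appear inside $\lambda$ at each round follows from the fact that the storage conditioning $\Z_{[1:N]\setminus\mathcal{N}}$ is fixed in size at the first-level subset (via the symmetry of the privacy equivalence), which is why only $N - n_1$ shows up in the first index of $\lambda$ at every level.

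The main obstacle will be the recursive lemma itself, specifically ensuring that the storage residual is isolated with the exact combinatorial normalization matching \eqref{eq:def-L_nk}. In the replicated setup of \cite{SunAndJaffar1}, the analogous residual equals $L$ or $0$ depending on whether any conditioned-on database holds $W_{j'}$, so the induction proceeds with a clean constant input and these residuals can be discarded; here, every such conditional entropy $H(W_{j'}\mid \Z_{[1:N]\setminus\mathcal{N}},\W_{\mathcal{K}})$ depends nontrivially on the placement, and extracting it without any loose bound is essential — upper bounding by $L$ at any step would destroy the tightness later needed for the LP-based specialization to the uncoded case to match the achievable scheme. A careful averaging-over-permutations argument, exploiting the full symmetry of the privacy constraint \eqref{eq:privacy-const0} across both database and message indices, is what I expect to deliver the exact normalization \eqref{eq:def-L_nk}.
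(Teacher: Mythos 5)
Your high-level plan (retain the conditional-entropy residuals, symmetrize over database and message subsets, recurse $K-1$ times) is indeed the right strategy, and your seed inequality and the extraction of $H(W_{j'}\mid \Z_{[1:N]\setminus\mathcal{N}},\W_{\mathcal{K}})$ via the privacy swap \eqref{eq:privacy-const0} and decodability \eqref{eq:decoding-const} are sound. However, there is a genuine gap in the iteration step. Your one-step lemma takes as input a \emph{sum of per-database} answer entropies, each conditioned only on its own query, and outputs a \emph{joint} answer entropy $H(A_{\mathcal{N}}^{[j']}\mid Q_{\mathcal{N}}^{[j']},\W_{\mathcal{K}\cup\{j'\}},\Z_{[1:N]\setminus\mathcal{N}})$. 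You cannot feed this output back into the same lemma: the joint entropy is \emph{smaller} than the corresponding sum of per-database entropies, so lower-bounding the joint by the sum goes the wrong way. The only way to re-expand it is the chain rule over orderings of $\mathcal{N}$, and then each term is conditioned on the \emph{answers} of the previously visited databases; to return to the per-database form needed for the next privacy swap you must replace those conditioned queries/answers by the corresponding storage variables (this is exactly Lemma~\ref{lem1} of the paper, used inside Lemma~\ref{lem2} and again inside Lemma~\ref{lem3}). That replacement necessarily \emph{enlarges} the storage conditioning set at every level, and averaging over orderings is precisely what generates the new summation variable $n_{j+1}\ge n_j$ with the factor $1/n_{j+1}$.

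This also means your structural claim --- that ``the storage conditioning $\Z_{[1:N]\setminus\mathcal{N}}$ is fixed in size at the first-level subset, which is why only $N-n_1$ shows up in the first index of $\lambda$ at every level'' --- is incorrect as a mechanism, and it is inconsistent with your own expectation of new variables $n_j\ge n_{j-1}$. In the actual recursion (the paper's Lemma~\ref{lem3}, $T(n,k)\ge \frac{1}{N-n}[\sum_{n'=n}^{N-1}T(n',k+1)+\lambda_{(n,k)}L]+o(L)$), the $\lambda$ extracted at depth $k$ is indexed by the \emph{current} (deepest, hence largest) storage-set size; the appearance of $N-n_1$ in every term of \eqref{eq:theorem1} is purely an artifact of relabeling the nested summation indices at the end, not of a static storage set. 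If you literally kept the storage set fixed, you would obtain coefficients of the form $1/n_1^k$ rather than $\sum_{n_2\ge n_1}\cdots\sum_{n_k\ge n_{k-1}}1/(n_1\cdots n_k)$, which does not match \eqref{eq:theorem1}. Finally, note that even your first level needs this same machinery: the seed gives a joint entropy over all $N$ answers, and producing per-database terms conditioned on storage sets of varying sizes $N-n_1$ (i.e., generating the outer sum at all) already requires the permutation expansion plus the answer-to-storage replacement, which is the content of the paper's Lemma~\ref{lem2}. So the missing ingredient is not the combinatorial normalization you flagged, but the storage-replacement lemma applied inside a chain-rule/permutation average at every round.
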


\begin{remark}
The term $\lambda_{(n,k)}$ signifies the normalized average remaining entropy in a message after recognizing $k$ other messages and the storage from $n$ databases.
Henceforth, the result in Theorem~\ref{theorem1} gives a general information theoretic lower bound on the download cost of the PIR problem, which is valid  to cover the following scenarios: a) storage constrained databases where the content of each database can be designed for optimal PIR download cost; b) fixed storage content at the databases, where each database possesses different set of files; and c) both coded and uncoded storage content at the databases. 
\end{remark}

\noindent \textbf{\underline{Boundary Conditions on the function $\lambda_{(n,k)}$}:}
\begin{itemize}
\item We notice that when $n=N$ or $k=K$, then we get all the messages in the conditioning of the entropy terms of $\lambda_{(n,k)}$ in \eqref{eq:def-L_nk}, and therefore we get the following boundary conditions on $\lambda_{(n,k)}$:
\begin{align}
\label{eq:BC-L-nk}
\lambda_{(n=N,k)} = 0,\  \forall k\in[0:K], \qquad \lambda_{(n,k=K)} = 0,\ \forall n\in[0:N].
\end{align}

\item We further notice that for $n=0$ and all $k\in[0:K]$, then we only have messages in the conditioning of the entropy terms in \eqref{eq:def-L_nk} which are i.i.d., therefore, we get another set  of boundary conditions on $\lambda_{(n,k)}$:
\begin{align}
\label{eq:BC-L-nk2}
\lambda_{(n=0,k)} &= \frac{1}{KL\binom{K-1}{k}\binom{N}{n}} \sum_{\mathcal{K}\subseteq[1:K] \atop \left|\mathcal{K}\right|=k} \sum_{\mathcal{N}\subseteq[1:N]\atop\left|\mathcal{N}\right|=n}\sum_{j\in [1:K]\backslash \mathcal{K}} H(W_j)\nonumber\\
&=\frac{1}{KL\binom{K-1}{k}\binom{N}{n}} \sum_{j=1}^K \sum_{\mathcal{K}\subseteq[1:K]\backslash j \atop \left|\mathcal{K}\right|=k} \sum_{\mathcal{N}\subseteq[1:N]\atop\left|\mathcal{N}\right|=n} L \ = \ 1
,\quad  \forall k\in[0:K].
\end{align}

\item For the replicated databases case, considered in \cite{SunAndJaffar1} where every database stores all the files, then for the function $\lambda_{(n,k)}$ where $n\in[1:N]$, we retain all the messages in the conditioning of the entropy terms in \eqref{eq:def-L_nk}, which gives the following conditions over $\lambda_{(n,k)}$ only valid for the replicated databases case,
\begin{align}
\label{eq:BC-L-nk-replicated}
\lambda_{(n,k)} = 0,\  \forall n\in[1:N], \ \forall k\in[0:K].
\end{align}
\end{itemize}

\begin{remark}
We notice that for the replicated databases case considered in \cite{SunAndJaffar1}, by applying the boundary conditions in 
\eqref{eq:BC-L-nk2} and \eqref{eq:BC-L-nk-replicated} to the general  lower bound  in Theorem~\ref{theorem1}, we get the lower bound previously obtained in \cite{SunAndJaffar1} as follows,
\begin{align}
D^*(\mu = 1)&\geq  1 + \sum_{n_1=1}^{N} \frac{\lambda_{(N-n_1,1)} }{n_1}+ \sum_{n_1=1}^{N}\sum_{n_2=n_1}^{N} \frac{\lambda_{(N-n_1,2)}}{n_1n_2}  +\cdots+ \sum_{n_1=1}^{N}\ldots\hspace{-6pt}\sum_{n_{K-1}=n_{K-2}}^{N}\hspace{-4pt} \frac{\lambda_{(N-n_1,K-1)}}{n_1 \times\cdots\times n_{K-1}}\nonumber\\
&\overset{(a)}{=}1 + \sum_{n_1=N}^{N} \frac{\lambda_{(N-n_1,1)} }{n_1}+ \sum_{n_1=N}^{N}\sum_{n_2=n_1}^{N} \frac{\lambda_{(N-n_1,2)}}{n_1n_2}  +\cdots+ \sum_{n_1=N}^{N}\ldots\hspace{-6pt}\sum_{n_{K-1}=n_{K-2}}^{N}\hspace{-4pt} \frac{\lambda_{(N-n_1,K-1)}}{n_1 \times\cdots\times n_{K-1}}\nonumber\\
&=1 + \frac{\lambda_{(0,1)} }{N}+  \frac{\lambda_{(0,2)}}{N^2}  +\cdots+ \frac{\lambda_{(0,K-1)}}{N^{K-1}} \overset{(b)}{=}1+\frac{1}{N}+\frac{1}{N^2}+\cdots+\frac{1}{N^{K-1}},
\end{align}
where $(a)$ follows from the boundary conditions on $\lambda_{(n,k)}$ from replicated databases in \eqref{eq:BC-L-nk-replicated}, and $(b)$ follows from the boundary condition in \eqref{eq:BC-L-nk2}.
\end{remark}

The complete proof of Theorem \ref{theorem1} is presented in Section~\ref{sec:LB}. Here, we describe the intuition behind the proof through an example of $N=3$ databases and $K=3$ messages.  

\begin{example}
\label{ex:thm1}
\normalfont

We start by proving the following bound on $D$ which was first found in \cite[Lemma~1]{SunAndJaffar1} for $N=K=3$:
\begin{align}\label{upperbound}
D-L+o(L)\geq I(&\W_{[2:3]};\Q^{[1]}_{[1:3]},\A^{[1]}_{[1:3]}|W_1).
\end{align}
The above bound has an interesting interpretation that given message $W_1$ is requested, then the privacy penalty $D-L$ is bounded by the amount of information the queries and answers tell about the remaining messages $\W_{[2:3]}$ after successfully decoding message $W_1$.
To prove the above bound, we start by bounding the right hand side term as follows:
\begin{align} \label{upperbound-0}
 I(&\W_{[2:3]};\Q^{[1]}_{[1:3]},\A^{[1]}_{[1:3]}|W_1) \nonumber\\
 &=  I(\W_{[2:3]};\Q^{[1]}_{[1:3]},\A^{[1]}_{[1:3]},W_1)-  I(\W_{[2:3]};W_1)\nonumber\\
&\overset{(a)}{=} I(\W_{[2:3]};\Q^{[1]}_{[1:3]})+I(\W_{[2:3]};\A^{[1]}_{[1:3]}|\Q^{[1]}_{[1:3]})+I(\W_{[2:3]};W_1|\Q^{[1]}_{[1:3]},\A^{[1]}_{[1:3]})\nonumber\\
&\overset{(b)}{=} H(\A^{[1]}_{[1:3]}|\Q^{[1]}_{[1:3]})-H(\A^{[1]}_{[1:3]}|\Q^{[1]}_{[1:3]},\W_{[2:3]}) +I(\W_{[2:3]};W_1|\Q^{[1]}_{[1:3]},\A^{[1]}_{[1:3]})\nonumber\\
&\overset{}{\leq} D-H(\A^{[1]}_{[1:3]},W_1|\Q^{[1]}_{[1:3]},\W_{[2:3]})+H(W_1|\A^{[1]}_{[1:3]},\Q^{[1]}_{[1:3]},\W_{[2:3]})+I(\W_{[2:3]};W_1|\Q^{[1]}_{[1:3]},\A^{[1]}_{[1:3]})\nonumber\\
&\overset{}{=} D-H(W_1|\Q^{[1]}_{[1:3]},\W_{[2:3]})-H(\A^{[1]}_{[1:3]}|\Q^{[1]}_{[1:3]},\W_{[1:3]})+H(W_1|\Q^{[1]}_{[1:3]},\A^{[1]}_{[1:3]})\nonumber\\
&\overset{(c)}{=} D-L+o(L),
\end{align}
where $(a)$ follows from the chain rule of mutual information and from the assumption that messages are independent,
$(b)$ follows from \eqref{cons3} where queries are independent from messages, 
and $(c)$ follows from \eqref{cons4} where the answers $\A^{[1]}_{[1:3]}$ are functions of the queries $\Q^{[1]}_{[1:3]}$ and all messages and from the decodability constraint in \eqref{eq:decoding-const} where  message $W_1$ must be decoded from the queries $\Q^{[1]}_{[1:3]}$ and the answers $\A^{[1]}_{[1:3]}$.  Later, we prove a more general form of this bound in Lemma~\ref{lem2}. 
Using the chain rule for mutual information in all possible orders for a permutation ${\bsigma}:(1,2,3)\rightarrow (\sigma_1,\sigma_2,\sigma_3)$, we expand the RHS of the bound in \eqref{upperbound} as,
\begin{align}
I(\W_{[2:3]};\Q^{[1]}_{[1:3]},\A^{[1]}_{[1:3]}|W_1)&=I(\W_{[2:3]};Q^{[1]}_{\sigma_1},A^{[1]}_{\sigma_1}|W_1)  +I(\W_{[2:3]};Q^{[1]}_{\sigma_2},A^{[1]}_{\sigma_2}|W_1,Q^{[1]}_{\sigma_1},A^{[1]}_{\sigma_1})\nonumber\\ 
& \hspace{15pt}+I(\W_{[2:3]};Q^{[1]}_{\sigma_3},A^{[1]}_{\sigma_3}|W_1,Q^{[1]}_{\sigma_1},A^{[1]}_{\sigma_1},Q^{[1]}_{\sigma_2},A^{[1]}_{\sigma_2})\label{lemma1_2_1}\\
&\overset{(a)}{\geq} I(\W_{[2:3]};Q^{[1]}_{\sigma_1},A^{[1]}_{\sigma_1}|W_1)  +I(\W_{[2:3]};Q^{[1]}_{\sigma_2},A^{[1]}_{\sigma_2}|W_1,Z_{\sigma_1})\nonumber\\ 
& \hspace{15pt}+I(\W_{[2:3]};Q^{[1]}_{\sigma_3},A^{[1]}_{\sigma_3}|W_1,Z_{\sigma_1},Z_{\sigma_2}), \label{lemma1_2_2}
\end{align}
where $(a)$ follows by bounding the second and the third terms in \eqref{lemma1_2_1} separately as follows:
\begin{align}
I(\W_{[2:3]};Q^{[1]}_{\sigma_2},&A^{[1]}_{\sigma_2}|W_1,Q^{[1]}_{\sigma_1},A^{[1]}_{\sigma_1})\nonumber\\
&=H(Q^{[1]}_{\sigma_2},A^{[1]}_{\sigma_2}|W_1,Q^{[1]}_{\sigma_1},A^{[1]}_{\sigma_1}) -H(Q^{[1]}_{\sigma_2},A^{[1]}_{\sigma_2}|\W_{[1:3]}, Q^{[1]}_{\sigma_1},A^{[1]}_{\sigma_1})\nonumber\\
&\overset{(a)}{\geq} H(Q^{[1]}_{\sigma_2},A^{[1]}_{\sigma_2}|W_1,Z_{\sigma_1},Q^{[1]}_{\sigma_1},A^{[1]}_{\sigma_1}) -H(Q^{[1]}_{\sigma_2},A^{[1]}_{\sigma_2}|\W_{[1:3]},Z_{\sigma_1},Q^{[1]}_{\sigma_1},A^{[1]}_{\sigma_1})\nonumber\\
&\overset{(b)}{=} H(Q^{[1]}_{\sigma_2},A^{[1]}_{\sigma_2}|W_1,Z_{\sigma_1},Q^{[1]}_{\sigma_1}) -H(Q^{[1]}_{\sigma_2},A^{[1]}_{\sigma_2}|\W_{[1:3]},Z_{\sigma_1},Q^{[1]}_{\sigma_1})\nonumber\\ 
&=I(\W_{[2:3]};Q^{[1]}_{\sigma_2},A^{[1]}_{\sigma_2}|W_1,Z_{\sigma_1},Q^{[1]}_{\sigma_1})\nonumber\\
&\overset{(c)}{=}I(\W_{[2:3]};Q^{[1]}_{\sigma_1},Q^{[1]}_{\sigma_2},A^{[1]}_{\sigma_2}|W_1,Z_{\sigma_1})\nonumber\\
&\geq I(\W_{[2:3]};Q^{[1]}_{\sigma_2},A^{[1]}_{\sigma_2}|W_1,Z_{\sigma_1})\label{lemma1_3},
\end{align}
where $(a)$ follows from the fact that conditioning reduces entropy (which allows us to introduce $Z_{\sigma_1}$ in the first term), and the fact that $Z_{\sigma_1}$ is a function of $\W_{[1:3]}$ (hence it can be introduced in the second term); $(b)$ follows from the fact that  $A_{\sigma_1}^{[1]}$ is a function of $(Z_{\sigma_1}, Q_{\sigma_1}^{[1]})$; and step $(c)$ follows from (\ref{cons3}) where the queries are independent from the messages.

 In similar steps to \eqref{lemma1_3}, we bound the third term in \eqref{lemma1_2_1} as follows:
\begin{align}
I(\W_{[2:3]};Q^{[1]}_{\sigma_3},&A^{[1]}_{\sigma_3}|W_1,\Q^{[1]}_{\bsigma_{[1:2]}},\A^{[1]}_{\bsigma_{[1:2]}}) \nonumber \\
&=H(Q^{[1]}_{\sigma_3},A^{[1]}_{\sigma_3}|W_1,\Q^{[1]}_{\bsigma_{[1:2]}},\A^{[1]}_{\bsigma_{[1:2]}}) -H(Q^{[1]}_{\sigma_3},A^{[1]}_{\sigma_3}|\W_{[1:3]}, \Q^{[1]}_{\bsigma_{[1:2]}},\A^{[1]}_{\bsigma_{[1:2]}}) \nonumber \\
&{\geq} H(Q^{[1]}_{\sigma_3},A^{[1]}_{\sigma_3}|W_1,\Z_{\bsigma_{[1:2]}},\Q^{[1]}_{\bsigma_{[1:2]}},\A^{[1]}_{\bsigma_{[1:2]}}) -H(Q^{[1]}_{\sigma_3},A^{[1]}_{\sigma_3}|\W_{[1:3]},\Z_{\bsigma_{[1:2]}},\Q^{[1]}_{\bsigma_{[1:2]}},\A^{[1]}_{\bsigma_{[1:2]}})\nonumber \\
&{=} H(Q^{[1]}_{\sigma_3},A^{[1]}_{\sigma_3}|W_1,\Z_{\bsigma_{[1:2]}},\Q^{[1]}_{\bsigma_{[1:2]}}) -H(Q^{[1]}_{\sigma_3},A^{[1]}_{\sigma_3}|\W_{[1:3]},\Z_{\bsigma_{[1:2]}},\Q^{[1]}_{\bsigma_{[1:2]}})\nonumber \\
&=I(\W_{[2:3]};Q^{[1]}_{\sigma_3},A^{[1]}_{\sigma_3}|W_1,\Z_{\bsigma_{[1:2]}},\Q^{[1]}_{\bsigma_{[1:2]}})\nonumber\\
&{=}I(\W_{[2:3]};\Q^{[1]}_{\bsigma_{[1:3]}},\A^{[1]}_{\bsigma_3}|W_1,\Z_{\bsigma_{[1:2]}})\nonumber\\
&\geq I(\W_{[2:3]};Q^{[1]}_{\sigma_3},A^{[1]}_{\sigma_3}|W_1,\Z_{\bsigma_{[1:2]}})\label{lemma1_4},
\end{align}
which completes the proof of step $(a)$ in \eqref{lemma1_2_2}.
Later in Lemma~\ref{lem1}, we generally prove that a mutual information with queries and answers of some databases in the conditioning can be lower bounded by replacing the queries and the answers with the corresponding databases storage random variables.

Next, we sum \eqref{lemma1_2_2} over all possible permutations $\mathbf{\bsigma}\in [3!]$ to get the following bound:
\begin{align}
&I(\W_{[2:3]};\Q^{[1]}_{[1:3]},\A^{[1]}_{[1:3]}|W_1)\nonumber\\
&\geq  \frac{1}{3} \sum_{i=1}^3 I(\W_{[2:3]};Q^{[1]}_{i},A^{[1]}_{i}|W_1)  +\frac{1}{6}\sum_{i=1}^3 \sum_{j\in[1:3]\setminus i} I(\W_{[2:3]};Q^{[1]}_{j},A^{[1]}_{j}|W_1,Z_{i})\nonumber\\ 
& \hspace{15pt}+\frac{1}{3} \sum_{i=1}^3 I(\W_{[2:3]};Q^{[1]}_{i},A^{[1]}_{i}|W_1,\Z_{[1:3]\setminus i})\nonumber\\ 
&\overset{(a)}{=}  \frac{1}{3} \sum_{i=1}^3 I(\W_{[2:3]};Q^{[2]}_{i},A^{[2]}_{i}|W_1)  +\frac{1}{6}\sum_{i=1}^3 \sum_{j\in[1:3]\setminus i} I(\W_{[2:3]};Q^{[2]}_{j},A^{[2]}_{j}|W_1,Z_{i})\nonumber\\ 
& \hspace{15pt}+\frac{1}{3} \sum_{i=1}^3 I(\W_{[2:3]};Q^{[2]}_{i},A^{[2]}_{i}|W_1,\Z_{[1:3]\setminus i})\nonumber\\ 
&\geq  \frac{1}{3} \sum_{i=1}^3 I(\W_{[2:3]};A^{[2]}_{i}|W_1,Q^{[2]}_{i})  +\frac{1}{6}\sum_{i=1}^3 \sum_{j\in[1:3]\setminus i} I(\W_{[2:3]};A^{[2]}_{j}|W_1,Z_{i},Q^{[2]}_{j})\nonumber\\ 
& \hspace{15pt}+\frac{1}{3} \sum_{i=1}^3 I(\W_{[2:3]};A^{[2]}_{i}|W_1,\Z_{[1:3]\setminus i},Q^{[2]}_{i})\nonumber\\ 
&\overset{(b)}{=}   \frac{1}{3} \sum_{i=1}^3 H(A^{[2]}_{i}|W_1,Q^{[2]}_{i})  +\frac{1}{6}\sum_{i=1}^3 \sum_{j\in[1:3]\setminus i} H(A^{[2]}_{j}|W_1,Z_{i},Q^{[2]}_{j})+\frac{1}{3} \sum_{i=1}^3 H(A^{[2]}_{i}|W_1,\Z_{[1:3]\setminus i},Q^{[2]}_{i})\nonumber\\ 
&\overset{(c)}{\geq}  \frac{1}{3}  H(\A^{[2]}_{[1:3]}|W_1,\Q^{[2]}_{[1:3]})  +\frac{1}{6}\sum_{i=1}^3  H(\A^{[2]}_{[1:3]\setminus i}|W_1,Z_{i},\Q^{[2]}_{[1:3]})+\frac{1}{3} \sum_{i=1}^3 H(A^{[2]}_{i}|W_1,\Z_{[1:3]\setminus i},\Q^{[2]}_{[1:3]})\nonumber\\ 
&\overset{(d)}{=} \frac{1}{3}  I(\W_{[2:3]};\A^{[2]}_{[1:3]}|W_1,\Q^{[2]}_{[1:3]})  +\frac{1}{6}\sum_{i=1}^3  I(\W_{[2:3]};\A^{[2]}_{[1:3]\setminus i}|W_1,Z_{i},\Q^{[2]}_{[1:3]})\nonumber\\ 
& \hspace{15pt}+\frac{1}{3} \sum_{i=1}^3 I(\W_{[2:3]};A^{[2]}_{i}|W_1,\Z_{[1:3]\setminus i},\Q^{[2]}_{[1:3]})\nonumber\\
&\overset{(e)}{=} \frac{1}{3}  I(\W_{[2:3]};\Q^{[2]}_{[1:3]},\A^{[2]}_{[1:3]}|W_1)  +\frac{1}{6}\sum_{i=1}^3  I(\W_{[2:3]};\Q^{[2]}_{[1:3]},\A^{[2]}_{[1:3]\setminus i}|W_1,Z_{i})\nonumber\\ 
& \hspace{15pt}+\frac{1}{3} \sum_{i=1}^3 I(\W_{[2:3]};\Q^{[2]}_{[1:3]},A^{[2]}_{i}|W_1,\Z_{[1:3]\setminus i})\nonumber\\ 
&\overset{(f)}{=} \frac{1}{3}  I(\W_{[2:3]};W_2,\Q^{[2]}_{[1:3]},\A^{[2]}_{[1:3]}|W_1)  +\frac{1}{6}\sum_{i=1}^3  I(\W_{[2:3]};W_2,\Q^{[2]}_{[1:3]},\A^{[2]}_{[1:3]\setminus i}|W_1,Z_{i})\nonumber\\ 
& \hspace{15pt}+\frac{1}{3} \sum_{i=1}^3 I(\W_{[2:3]};W_2,\Q^{[2]}_{[1:3]},A^{[2]}_{i}|W_1,\Z_{[1:3]\setminus i})+o(L)\nonumber\\ 
&= \frac{1}{3}  I(\W_{[2:3]};W_2|W_1)  +\frac{1}{6}\sum_{i=1}^3  I(\W_{[2:3]};W_2|W_1,Z_{i})+\frac{1}{3} \sum_{i=1}^3 I(\W_{[2:3]};W_2|W_1,\Z_{[1:3]\setminus i})\nonumber\\ 
&\hspace{15pt}+ \frac{1}{3}  I(W_{3};\Q^{[2]}_{[1:3]},\A^{[2]}_{[1:3]}|\W_{[1:2]})  +\frac{1}{6}\sum_{i=1}^3  I(W_{3};\Q^{[2]}_{[1:3]},\A^{[2]}_{[1:3]\setminus i}|\W_{[1:2]},Z_{i})\nonumber\\ 
& \hspace{15pt}+\frac{1}{3} \sum_{i=1}^3 I(W_{3};\Q^{[2]}_{[1:3]},A^{[2]}_{i}|\W_{[1:2]},\Z_{[1:3]\setminus i})+o(L)
\nonumber\\ 
&\overset{(g)}{=} \frac{1}{3}  H(W_2|W_1) +\frac{1}{6}\sum_{i=1}^3  H(W_2|W_1,Z_{i})+\frac{1}{3} \sum_{i=1}^3 H(W_2|W_1,\Z_{[1:3]\setminus i})+ \frac{1}{3}  \underbrace{I(W_{3};\Q^{[2]}_{[1:3]},\A^{[2]}_{[1:3]}|\W_{[1:2]})}_{\triangleq\text{Term}_{1}}  \nonumber\\ 
&\hspace{15pt}+\frac{1}{6}\sum_{i=1}^3  \underbrace{I(W_{3};\Q^{[2]}_{[1:3]\setminus i},\A^{[2]}_{[1:3]\setminus i}|\W_{[1:2]},Z_{i})}_{\triangleq\text{Term}_{2}}+\frac{1}{3} \sum_{i=1}^3 \underbrace{I(W_{3};Q^{[2]}_{i},A^{[2]}_{i}|\W_{[1:2]},\Z_{[1:3]\setminus i})}_{\triangleq\text{Term}_{3}}+o(L),
 \label{lemma1_5}
\end{align}
where $(a)$ follows from the privacy constraint in \eqref{eq:privacy-const} where the individual queries and answers are invariant with respect to the requested message index; $(b)$ and $(d)$ follow since any answer $A^{[2]}_i$ is a function of the messages $\W_{[1:K]}$ and the query $Q^{[2]}_i$; $(c)$ follows from the union bound and since conditioning reduces entropy; $(e)$ and $(g)$ follow from the fact that queries are independent from the messages; and $(f)$ follows from the decoding constraint in \eqref{optimal-download-cost} where $W_{2}$ is decodable from $\Q_{[1:3]}^{[2]}$, $\A_{[1:3]\setminus \mathcal{N}}^{[2]}$ and $\Z_{\mathcal{N}}$ for any $\mathcal{N}\subseteq[1:3]$.
Next, we are going to lower bound the three terms, Term$_1$, Term$_2$ and Term$_3$, in \eqref{lemma1_5} separately. Bounding Term$_1$ follows similar steps to \eqref{lemma1_5} as,
\begin{align}
\text{Term}_{1} &= I(W_{3};\Q^{[2]}_{[1:3]},\A^{[2]}_{[1:3]}|\W_{[1:2]})\nonumber\\
&\overset{(a)}{\geq} \frac{1}{3}  I(W_{3};\Q^{[3]}_{[1:3]},\A^{[3]}_{[1:3]}|\W_{[1:2]})  +\frac{1}{6}\sum_{i=1}^3  I(W_{3};\Q^{[3]}_{[1:3]},\A^{[3]}_{[1:3]\setminus i}|\W_{[1:2]},Z_{i})\nonumber\\ 
& \hspace{15pt}+\frac{1}{3} \sum_{i=1}^3 I(W_{3};\Q^{[3]}_{[1:3]},A^{[3]}_{i}|\W_{[1:2]},\Z_{[1:3]\setminus i})\nonumber\\ 
&\overset{(b)}{\geq} \frac{1}{3}  I(W_{3};W_{3},\Q^{[3]}_{[1:3]},\A^{[3]}_{[1:3]}|\W_{[1:2]})  +\frac{1}{6}\sum_{i=1}^3  I(W_{3};W_{3},\Q^{[3]}_{[1:3]},\A^{[3]}_{[1:3]\setminus i}|\W_{[1:2]},Z_{i})\nonumber\\ 
& \hspace{15pt}+\frac{1}{3} \sum_{i=1}^3 I(W_{3};W_{3},\Q^{[3]}_{[1:3]},A^{[3]}_{i}|\W_{[1:2]},\Z_{[1:3]\setminus i})+o(L)\nonumber\\ 
&\overset{}{=} \frac{1}{3}  H(W_3|\W_{[1:2]})  +\frac{1}{6}\sum_{i=1}^3  H(W_3|\W_{[1:2]},Z_{i})+\frac{1}{3} \sum_{i=1}^3 H(W_3|\W_{[1:2]},\Z_{[1:3]\setminus i})+o(L),
 \label{lemma1_6}
\end{align}
where $(a)$ follows in very similar steps to $(a)\rightarrow (e)$ in \eqref{lemma1_5}; and $(b)$ follows from the decoding constraint in \eqref{eq:decoding-const} where $W_{3}$ is decodable from $\Q_{[1:3]}^{[3]}$, $\A_{[1:3]\setminus \mathcal{N}}^{[3]}$ and $\Z_{\mathcal{N}}$ for any $\mathcal{N}\subseteq[1:3]$. We bound Term$_2$ as follows:
\begin{align}
\text{Term}_{2} &=  I(W_{3};\Q^{[2]}_{[1:3]\setminus i},\A^{[2]}_{[1:3]\setminus i}|\W_{[1:2]},Z_{i})\nonumber\\
&\overset{(a)}{=}\frac{1}{2}\sum_{j\in[1:3]\setminus i} \left[I(W_{3};Q^{[2]}_{j},A^{[2]}_{j}|\W_{[1:2]},Z_{i})+ I(W_{3};Q^{[2]}_{j},A^{[2]}_{j}|\W_{[1:2]},Z_{i},\Q^{[2]}_{[1:3]\setminus\{i,j\}},\A^{[2]}_{[1:3]\setminus\{i,j\}})\right]\nonumber\\
&\overset{(b)}{\geq}\frac{1}{2}\sum_{j\in[1:3]\setminus i}\left[ I(W_{3};Q^{[3]}_{j},A^{[3]}_{j}|\W_{[1:2]},Z_{i})+I(W_{3};Q^{[3]}_{j},A^{[3]}_{j}|\W_{[1:2]},\Z_{[1:3]\setminus j})\right]\nonumber\\
&\geq\frac{1}{2}\sum_{j\in[1:3]\setminus i}\left[ I(W_{3};A^{[3]}_{j}|\W_{[1:2]},Z_{i},Q^{[3]}_{j})+I(W_{3};A^{[3]}_{j}|\W_{[1:2]},\Z_{[1:3]\setminus j},Q^{[3]}_{j})\right]\nonumber\\
&\geq\frac{1}{2}\sum_{j\in[1:3]\setminus i}\left[ H(A^{[3]}_{j}|\W_{[1:2]},Z_{i},Q^{[3]}_{j})+H(A^{[3]}_{j}|\W_{[1:2]},\Z_{[1:3]\setminus j},Q^{[3]}_{j})\right]\nonumber\\
&\geq\frac{1}{2} H(\A^{[3]}_{[1:3]\setminus i}|\W_{[1:2]},Z_{i},\Q^{[3]}_{[1:3]})+\frac{1}{2}\sum_{j\in[1:3]\setminus i} H(A^{[3]}_{j}|\W_{[1:2]},\Z_{[1:3]\setminus j},\Q^{[3]}_{[1:3]})\nonumber\\
&\overset{(c)}{=}\frac{1}{2} I(W_3;\A^{[3]}_{[1:3]\setminus i}|\W_{[1:2]},Z_{i},\Q^{[3]}_{[1:3]})+\frac{1}{2}\sum_{j\in[1:3]\setminus i} I(W_3;A^{[3]}_{j}|\W_{[1:2]},\Z_{[1:3]\setminus j},Q^{[3]}_{[1:3]})\nonumber\\
&\overset{(d)}{=}\frac{1}{2} I(W_3;W_3,\Q^{[3]}_{[1:3]},\A^{[3]}_{[1:3]\setminus i}|\W_{[1:2]},Z_{i})\hspace{-1pt}+\hspace{-1pt}\frac{1}{2}\hspace{-2pt}\sum_{j\in[1:3]\setminus i} \hspace{-8pt}I(W_3;W_3,\Q^{[3]}_{[1:3]},A^{[3]}_{j}|\W_{[1:2]},\Z_{[1:3]\setminus j})
\hspace{-1pt}+\hspace{-1pt}o(L)\nonumber\\
&=\frac{1}{2} H(W_3|\W_{[1:2]},Z_{i})+\frac{1}{2}\sum_{j\in[1:3]\setminus i} H(W_3|\W_{[1:2]},\Z_{[1:3]\setminus j})
+o(L),
 \label{lemma1_7}
\end{align}
where $(a)$ follows from the chain rule of mutual information; $(b)$ follows from steps similar to \eqref{lemma1_3} and \eqref{lemma1_4} which we will formally prove later in Lemma~\ref{lem1}; $(c)$ follows since any answer $A^{[3]}_i$ is a function of the messages $\W_{[1:K]}$ and the query $Q^{[3]}_i$; and $(d)$ follows from the decoding constraint in \eqref{eq:decoding-const}.
Then, we bound Term$_3$ as follows:
\begin{align}
\text{Term}_{3} &= I(W_{3};Q^{[2]}_{i},A^{[2]}_{i}|\W_{[1:2]},\Z_{[1:3]\setminus i})\nonumber\\
&\geq I(W_{3};A^{[2]}_{i}|\W_{[1:2]},\Z_{[1:3]\setminus i},Q^{[2]}_{i})\overset{(a)}{=} H(A^{[2]}_{i}|\W_{[1:2]},\Z_{[1:3]\setminus i},Q^{[2]}_{i})\nonumber\\
& \geq  H(A^{[2]}_{i}|\W_{[1:2]},\Z_{[1:3]\setminus i},\Q^{[2]}_{[1:3]}) \overset{(b)}{=} I(W_{3};A^{[2]}_{i}|\W_{[1:2]},\Z_{[1:3]\setminus i},\Q^{[2]}_{[1:3]})\nonumber\\
&\overset{(c)}{=} I(W_{3};W_3,\Q^{[2]}_{[1:3]},A^{[2]}_{i}|\W_{[1:2]},\Z_{[1:3]\setminus i})+o(L)\nonumber\\
&=  H(W_3|\W_{[1:2]},\Z_{[1:3]\setminus i})+o(L),
 \label{lemma1_8}
\end{align}
where $(a)$ and $(b)$ follows since any answer $A^{[2]}_i$ is a function of the messages $\W_{[1:K]}$ and the query $Q^{[2]}_i$; and $(c)$ follows from the decoding constraint in \eqref{eq:decoding-const}.
	
Using the lower bounds \eqref{lemma1_6}, \eqref{lemma1_7} and \eqref{lemma1_8} in  \eqref{lemma1_5} and hence use the resultant bound in \eqref{upperbound}, we arrive to the following bound:
\begin{align}
D\geq& L+\frac{1}{3}  H(W_2|W_1)+\frac{1}{9}  H(W_3|\W_{[1:2]})+\frac{1}{6}\sum_{i=1}^3  H(W_2|W_1,Z_{i})+\frac{1}{3} \sum_{i=1}^3 H(W_2|W_1,\Z_{[1:3]\setminus i})\nonumber\\
&+\frac{5}{36}\sum_{i=1}^3  H(W_3|\W_{[1:2]},Z_{i})+\frac{11}{18}\sum_{i=1}^3  H(W_3|\W_{[1:2]},\Z_{[1:3]\setminus i}) +o(L).
 \label{lemma1_9}
\end{align}

By repeating the bounding procedure in \eqref{upperbound} with any permutation of the messages indexes $\mathbf{\bpi}:(1,2,3)\rightarrow (\pi_1,\pi_2,\pi_3)$, we get a more general bound on $D$:
\begin{align}
D\geq& L+\frac{1}{3}  H(W_{\pi_2}|W_{\pi_1})+\frac{1}{9}  H(W_{\pi_3}|\W_{\bpi_{[1:2]}})+\frac{1}{6}\sum_{i=1}^3  H(W_{\pi_2}|W_{\pi_1},Z_{i})+\frac{1}{3} \sum_{i=1}^3 H(W_{\pi_2}|W_{\pi_1},\Z_{[1:3]\setminus i})\nonumber\\
&+\frac{5}{36}\sum_{i=1}^3  H(W_{\pi_3}|\W_{\bpi_{[1:2]}},Z_{i})+\frac{11}{18}\sum_{i=1}^3  H(W_{\pi_3}|\W_{\bpi_{[1:2]}},\Z_{[1:3]\setminus i}) +o(L).
 \label{lemma1_10}
\end{align}
Now, if we sum up (\ref{lemma1_10}) over all permutations $\mathbf{\bpi}\in[3!]$, we obtain,
	\begin{align}
D&\geq L+\frac{1}{6}\sum_{\mathcal{K}\subseteq[1:3] \atop \left|\mathcal{K}\right|=1}\sum_{k\in [1:3]\backslash \mathcal{K}}\left(\frac{1}{3}  H(W_{k}|\W_{\mathcal{K}})+\frac{1}{6}\sum_{i=1}^3  H(W_{k}|\W_{\mathcal{K}},Z_{i})+\frac{1}{3} \sum_{i=1}^3 H(W_{k}|\W_{\mathcal{K}},\Z_{[1:3]\setminus i})\right) \nonumber\\
&\hspace{5pt}+\hspace{-2pt}\frac{1}{6}\hspace{-3pt}\sum_{\mathcal{K}\subseteq[1:3] \atop \left|\mathcal{K}\right|=2}\sum_{k\in [1:3]\backslash \mathcal{K}}\hspace{-2pt}\left(\frac{2}{9}  H(W_{k}|\W_{\mathcal{K}})+\frac{5}{18}\sum_{i=1}^3  H(W_{k}|\W_{\mathcal{K}},Z_{i})\hspace{-3pt}+\hspace{-3pt}\frac{11}{9} \sum_{i=1}^3 H(W_{k}|\W_{\mathcal{K}},\Z_{[1:3]\setminus i})\right)
+o(L)\nonumber\\
&=L+ \frac{1}{3}\lambda_{(0,1)}L+ \frac{1}{2}\lambda_{(1,1)}L+\lambda_{(2,1)}L+  \frac{2}{18}\lambda_{(0,2)}L+  \frac{5}{12}\lambda_{(1,2)}L+  \frac{11}{6}\lambda_{(2,2)}L+o(L),
 \label{lemma1_coded}
\end{align}
where $\lambda_{(n,k)}$ is defined in \eqref{eq:def-L_nk}.

Since the bound in \eqref{lemma1_10} is valid for any achievable pair $(D,L)$, it is also a valid bound on the optimal normalized download cost, $D^*(\mu)$, as defined in \eqref{optimal-download-cost},  where $\mu \in[\frac{1}{3},1]$. Therefore, by taking the limit $L\rightarrow \infty$, we obtain the following bound on $D^*(\mu)$:
	\begin{align}
D^*(\mu)\geq\frac{D}{L}\geq 1+ \frac{1}{3}\lambda_{(0,1)}+ \frac{1}{2}\lambda_{(1,1)}+\lambda_{(2,1)}+  \frac{2}{18}\lambda_{(0,2)}+  \frac{5}{12}\lambda_{(1,2)}+  \frac{11}{6}\lambda_{(2,2)},
 \label{lemma1_coded2}
\end{align}	
which proves Theorem~\ref{theorem1} for $N=K=3$.

\end{example}

The following Theorem summarizes the second main result of this paper, which characterizes the information theoretically optimal download cost of the PIR problem from uncoded storage constrained databases as a function of the available storage.
\begin{theorem} \label{theorem2}
For the uncoded storage constrained PIR problem with $N$ databases, $K$ messages (of size $L$ bits each), and a per database storage constraint of $\mu KL$ bits, the information-theoretically optimal trade-off between storage and download cost is given by the lower convex hull of the following $(\mu, D^*(\mu))$ pairs, for $t=1, 2, \ldots, N$:
	\begin{align}
	\label{eq:theorem2}
	\left(\mu=\frac{t}{N},\ D^*(\mu)= \tilde{D}(t)\right),
	\end{align}
where $\tilde{D}(t)$ is defined as follows for $t\in[1:N]$:
\begin{align}
\label{eq:def-D_l}
\tilde{D}(t)\ \overset{\Delta}{=} \ \sum_{k=0}^{K-1}\ \frac{1}{t^k}.
\end{align}
\end{theorem}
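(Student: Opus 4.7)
The proof of Theorem~\ref{theorem2} has two halves: an achievability scheme attaining $\tilde{D}(t)$ at each discrete point $\mu=t/N$, together with memory sharing via Claim~\ref{claim1} for intermediate points, and a matching converse obtained by specializing Theorem~\ref{theorem1} under the uncoded storage assumption of Section~\ref{re:uncoded-assump} and reducing to a linear program.

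\textbf{Achievability at $\mu=t/N$.} The plan is to use a combinatorial subset-based placement. Split each message $W_k$ into $\binom{N}{t}$ equal chunks $W_{k,\mathcal{S}}$ of size $L/\binom{N}{t}$, indexed by the $t$-subsets $\mathcal{S}\subseteq[1:N]$, and store $W_{k,\mathcal{S}}$ at every database in $\mathcal{S}$. Each database stores $K\binom{N-1}{t-1}$ chunks totaling $\mu KL$ bits. The key observation is that for every $t$-subset $\mathcal{S}$, the $t$ databases in $\mathcal{S}$ hold identical replicas of the $K$ chunks $\{W_{k,\mathcal{S}}\}_{k=1}^{K}$, so the sub-problem restricted to $\mathcal{S}$ is a canonical $(t,K)$ replicated PIR instance. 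Applying the Sun--Jafar scheme \cite{SunAndJaffar1} in parallel to each of the $\binom{N}{t}$ sub-problems retrieves the desired chunks with per-subset normalized download $\tilde{D}(t)$; aggregating yields total normalized download $\tilde{D}(t)$, and the symmetry of the placement together with per-subset privacy ensures the global privacy constraint \eqref{eq:privacy-const0}. Intermediate $\mu\in(t/N,(t+1)/N)$ are handled by Claim~\ref{claim1}.

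\textbf{Converse: uncoded specialization of Theorem~\ref{theorem1}.} Under uncoded storage, message independence implies $H(W_j\mid\Z_{\mathcal{N}},\W_{\mathcal{K}}) = H(W_j\mid\Z_{\mathcal{N}})$, and since $\Z_{\mathcal{N}}$ reveals exactly the chunks $W_{j,\mathcal{S}}$ with $\mathcal{S}\cap\mathcal{N}\neq\emptyset$, one obtains
\begin{align*}
H(W_j\mid\Z_{\mathcal{N}},\W_{\mathcal{K}}) \;=\; L\sum_{\substack{\mathcal{S}\subseteq[1:N]\setminus\mathcal{N}\\ \vert\mathcal{S}\vert\geq 1}} \vert W_{j,\mathcal{S}}\vert.
\end{align*}
Averaging as in \eqref{eq:def-L_nk} and regrouping by $\ell=\vert\mathcal{S}\vert$ using $\vert\{\mathcal{N}:\vert\mathcal{N}\vert=n,\,\mathcal{N}\cap\mathcal{S}=\emptyset\}\vert=\binom{N-\ell}{n}$ yields
\begin{align*}
\lambda_{(n,k)} \;=\; \frac{1}{\binom{N}{n}}\sum_{\ell=1}^{N-n}\binom{N-\ell}{n}\binom{N}{\ell}\, x_\ell,
\end{align*}
independent of $k$, with $x_\ell$ as in \eqref{eq:def-x_ell}. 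Substituting into \eqref{eq:theorem1} turns Theorem~\ref{theorem1} into an explicit linear lower bound on $D^*(\mu)$ in the placement variables $(x_1,\ldots,x_N)$.

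\textbf{LP optimization and matching.} The converse then reduces to minimizing this linear functional over the feasible polytope defined by $x_\ell\geq 0$, the message-size equality \eqref{eq:size-const}, and the storage inequality \eqref{eq:storage-const} at parameter $\mu$. With only two non-trivial constraints, the LP optimum lies at a vertex with at most two active $x_\ell$'s, so the key step is to evaluate the objective at each pure vertex $x_t=1/\binom{N}{t}$ (for which $\mu=t/N$). Substituting the resulting $\lambda_n=\binom{N-t}{n}/\binom{N}{n}$ into \eqref{eq:theorem1}, the plan is to verify the combinatorial identity showing that the nested sum collapses to $\tilde{D}(t)=\sum_{k=0}^{K-1}t^{-k}$; Example~\ref{ex:thm1} already illustrates this telescoping in the case $N=K=3$. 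The main technical obstacle is to certify LP optimality of the pure vertex at each $\mu=t/N$ for general $(N,K,t)$ by exhibiting an explicit dual feasible solution of value $\tilde{D}(t)$, following the LP duality strategy of \cite{wan2016optimality} adapted by the authors in \cite{attia2018near,bahrami2017towards}. Once the pointwise bound $D^*(t/N)\geq\tilde{D}(t)$ is established for every $t\in[1:N]$, the convexity of $D^*(\mu)$ guaranteed by Claim~\ref{claim1} lifts these bounds to the lower convex hull at all $\mu\in[1/N,1]$, exactly matching the achievable scheme and closing the proof.
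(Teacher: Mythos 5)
Your achievability argument and your specialization of Theorem~\ref{theorem1} follow essentially the paper's route: the subset-based placement is exactly the coded-caching-inspired placement of Section~\ref{sec:achiev}, each block indexed by a $t$-subset $\mathcal{S}$ is indeed a $(t,K)$ replicated PIR instance, and your expression for $\lambda_{(n,k)}$ agrees with the paper's $\lambda_n=\sum_{\ell=1}^{N-n}\binom{N-n}{\ell}x_\ell$, since $\binom{N-\ell}{n}\binom{N}{\ell}/\binom{N}{n}=\binom{N-n}{\ell}$. The difficulty is in your converse endgame. First, the step you defer as a "combinatorial identity'' (that the nested sums collapse so the coefficient of $x_\ell$ is $\binom{N}{\ell}\bigl(\tilde{D}(\ell)-1\bigr)$) is a genuine piece of work in the paper, established through the recursion $nS(n,k)-(n+1)S(n+1,k)=S(n,k-1)$ and Lemma~\ref{lem5}; likewise the LP-duality certificate you flag as the "main technical obstacle'' is left unproved. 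These are incomplete rather than wrong.

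The genuine logical gap is the last step: you prove (modulo the above) only the pointwise bounds $D^*(t/N)\geq\tilde{D}(t)$ and then claim that convexity of $D^*$ from Claim~\ref{claim1} "lifts'' them to the lower convex hull for intermediate $\mu$. Convexity constrains $D^*$ from \emph{above} between grid points ($D^*(\alpha\mu_1+(1-\alpha)\mu_2)\leq\alpha D^*(\mu_1)+(1-\alpha)D^*(\mu_2)$); it does not convert corner-point lower bounds into a chord lower bound. Indeed, a convex function with $f(t/N)=\tilde{D}(t)$ and $f((t+1)/N)=\tilde{D}(t+1)$ lies \emph{below} the chord joining these two graph points on the whole interval, so it is perfectly consistent with convexity and with your pointwise bounds for $D^*(\mu)$ to be strictly below the claimed lower convex hull at intermediate $\mu$; your argument cannot exclude this. (Relatedly, for $\mu$ strictly between $t/N$ and $(t+1)/N$ the LP minimizer is a two-coordinate vertex mixing $x_t$ and $x_{t+1}$, so evaluating only pure vertices cannot settle those $\mu$ either.) The paper closes this by producing, for each $j\in[1:N-1]$, a bound that is linear in $\mu$ and valid for \emph{every} $\mu$: eliminate $x_j$ via the message-size equality and $x_{j+1}$ via the storage inequality, show the residual coefficients $\Gamma^{(j)}_{\ell}=\tilde{D}(\ell)+(\ell-j-1)\tilde{D}(j)-(\ell-j)\tilde{D}(j+1)$ are nonnegative using Lemma~\ref{lem6} (convexity of $t\mapsto\tilde{D}(t)$), and conclude $D^*(\mu)\geq(\mu N-j)\tilde{D}(j+1)-(\mu N-j-1)\tilde{D}(j)$; the upper envelope of these $N-1$ lines is exactly the lower convex hull, with no appeal to convexity of $D^*$ needed on the converse side. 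You would need to replace your lifting step by an argument of this type (or otherwise solve the LP for every $\mu$, not just at $\mu=t/N$).
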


\begin{figure}[t]
  \begin{center}
  \includegraphics[width=0.8\columnwidth]{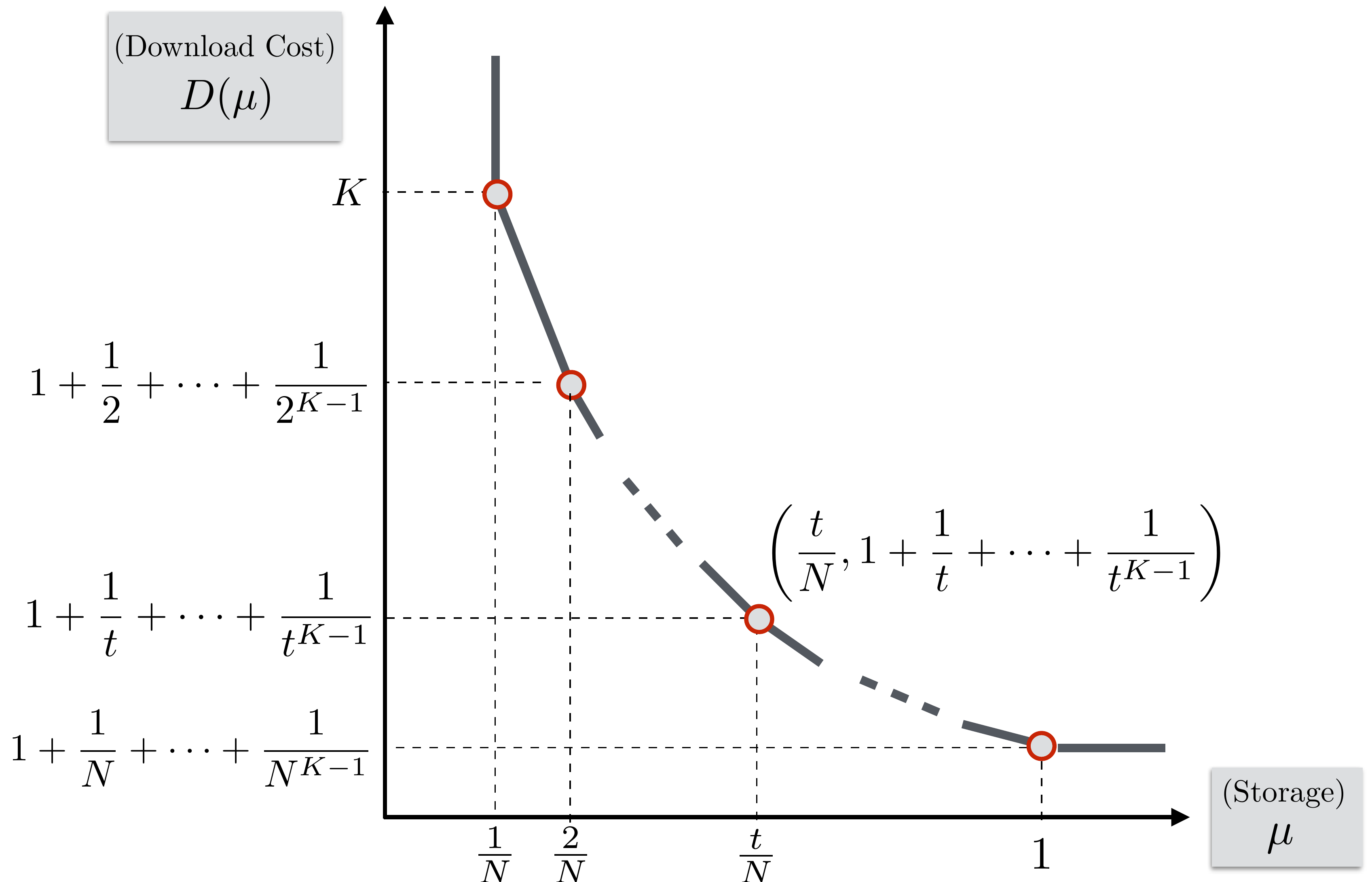}
\caption{The optimal trade-off between storage and download cost for uncoded storage constrained PIR. \label{Figtheorem}}
  \end{center}
\end{figure}

\begin{remark}\label{remark1}\normalfont
The general optimal trade-off resulting from Theorem \ref{theorem2} is illustrated in Figure \ref{Figtheorem}. The smallest value of $\mu=1/N$ corresponds to the parameter $t=1$, for which the optimal download cost is maximal and is equal to $K$, corresponding to download all the messages from the databases. The other extreme value of storage is $\mu=1$, corresponding to  $t=N$, i.e., the setting of full storage in which every database can store all the messages. For this case, the optimal download cost was characterized in \cite{SunAndJaffar1} as $(1+\frac{1}{N}+\frac{1}{N^2}+\cdots+\frac{1}{N^{K-1}})$. The PIR download cost for the storage values in between outperforms 
 memory  sharing between the two extremes, i.e., lower than the line joining between them.
\end{remark}

We briefly describe here the main elements of the achievability and the converse proof of Theorem~\ref{theorem2} through an example of $N=3$ databases and $K=3$ messages.  
The general achievable scheme and the converse proof, for any $(N, K)$ and any $\mu$, are described in Sections \ref{sec:converse}, and \ref{sec:achiev}, respectively. 

\begin{example}
\label{ex:N3-K3}
\normalfont

Figure \ref{fig-example} shows the optimal trade-off between download cost and storage for $N=K=3$.
Following Theorem \ref{theorem2}, the trade-off  has three corner points as shown in Figure \ref{fig-example} (labeled as $P_1$, $P_2$ and $P_3$). 
 We can make the following interesting observations from this figure: 
\begin{enumerate}
\item The storage point $\mu=1$ (or point $P_3$) corresponds to replicated databases, for which the optimal download cost is the same as that in \cite{SunAndJaffar1}.
\item The storage point $\mu=1/N$ (or point $P_1$) corresponds to the minimum value of storage, for which the optimal scheme is to download all messages to ensure privacy.
\item For intermediate values, i.e.,  $1/N < \mu <1$, the optimal trade-off outperforms memory sharing between $P_1$ and $P3$.
\end{enumerate}

\begin{figure}[t]
  \begin{center}
  \includegraphics[width=0.8\columnwidth]{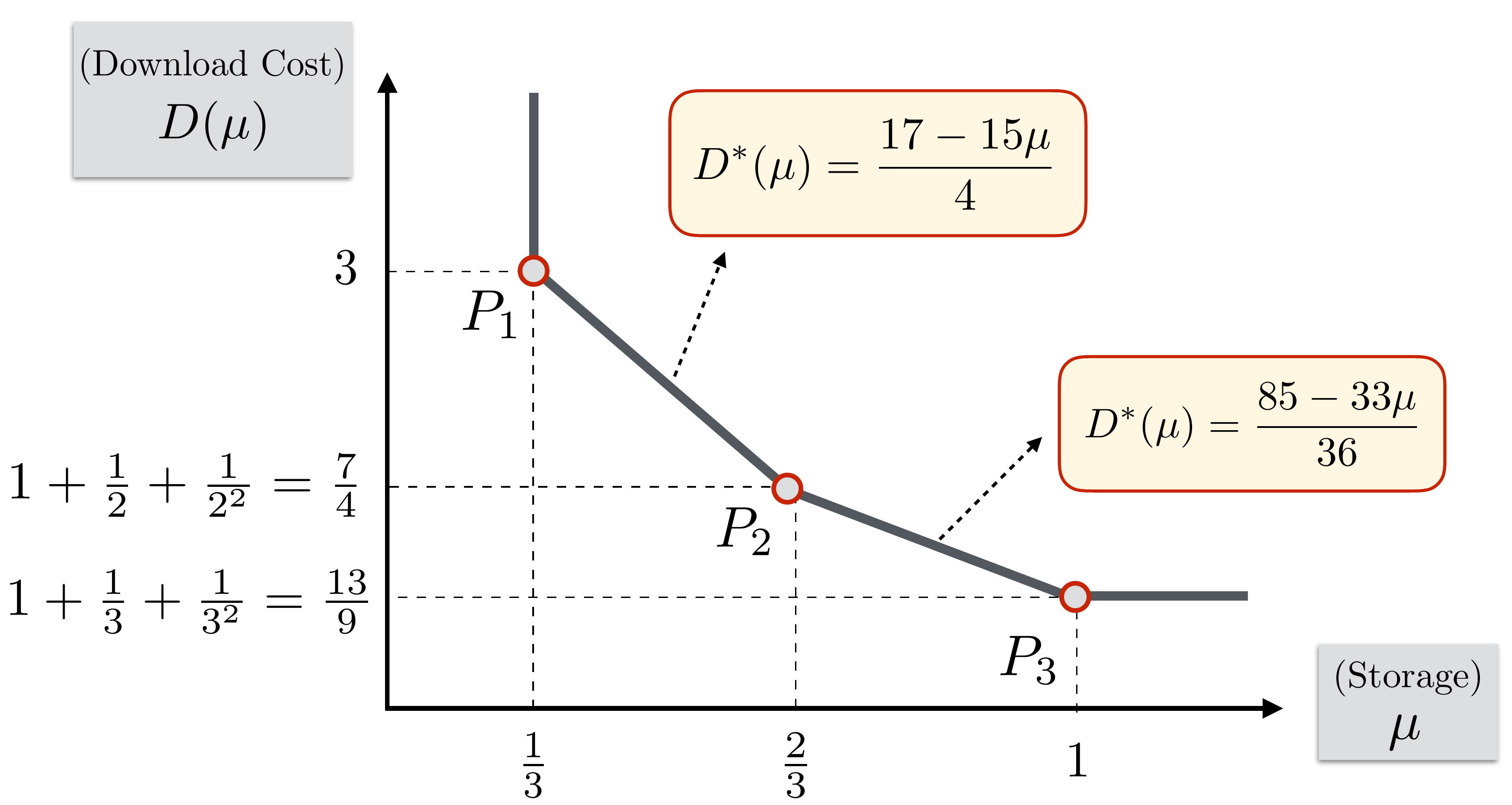}
\caption{Optimal trade-off between download and storage for $(N,K)=(3,3)$. Following Theorem \ref{theorem2}, the trade-off  has three corner points, labeled as $P_1$, $P_2$ and $P_3$.\label{fig-example}}
\vspace{-25pt}
  \end{center}
\end{figure}

\noindent \textbf{{Converse Proof}}-- 
From Figure \ref{fig-example}, it is clear that we need to need to prove the following two lower bounds on the download cost: 
\begin{align}
D^*(\mu)\geq \frac{17-15\mu}{4},\qquad D^*(\mu)\geq \frac{85- 33\mu}{36}. \label{conversegoal}
\end{align} 
To this end, we now specialize the lower bound in (\ref{lemma1_coded2}) for the case of uncoded storage placement as defined in Section~\ref{re:uncoded-assump} to write the bound in (\ref{lemma1_coded}) as follows:
	\begin{align}
D&\geq L+\frac{1}{6}\sum_{\mathcal{K}\subseteq[1:3] \atop \left|\mathcal{K}\right|=1}\sum_{k\in [1:3]\backslash \mathcal{K}}\left(\frac{1}{3}  H(W_{k}|\W_{\mathcal{K}})+\frac{1}{6}\sum_{i=1}^3  H(W_{k}|\W_{\mathcal{K}},Z_{i})+\frac{1}{3} \sum_{i=1}^3 H(W_{k}|\W_{\mathcal{K}},\Z_{[1:3]\setminus i})\right) \nonumber\\
&\hspace{5pt}+\hspace{-2pt}\frac{1}{6}\hspace{-3pt}\sum_{\mathcal{K}\subseteq[1:3] \atop \left|\mathcal{K}\right|=2}\sum_{k\in [1:3]\backslash \mathcal{K}}\hspace{-2pt}\left(\frac{2}{9}  H(W_{k}|\W_{\mathcal{K}})+\frac{5}{18}\sum_{i=1}^3  H(W_{k}|\W_{\mathcal{K}},Z_{i})\hspace{-3pt}+\hspace{-3pt}\frac{11}{9} \sum_{i=1}^3 H(W_{k}|\W_{\mathcal{K}},\Z_{[1:3]\setminus i})\right)
+o(L)	\nonumber\\
&\overset{(a)}{=} L+\frac{1}{6}\sum_{\mathcal{K}\subseteq[1:3] \atop \left|\mathcal{K}\right|=1}\sum_{k\in [1:3]\backslash \mathcal{K}}\left(\frac{1}{3}  H(W_{k})+\frac{1}{6}\sum_{i=1}^3  H(W_{k}|Z_{i})+\frac{1}{3} \sum_{i=1}^3 H(W_{k}|\Z_{[1:3]\setminus i})\right) \nonumber\\
&\hspace{10pt}+\frac{1}{6}\sum_{\mathcal{K}\subseteq[1:3] \atop \left|\mathcal{K}\right|=2}\sum_{k\in [1:3]\backslash \mathcal{K}}\left(\frac{2}{9}  H(W_{k})+\frac{5}{18}\sum_{i=1}^3  H(W_{k}|Z_{i})\hspace{-3pt}+\hspace{-3pt}\frac{11}{9} \sum_{i=1}^3 H(W_{k}|\Z_{[1:3]\setminus i})\right)
+o(L)\nonumber\\
&= L+\frac{1}{6}\sum_{k=1}^3 \left[\sum_{\mathcal{K}\subseteq[1:3]\backslash k \atop \left|\mathcal{K}\right|=1}\left(\frac{1}{3}  H(W_{k})+\frac{1}{6}\sum_{i=1}^3  H(W_{k}|Z_{i})+\frac{1}{3} \sum_{i=1}^3 H(W_{k}|\Z_{[1:3]\setminus i})\right)\right. \nonumber\\
&\hspace{10pt}+\left.\sum_{\mathcal{K}\subseteq[1:3]\backslash k \atop \left|\mathcal{K}\right|=2}\left(\frac{2}{9}  H(W_{k})+\frac{5}{18}\sum_{i=1}^3  H(W_{k}|Z_{i})\hspace{-3pt}+\hspace{-3pt}\frac{11}{9} \sum_{i=1}^3 H(W_{k}|\Z_{[1:3]\setminus i})\right)\right]
+o(L)\nonumber\\
&= L+\frac{4}{27}\sum_{k=1}^3  H(W_{k})+\frac{11}{108}\sum_{i=1}^3\sum_{k=1}^3  H(W_{k}|Z_{i})+\frac{17}{54} \sum_{i=1}^3 \sum_{k=1}^3 H(W_{k}|\Z_{[1:3]\setminus i})+o(L),
 \label{lemma1_uncoded}
\end{align}
where step $(a)$ follows since the messages are i.i.d. and using the uncoded storage content given in \eqref{eq:storage-content2}.
Next, we use the representation of the message $W_k$ for uncoded storage given in \eqref{eq:def_W_k} to write the lower bound in \eqref{lemma1_uncoded} in terms of $x_{\ell}$ defined in \eqref{eq:def-x_ell} as follows,
	\begin{align}
D& \overset{(a)}{\geq} L+\frac{4}{27}\sum_{k=1}^3\sum_{\substack{\mathcal{S}\subseteq [1:3] \\ \vert \mathcal{S}\vert \geq 1}}  \vert W_{k,\mathcal{S}}\vert L+\frac{11}{108}\sum_{i=1}^3\sum_{k=1}^3 \sum_{\substack{\mathcal{S}\subseteq [1:3]\setminus i \\ \vert \mathcal{S}\vert \geq 1}}  \vert W_{k,\mathcal{S}}\vert L+\frac{17}{54} \sum_{i=1}^3 \sum_{k=1}^3 \vert W_{k,\{i\}}\vert L+o(L)\nonumber\\
&{=} L+\frac{2}{3}\sum_{k=1}^3\sum_{\substack{\mathcal{S}\subseteq [1:3] \\ \vert \mathcal{S}\vert = 1}}  \vert W_{k,\mathcal{S}}\vert L +\frac{1}{4}\sum_{k=1}^3 \sum_{\substack{\mathcal{S}\subseteq [1:3]\\ \vert \mathcal{S}\vert =2}}  \vert W_{k,\mathcal{S}}\vert L+\frac{4}{27}\sum_{k=1}^3 \sum_{\substack{\mathcal{S}\subseteq [1:3]\\ \vert \mathcal{S}\vert =3}}  \vert W_{k,\mathcal{S}}\vert L+o(L)\nonumber\\
&\overset{(b)}{=} L+6x_1 L +\frac{9}{4}x_2 L+\frac{4}{9}x_3 L+o(L),
 \label{lemma1_12}
\end{align}	
where $(a)$ follows from \eqref{eq:def_W_k}; and $(b)$ follows from the definition of $x_{\ell}$ in \eqref{eq:def-x_ell}.

Since the bound in \eqref{lemma1_12} is valid for any achievable pair $(D,L)$, it is also a valid bound on the optimal normalized download cost, $D^*(\mu)$, as defined in \eqref{optimal-download-cost},  where $\mu \in[\frac{1}{3},1]$. Therefore, by dividing \eqref{lemma1_12} and taking the limit $L\rightarrow \infty$, we obtain the following bound on $D^*(\mu)$:
	\begin{align}
D^*(\mu)\geq 1+6x_1 +\frac{9}{4}x_2+\frac{4}{9}x_3.
 \label{lemma1_13}
\end{align}	
Moreover, the message size, and the storage constraints for uncoded storage placement for this example $N=K =3$ follow from \eqref{eq:size-const}, and \eqref{eq:storage-const}, respectively. Hence, we obtain the following constraints:
\begin{align}
&3x_1 +3x_2 + x_3 =1, \label{eq:ex-datasize} \\
 &3x_1 +6x_2 + 3x_3 \leq 3\mu.\label{eq:ex-storage} 
\end{align}

We get the first lower bound on $D^*(\mu)$, given in \eqref{conversegoal},  in two steps. First, we use the message size constraint in \eqref{eq:ex-datasize} to remove $x_1$ from \eqref{lemma1_13} and \eqref{eq:ex-datasize} and get two bounds,
\begin{align}
D^*(\mu)&\geq 3 -\frac{15}{4}x_2-\frac{14}{9}x_3, \label{lemma1_14-1} \\
 3\mu&\geq 1 +3x_2 +2x_3.\label{eq:ex-storage-1} 
\end{align}
Then, we use \eqref{eq:ex-storage-1} to bound $x_2$ in \eqref{lemma1_14-1} and get
\begin{align}
\label{ex:bound1}
D^*(\mu)&\geq \frac{17-15\mu}{4} +\frac{17}{18}x_3 \overset{(a)}{\geq} \frac{17-15\mu}{4},
\end{align}
where $(a)$ follows since $x_3\geq 0$.

We get the second lower bound on $D^*(\mu)$, given in \eqref{conversegoal}, by eliminating the variables $(x_2,x_3)$. First, we use the message size constraint in \eqref{eq:ex-datasize} to remove $x_2$ from \eqref{lemma1_13} and \eqref{eq:ex-datasize} and get two bounds,
\begin{align}
D^*(\mu)&\geq \frac{7}{4} +\frac{15}{4}x_1-\frac{11}{36}x_3, \label{lemma1_14-2} \\
 3\mu&\geq 2 -3x_1 +x_3.\label{eq:ex-storage-2} 
\end{align}
Then, we use \eqref{eq:ex-storage-2} to bound $x_3$ in \eqref{lemma1_14-2} and get
\begin{align}
\label{ex:bound2}
D^*(\mu)&\geq \frac{85-33\mu}{36} +\frac{17}{6}x_1 \overset{(a)}{\geq} \frac{85-33\mu}{36},
\end{align}
where $(a)$ follows since $x_1\geq 0$.

Proving the two bounds in \eqref{ex:bound1} and \eqref{ex:bound2} completes the converse proof for $N=3$, and $K=3$.

\noindent \textbf{{Achievable Scheme}}-- 
The optimal trade-off is achieved by memory sharing  between different PIR schemes (see Claim~\ref{claim1}), which are designed for three values of storage $\mu\in\{1/3,2/3,1\}$. Consider without loss of generality, that the user wants to privately retrieve message $W_1$.

\noindent $\bullet$\hspace{5pt} \textbf{Case $\mathbf{P_1}$ ($\mathbf{t=1\textbf{ or }\mu =1/3}$):}

\noindent\underline{Storage Placement:}
For this point, we assume that each message is of size $L= 1^{3}\times \binom{3}{1}= 3$ bits. Hence, the storage constraint for each database is $\mu KL =3$ bits.  In particular, we split each message into $\binom{3}{1}=3$ sub-messages and label each by a unique subset of $[1:3]$ of size $t=1$, i.e., $W_k= \{W_{k,\{1\}},W_{k,\{2\}},W_{k,\{3\}}\}$ for $k\in[1:3]$, and each sub-message is of size $1^{3}=1$ bits.
Subsequently,  DB$_n$ stores those sub-messages (of each message) whose index contains $n$. For instance, DB$_1$ stores $\{W_{1,\{1\}},W_{2,\{1\}},W_{3,\{1\}}\}$. Hence, the  total storage required per database is $3$ bits, which satisfies the storage constraint.

\noindent\underline{PIR Scheme:}
The PIR scheme is trivial for this storage point, where in order to maintain privacy all the messages should be downloaded from the databases for any message request.  Hence, the download cost for this scheme  is given as $D(\mu =\frac{1}{3})= \frac{9}{3}=3$, and point $P_1$ is achieved.

\noindent $\bullet$\hspace{5pt} \textbf{Case $\mathbf{P_2}$ ($\mathbf{t=2\textbf{ or }\mu =2/3}$):}

\noindent\underline{Storage Placement:}
The storage placement here is inspired by the storage placement strategy for caching systems in \cite{Maddah}.
For this storage point, we assume that each message is of size $L= 2^{3}\times \binom{3}{2}= 24$ bits. 
Hence, the storage constraint for each database is $\mu KL =48$ bits.
In particular, we split each message into $\binom{3}{2}=3$ sub-messages and label each by a unique subset of $[1:3]$ of size $t=2$, i.e., $W_k= \{W_{k,\{1,2\}},W_{k,\{1,3\}},W_{k,\{2,3\}}\}$ for $k\in[1:3]$. Each sub-message is of size $2^{3}=8$ bits, i.e., $W_{k,\mathcal{S}} = (w^{1}_{k,\mathcal{S}},\ldots, w^{8}_{k,\mathcal{S}})$ for $k\in[1:3]$ and $\mathcal{S}$ is a subset of $[1:3]$ of size $2$. We can write the bits representation of the $3$ messages as:
\begin{align}
&W_1=\{(w^{1}_{1,\{1,2\}},\ldots, w^{8}_{1,\{1,2\}}),(w^{1}_{1,\{1,3\}},\ldots, w^{8}_{1,\{1,3\}}),(w^{1}_{1,\{2,3\}},\ldots, w^{8}_{1,\{2,3\}})\},\nonumber\\
&W_2=\{(w^{1}_{2,\{1,2\}},\ldots, w^{8}_{2,\{1,2\}}),(w^{1}_{2,\{1,3\}},\ldots, w^{8}_{2,\{1,3\}}),(w^{1}_{2,\{2,3\}},\ldots, w^{8}_{2,\{2,3\}})\},\nonumber\\
&W_3=\{(w^{1}_{3,\{1,2\}},\ldots, w^{8}_{3,\{1,2\}}),(w^{1}_{3,\{1,3\}},\ldots, w^{8}_{3,\{1,3\}}),(w^{1}_{3,\{2,3\}},\ldots, w^{8}_{3,\{2,3\}})\}.
\end{align}	
Subsequently,  DB$_n$ stores those sub-messages (of each message) whose index contains $n$. For instance, DB$_1$ stores $\{W_{1,\{1,2\}},W_{1,\{1,3\}},W_{2,\{1,2\}},W_{2,\{1,3\}},W_{3,\{1,2\}},W_{3,\{1,3\}}\}$. Hence, the  total storage required per database is $6\times 8 = 48$ bits, which satisfies the storage constraint.

\noindent\underline{PIR Scheme:}
 For every $k\in[1:3]$ and $\mathcal{S}\subset [1:3]$ where $\vert\mathcal{S} \vert=2$, let $(u^1_{k,\mathcal{S}},u^2_{k,\mathcal{S}},\ldots,u^8_{k,\mathcal{S}})$ represents a random permutation of the bits of the sub-message $W_{k,\mathcal{S}}$ according to the permutation $\bdelta^{k,\mathcal{S}}: (1,2,\ldots,8)\rightarrow (\delta^{k,\mathcal{S}}_1,\delta^{k,\mathcal{S}}_2,\ldots,\delta^{k,\mathcal{S}}_8)$, i.e., $u^i_{k,\mathcal{S}}= w^{(\delta^{k,\mathcal{S}}_i)}_{k,\mathcal{S}}$ for $i\in[1:8]$.
 The storage constraint PIR scheme is shown  in Table~\ref{Table}.
It works in $3$ blocks, where in every block, only 2 databases are involved given by DB$_i$ and $i\in\mathcal{S}$ for every $\mathcal{S}\subset [1:3]$ where $\vert\mathcal{S} \vert=2$. We can label each block by the set $\mathcal{S}$ including the databases involved.
In the block $\mathcal{S}$, the user only downloads bits from the sub-messages $W_{k,\mathcal{S}}$ for $k\in[1:3]$. Each block is divided into three stages described next for the block $\mathcal{S}=\{1,2\}$. Similar steps can be followed for the two other blocks $\{1,3\}$ and $\{2,3\}$.

	\begin{table} [!h]
		%\hrule \vspace{4pt}
		\renewcommand{\arraystretch} {1.3}
		\caption{Storage Constrained PIR: $(N,K)=(3,3), \mu=\frac{t}{N}=\frac{2}{3}$}
		\centering 
		\begin{tabular}{|c|c|c|c|}
			\specialrule{.15em}{0em}{0em} 
			Blocks &DB$_{1}$ &DB$_{2}$ & DB$_{{3}}$ \\
			\specialrule{.15em}{0em}{0em} 
			\multirow{5}{*}{\shortstack{Block $1$ \\ \\ $\mathcal{S}= \{1,2\}$ }} &$u^1_{1,\{1,2\}}, u^1_{2,\{1,2\}}, u^1_{3,\{1,2\}}$   &$u^2_{1,\{1,2\}}, u^2_{2,\{1,2\}}, u^2_{3,\{1,2\}}$  &   -\\
			\cline{2-4}
			&$u^3_{1,\{1,2\}}+u^2_{2,\{1,2\}}$\  & $u^5_{1,\{1,2\}}+u^1_{2,\{1,2\}}$\  & - \\
			&$u^4_{1,\{1,2\}}+u^2_{3,\{1,2\}}$\  & $u^6_{1,\{1,2\}}+u^1_{3,\{1,2\}}$\  & - \\
			&$u^3_{2,\{1,2\}}+u^3_{3,\{1,2\}}$ & $u^4_{2,\{1,2\}}+u^4_{3,\{1,2\}}$ &  -\\
			\cline{2-4}
			&$u^7_{1,\{1,2\}}\hspace{-1pt}+\hspace{-1pt}u^4_{2,\{1,2\}}\hspace{-1pt}+\hspace{-1pt}u^4_{3,\{1,2\}}$ &$u^8_{1,\{1,2\}}\hspace{-1pt}+\hspace{-1pt}u^3_{2,\{1,2\}}\hspace{-1pt}+\hspace{-1pt}u^3_{3,\{1,2\}}$ &   \\	
			\specialrule{.15em}{0em}{0em} 
\multirow{5}{*}{\shortstack{Block $2$ \\ \\ $\mathcal{S}= \{1,3\}$ }} & $u^1_{1,\{1,3\}},u^1_{2,\{1,3\}},u^1_{3,\{1,3\}}$ & -  & $u^2_{1,\{1,3\}}, u^2_{2,\{1,3\}}, u^2_{3,\{1,3\}}$  \\
\cline{2-4}
&$u^3_{1,\{1,3\}}+u^2_{2,\{1,3\}}$ & - &  $u^5_{1,\{1,3\}}+u^1_{2,\{1,3\}}$\\
		&$u^4_{1,\{1,3\}}+u^2_{3,\{1,3\}}$ &- &  $u^6_{1,\{1,3\}}+u^1_{3,\{1,3\}}$\\
			&$u^3_{2,\{1,3\}}+u^3_{3,\{1,3\}}$ & - &  $u^4_{2,\{1,3\}}+u^4_{3,\{1,3\}}$\\
		 \cline{2-4}
		&$u^7_{1,\{1,3\}}\hspace{-1pt}+\hspace{-1pt}u^4_{2,\{1,3\}}\hspace{-1pt}+\hspace{-1pt}u^4_{3,\{1,3\}}$ & - &$u^8_{1,\{1,3\}}\hspace{-1pt}+\hspace{-1pt}u^3_{2,\{1,3\}}+u^3_{3,\{1,3\}}$\\	
			\specialrule{.15em}{0em}{0em} 
			\multirow{5}{*}{\shortstack{Block $3$ \\ \\ $\mathcal{S}= \{2,3\}$ }}			& -&  $u^1_{1,\{2,3\}}, u^1_{2,\{2,3\}},u^1_{3,\{2,3\}}$ & $u^2_{1,\{2,3\}}, u^2_{2,\{2,3\}}, u^2_{3,\{2,3\}}$ \\
\cline{2-4}
		& - & $u^3_{1,\{2,3\}}+u^2_{2,\{2,3\}}$ &  $u^5_{1,\{2,3\}}+u^1_{2,\{2,3\}}$\\
		& - & $u^4_{1,\{2,3\}}+u^2_{3,\{2,3\}}$ &  $u^6_{1,\{2,3\}}+u^1_{3,\{2,3\}}$\\
		& - & $u^3_{2,\{2,3\}}+u^3_{3,\{2,3\}}$ &  $u^4_{2,\{2,3\}}+u^4_{3,\{2,3\}}$\\
		 \cline{2-4}
			& - &$u^7_{1,\{2,3\}}\hspace{-1pt}+\hspace{-1pt}u^4_{2,\{2,3\}}\hspace{-1pt}+\hspace{-1pt}u^4_{3,\{2,3\}}$ &  $u^8_{1,\{2,3\}}\hspace{-1pt}+\hspace{-1pt}u^3_{2,\{2,3\}}\hspace{-1pt}+\hspace{-1pt}u^3_{3,\{2,3\}}$\\		
			\specialrule{.15em}{0em}{0em} 
		\end{tabular}
		\label{Table}
	\end{table}

\noindent \textbf{Stage $1$}: In first stage of the block $\{1,2\}$, single bits from the messages are downloaded from DB$_1$ and DB$_2$. From DB$_1$, the user downloads $1$ desired bits $u^1_{1,\{1,2\}}$, then $2$ undesired bits $\{u^1_{2,\{1,2\}},\ u^1_{2,\{1,3\}}\}$ to confuse the database on which message is requested by maintaining message symmetry. Similarly, the bits $\{u^2_{1,\{1,2\}},u^2_{2,\{1,2\}},\ u^2_{3,\{1,2\}}\}$ are downloaded from DB$_2$.

\noindent \textbf{Stage $2$}: In the second stage of the block $\{1,2\}$, the undesired bits downloaded from the Stage $1$ can be paired with new desired bits.
It is important to notice that the  undesired bits cannot be repeatedly acquired from the same database, otherwise the symmetry in the queries across messages, and hence the privacy, will be violated.
Therefore, the number of new desired bits to be downloaded from DB$_1$ in this stage is limited to the number of undesired bits downloaded in Stage 1 from DB$_2$, and vice versa.
From DB$_1$, the user can download 2 coded desired symbols 
$\{u^3_{1,\{1,2\}}+u^2_{2,\{1,2\}}$, $u^4_{1,\{1,2\}}+u^2_{3,\{1,2\}}\}$. In order to confuse the database, the user maintains the symmetry in the downloaded message by downloading one more undesired coded symbols  $u^3_{2,\{1,2\}}+u^3_{3,\{1,2\}}$. Similarly, the following coded bits are downloaded from DB$_2$: $\{u^5_{1,\{1,2\}}+u^1_{2,\{1,2\}}$, $u^6_{1,\{1,2\}}+u^1_{3,\{1,2\}}$, $u^4_{2,\{1,2\}}+u^4_{3,\{1,2\}}\}$.

%\vspace{-20pt}

\noindent \textbf{Stage $3$}: In the third stage, the second order undesired symbols downloaded from Stage 2 are used to be paired up with new useful bits of the desired message and download third order coded symbols (triples) from the databases. 
From DB$_1$, since $u^4_{2,\{1,2\}}+u^4_{3,\{1,2\}}$ is an undesired side information bit at the user downloaded from DB$_2$ in Stage $2$, it can be paired with a new desired bit of $W_1$, $u^7_{1,\{1,2\}}$, i.e., download $u^7_{1,\{1,2\}}+u^4_{2,\{1,2\}}+u^4_{3,\{1,2\}}$. Similarly, the user downloads the $u^8_{1,\{1,2\}}+u^3_{2,\{1,2\}}+u^3_{3,\{1,2\}}$ from DB$_2$.

\begin{remark}
Each blocks of the storage constrained PIR scheme for the point $\text{P}_2$, i.e., ($N=3$, $K=3$, $t=2$) can be viewed as doing the original PIR scheme proposed in \cite{SunAndJaffar1} for $t=2$ databases and $N=3$ messages. This is enabled by our storage placement scheme inspired by the the work on coded caching \cite{Maddah}.
\end{remark}

We can readily verify that from the $14$ bits downloaded  in every block $\mathcal{S}$, the user is able to correctly retrieve the $8$ bits of the corresponding desired sub-message $W_{1,\mathcal{S}}$.
In total, the user downloads $3\times 14= 42$ coded bits in the three blocks in order to privately retrieve the $24$ bits of the desired message $W_1$.
 Hence, we can achieve a download cost for this scheme $D(\mu =\frac{2}{3})= \frac{42}{24}=7/4$, and point $P_2$ is achieved.

\noindent $\bullet$\hspace{5pt} \textbf{Case $\mathbf{P_3}$ ($\mathbf{t=3\textbf{ or }\mu =1}$):}

\noindent\underline{Storage Placement:}
This storage point is the replicated databases case considered in \cite{SunAndJaffar1}. We describe it here for the sake of completeness.
The storage placement is trivial in this case, where all the databases can store the 3 messages completely.
Each message is assumed to be of size $L= 3^{3}\times \binom{3}{3}= 27$ bits, e.g., $W_1=(w^1_{1,\{1,2,3\}},w^2_{1,\{1,2,3\}},\ldots,w^{27}_{1,\{1,2,3\}})$.

\noindent\underline{PIR Scheme:}
 For every $k\in[1:3]$, let $(u^1_{k,\{1,2,3\}},u^2_{k,\{1,2,3\}},\ldots,u^{27}_{k,\{1,2,3\}})$ represents a random permutation of the bits of the sub-message $W_{k,\{1,2,3\}}$ according to the permutation $\bdelta^{k,\{1,2,3\}}: (1,2,\ldots,27)\rightarrow (\delta^{k,\{1,2,3\}}_1,\delta^{k,\{1,2,3\}}_2,\ldots,\delta^{k,\{1,2,3\}}_{27})$, i.e., $u^i_{k,\{1,2,3\}}= w^{(\delta^{k,\{1,2,3\}}_i)}_{k,\{1,2,3\}}$ for $i\in[1:27]$.
The PIR scheme works in three stages as shown in Table~\ref{Table2}.

	\begin{table} [!h]
		%\hrule \vspace{4pt}
		\renewcommand{\arraystretch} {1.3}
		\caption{Storage Constrained PIR: $(N,K)=(3,3), \mu=\frac{t}{N}=1$}
		\centering 
		\begin{tabular}{|c|c|c|}
		\specialrule{.15em}{0em}{0em} 
			DB$_{1}$ &DB$_{2}$ & DB$_{3}$ \\
		\specialrule{.15em}{0em}{0em} 
			$u^1_{1,\{1,2,3\}},u^1_{2,\{1,2,3\}},u^1_{3,\{1,2,3\}}$   &$u^2_{1,\{1,2,3\}},u^2_{2,\{1,2,3\}},u^2_{3,\{1,2,3\}}$  &$u^3_{1,\{1,2,3\}},u^3_{2,\{1,2,3\}},u^3_{3,\{1,2,3\}}$   \\
			\hline
			$u^4_{1,\{1,2,3\}}+u^2_{2,\{1,2,3\}}$\  & $u^8_{1,\{1,2,3\}}+u^1_{2,\{1,2,3\}}$ &$u^{12}_{1,\{1,2,3\}}+u^1_{2,\{1,2,3\}}$\\
			$u^5_{1,\{1,2,3\}}+u^2_{3,\{1,2,3\}}$\  & $u^9_{1,\{1,2,3\}}+u^1_{3,\{1,2,3\}}$  &$u^{13}_{1,\{1,2,3\}}+u^1_{3,\{1,2,3\}}$\\
			$u^4_{2,\{1,2,3\}}+u^4_{3,\{1,2,3\}}$\  & $u^{6}_{2,\{1,2,3\}}+u^6_{3,\{1,2,3\}}$  &$u^{8}_{2,\{1,2,3\}}+u^8_{3,\{1,2,3\}}$\\
			$u^6_{1,\{1,2,3\}}+u^3_{2,\{1,2,3\}}$\  & $u^{10}_{1,\{1,2,3\}}+u^3_{2,\{1,2,3\}}$  &$u^{14}_{1,\{1,2,3\}}+u^2_{2,\{1,2,3\}}$\\
			$u^7_{1,\{1,2,3\}}+u^3_{3,\{1,2,3\}}$\  & $u^{11}_{1,\{1,2,3\}}+u^3_{3,\{1,2,3\}}$ &$u^{15}_{1,\{1,2,3\}}+u^2_{3,\{1,2,3\}}$\\
			$u^5_{2,\{1,2,3\}}+u^5_{3,\{1,2,3\}}$\  & $u^{7}_{2,\{1,2,3\}}+u^7_{3,\{1,2,3\}}$  &$u^{9}_{2,\{1,2,3\}}+u^9_{3,\{1,2,3\}}$\\
			\hline
			$u^{16}_{1,\{1,2,3\}}+u^{6}_{2,\{1,2,3\}}+u^6_{3,\{1,2,3\}}$ &$u^{20}_{1,\{1,2,3\}}+u^{4}_{2,\{1,2,3\}}+u^4_{3,\{1,2,3\}}$&  $u^{24}_{1,\{1,2,3\}}+u^{4}_{2,\{1,2,3\}}+u^4_{3,\{1,2,3\}}$\\
			$u^{17}_{1,\{1,2,3\}}+u^{7}_{2,\{1,2,3\}}+u^7_{3,\{1,2,3\}}$ &$u^{21}_{1,\{1,2,3\}}+u^{5}_{2,\{1,2,3\}}+u^5_{3,\{1,2,3\}}$&  $u^{25}_{1,\{1,2,3\}}+u^{5}_{2,\{1,2,3\}}+u^5_{3,\{1,2,3\}}$\\
			$u^{18}_{1,\{1,2,3\}}+u^{8}_{2,\{1,2,3\}}+u^8_{3,\{1,2,3\}}$ &$u^{22}_{1,\{1,2,3\}}+u^{8}_{2,\{1,2,3\}}+u^8_{3,\{1,2,3\}}$&  $u^{26}_{1,\{1,2,3\}}+u^{6}_{2,\{1,2,3\}}+u^6_{3,\{1,2,3\}}$\\
			$u^{19}_{1,\{1,2,3\}}+u^{9}_{2,\{1,2,3\}}+u^9_{3,\{1,2,3\}}$ &$u^{23}_{1,\{1,2,3\}}+u^{9}_{2,\{1,2,3\}}+u^9_{3,\{1,2,3\}}$&  $u^{27}_{1,\{1,2,3\}}+u^{7}_{2,\{1,2,3\}}+u^7_{3,\{1,2,3\}}$\\			
		\specialrule{.15em}{0em}{0em} 
		\end{tabular}
		\label{Table2}
	\end{table}

\noindent \textbf{Stage $1$}: In first stage, single bits from the messages are downloaded. From DB$_1$, the user downloads 1 desired bits $u^1_{1,\{1,2,3\}}$, then 2 undesired bits $\{u^1_{2,\{1,2,3\}}$, $u^1_{3,\{1,2,3\}}\}$ to confuse the database on which message is requested by maintaining message symmetry. Similarly, $\{u^2_{1,\{1,2,3\}}$ ,$u^2_{2,\{1,2,3\}}$, $u^2_{3,\{1,2,3\}}\}$ and $\{u^3_{1,\{1,2,3\}}$, $u^3_{2,\{1,2,3\}}$, $u^3_{3,\{1,2,3\}}\}$ are downloaded from DB$_2$ and DB$_3$ respectively.

\noindent \textbf{Stage $2$}: In the second stage, the undesired bits downloaded from the first stage can be paired with new desired bits.
For instance, the user can download 4 desired bits from DB$_1$ by requesting 4 second order coded symbols (pairs) paired up with the undesired bits downloaded from DB$_2$ and DB$_3$ in Stage 1, as follows:
$\{u^4_{1,\{1,2,3\}}+u^2_{2,\{1,2,3\}}$, $u^5_{1,\{1,2,3\}}+u^2_{3,\{1,2,3\}}$, $u^6_{1,\{1,2,3\}}+u^3_{2,\{1,2,3\}}$, $u^7_{1,\{1,2,3\}}+u^3_{3,\{1,2,3\}}\}$. In order to confuse the database the user maintains the symmetry in the downloaded message by downloading 2 undesired coded symbols  $\{u^4_{2,\{1,2,3\}}+u^4_{3,\{1,2,3\}}$, $u^5_{2,\{1,2,3\}}+u^5_{3,\{1,2,3\}}\}$.

%\vspace{-20pt}

\noindent \textbf{Stage $3$}: In the third stage, the  undesired coded pairs downloaded from Stage 2 are used to be paired up with new useful bits of the desired message and download  coded triples from the databases. 
For instance, the user can download 4 desired bits from DB$_1$, $\{u^{16}_{1,\{1,2,3\}}$, $u^{17}_{1,\{1,2,3\}}$, $u^{18}_{1,\{1,2,3\}}$, $u^{19}_{1,\{1,2,3\}}\}$, by requesting 4  coded triples paired up with the undesired coded bits downloaded from DB$_2$ and DB$_3$ in Stage 2, as shown in Table~\ref{Table2}.

We can readily verify that from the $39$ bits downloaded  from all three databases, the user is able to correctly retrieve the $27$ bits of message $W_1$. Hence, we can achieve a download cost for this scheme $D(\mu =1)= \frac{39}{27}=13/9$, and point $P_3$ is achieved.

 Finally, the intermediate values of $\mu$, between the points $P_1$, $P_2$, and $P_3$,  can be achieved by memory-sharing (see Claim~\ref{claim1}), showing that the lower convex hull given in Figure~\ref{fig-example} is achievable, and that the scheme is information-theoretically optimal for $N=3$, and $K=3$.
\end{example}

\section{Proof of Theorem \ref{theorem1}: General Lower Bound on $D^*(\mu)$}
\label{sec:LB}

We start by proving the following Lemma, which provides an information theoretic bound useful in many steps of the general converse proof. 

\begin{lemma}\label{lem1}
For any $\mathcal{N}\subseteq [1:N]$, $\mathcal{K}\subseteq [1:K]$, $i\in [1:N]$, and $j\in [1:K]$ we can write the following lower bound:
\begin{align}
I(\W_{[1:K]\backslash\mathcal{K}};Q_i^{[j]},A_i^{[j]}|\W_{\mathcal{K}},\Q_{\mathcal{N}}^{[j]},\A_{\mathcal{N}}^{[j]})\geq I(\W_{[1:K]\backslash\mathcal{K}};Q_i^{[j]},A_i^{[j]}|\W_{\mathcal{K}},\Z_{\mathcal{N}}).
\end{align}
\end{lemma}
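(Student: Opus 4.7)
The plan is to mimic the chain of manipulations carried out in equations \eqref{lemma1_3} and \eqref{lemma1_4} of the running example, generalized to arbitrary subsets $\mathcal{N}$ and $\mathcal{K}$. The key idea is to first swap the queries and answers $(\Q_{\mathcal{N}}^{[j]},\A_{\mathcal{N}}^{[j]})$ in the conditioning for the storage random variables $\Z_{\mathcal{N}}$, and then eliminate $\Q_{\mathcal{N}}^{[j]}$ using the independence of queries from messages.

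First I would write the mutual information as a difference of entropies,
\begin{align*}
I(\W_{[1:K]\backslash\mathcal{K}};Q_i^{[j]},A_i^{[j]}|\W_{\mathcal{K}},\Q_{\mathcal{N}}^{[j]},\A_{\mathcal{N}}^{[j]})
&= H(Q_i^{[j]},A_i^{[j]}|\W_{\mathcal{K}},\Q_{\mathcal{N}}^{[j]},\A_{\mathcal{N}}^{[j]}) \\
&\quad - H(Q_i^{[j]},A_i^{[j]}|\W_{[1:K]},\Q_{\mathcal{N}}^{[j]},\A_{\mathcal{N}}^{[j]}).
\end{align*}
In the first entropy term I would introduce $\Z_{\mathcal{N}}$ in the conditioning, which only decreases the entropy. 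In the second term, $\Z_{\mathcal{N}}$ is a deterministic function of $\W_{[1:K]}$, so inserting it into the conditioning is free. This produces the lower bound
\[
\geq H(Q_i^{[j]},A_i^{[j]}|\W_{\mathcal{K}},\Z_{\mathcal{N}},\Q_{\mathcal{N}}^{[j]},\A_{\mathcal{N}}^{[j]}) - H(Q_i^{[j]},A_i^{[j]}|\W_{[1:K]},\Z_{\mathcal{N}},\Q_{\mathcal{N}}^{[j]},\A_{\mathcal{N}}^{[j]}).
\]
Next, using \eqref{cons4} each answer $A_n^{[j]}$ with $n\in\mathcal{N}$ is a function of $(Z_n,Q_n^{[j]})$, hence $\A_{\mathcal{N}}^{[j]}$ is determined by $(\Z_{\mathcal{N}},\Q_{\mathcal{N}}^{[j]})$ and can be dropped from both conditioning sets, leaving
\[
I(\W_{[1:K]\backslash\mathcal{K}};Q_i^{[j]},A_i^{[j]}|\W_{\mathcal{K}},\Z_{\mathcal{N}},\Q_{\mathcal{N}}^{[j]}).
\]

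Finally, I would remove $\Q_{\mathcal{N}}^{[j]}$ from the conditioning. By the chain rule,
\[
I(\W_{[1:K]\backslash\mathcal{K}};\Q_{\mathcal{N}}^{[j]},Q_i^{[j]},A_i^{[j]}|\W_{\mathcal{K}},\Z_{\mathcal{N}}) = I(\W_{[1:K]\backslash\mathcal{K}};\Q_{\mathcal{N}}^{[j]}|\W_{\mathcal{K}},\Z_{\mathcal{N}}) + I(\W_{[1:K]\backslash\mathcal{K}};Q_i^{[j]},A_i^{[j]}|\W_{\mathcal{K}},\Z_{\mathcal{N}},\Q_{\mathcal{N}}^{[j]}),
\]
and the first summand on the right vanishes since queries are independent of the messages and of the storage by \eqref{cons3}. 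Expanding the same joint mutual information the other way,
\[
I(\W_{[1:K]\backslash\mathcal{K}};\Q_{\mathcal{N}}^{[j]},Q_i^{[j]},A_i^{[j]}|\W_{\mathcal{K}},\Z_{\mathcal{N}}) \geq I(\W_{[1:K]\backslash\mathcal{K}};Q_i^{[j]},A_i^{[j]}|\W_{\mathcal{K}},\Z_{\mathcal{N}}),
\]
by non-negativity of mutual information, which delivers the claim.

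I do not anticipate a serious obstacle here; the only subtle step is the final elimination of $\Q_{\mathcal{N}}^{[j]}$, which requires invoking independence of queries from messages and storage (a consequence of \eqref{cons3} together with the fact that $\Z_{\mathcal{N}}$ is a function of the messages only) in exactly the form shown above. All other manipulations are standard applications of entropy monotonicity and the functional dependence $A_n^{[j]}=f(Z_n,Q_n^{[j]})$.
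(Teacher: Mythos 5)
Your proposal is correct and follows essentially the same route as the paper's own proof of Lemma~\ref{lem1}: introduce $\Z_{\mathcal{N}}$ into the conditioning (free in the second entropy term since $\Z_{\mathcal{N}}$ is a function of the messages, and via conditioning-reduces-entropy in the first), drop $\A_{\mathcal{N}}^{[j]}$ as a function of $(\Z_{\mathcal{N}},\Q_{\mathcal{N}}^{[j]})$, and finally eliminate $\Q_{\mathcal{N}}^{[j]}$ using the independence of queries from the messages and storage. No gaps.
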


\begin{remark}
Lemma \ref{lem1} lower bounds the mutual information $I(\W_{[1:K]\backslash\mathcal{K}};Q_i^{[j]},A_i^{[j]}|\W_{\mathcal{K}},\Q_{\mathcal{N}}^{[j]},\A_{\mathcal{N}}^{[j]})$ by replacing the queries and answers $\Q_{\mathcal{N}}^{[j]},\A_{\mathcal{N}}^{[j]}$ in the conditioning, with the storage contents of the corresponding subset of databases $\Z_{\mathcal{N}}$. This Lemma is repeatedly used in our converse proof for the download cost. 
\end{remark}

\begin{proof}
\begin{align}
I(\W_{[1:K]\backslash\mathcal{K}};&Q_i^{[j]},A_i^{[j]}|\W_{\mathcal{K}},\Q_{\mathcal{N}}^{[j]},\A_{\mathcal{N}}^{[j]})\nonumber\\
&=H(Q_i^{[j]},A_i^{[j]}|\W_{\mathcal{K}},\Q_{\mathcal{N}}^{[j]},\A_{\mathcal{N}}^{[j]})-H(Q_i^{[j]},A_i^{[j]}|\W_{[1:K]},\Q_{\mathcal{N}}^{[j]},\A_{\mathcal{N}}^{[j]})\nonumber\\
&\overset{(a)}{=}H(Q_i^{[j]},A_i^{[j]}|\W_{\mathcal{K}},\Q_{\mathcal{N}}^{[j]},\A_{\mathcal{N}}^{[j]})-H(Q_i^{[j]},A_i^{[j]}|\W_{[1:K]},\Q_{\mathcal{N}}^{[j]},\A_{\mathcal{N}}^{[j]},\Z_{\mathcal{N}})\nonumber\\
&\overset{(b)}{\geq} H(Q_i^{[j]},A_i^{[j]}|\W_{\mathcal{K}},\Q_{\mathcal{N}}^{[j]},\A_{\mathcal{N}}^{[j]},\Z_{\mathcal{N}})-H(Q_i^{[j]},A_i^{[j]}|\W_{[1:K]},\Q_{\mathcal{N}}^{[j]},\A_{\mathcal{N}}^{[j]},\Z_{\mathcal{N}})\nonumber\\
&\overset{(c)}{=} H(Q_i^{[j]},A_i^{[j]}|\W_{\mathcal{K}},\Q_{\mathcal{N}}^{[j]},\Z_{\mathcal{N}})-H(Q_i^{[j]},A_i^{[j]}|\W_{[1:K]},\Q_{\mathcal{N}}^{[j]},\Z_{\mathcal{N}})\nonumber\\
&=I(Q_i^{[j]},A_i^{[j]};\W_{[1:K]\backslash\mathcal{K}}|\W_{\mathcal{K}},\Q_{\mathcal{N}}^{[j]},\Z_{\mathcal{N}})\nonumber\\
&\overset{(d)}{=}I(\Q_{\mathcal{N}}^{[j]},Q_i^{[j]},A_i^{[j]};\W_{[1:K]\backslash\mathcal{K}}|\W_{\mathcal{K}},\Z_{\mathcal{N}})\nonumber\\
&\geq I(\W_{[1:K]\backslash\mathcal{K}};Q_i^{[j]},A_i^{[j]}|\W_{\mathcal{K}},\Z_{\mathcal{N}}),
\end{align}
where 
$(a)$ follows from the fact that the random variables $\Z_{\mathcal{N}}$ are functions of all the messages $\W_{[1:K]}$;
$(b)$ follows since conditioning reduces entropy; $(c)$ follows  since the answers $\A_{\mathcal{N}}^{[j]}$ are functions of the storage random variables $\Z_{\mathcal{N}}$ and the queries $\Q_{\mathcal{N}}^{[j]}$;
and $(d)$ follows from fact that queries $\Q_{\mathcal{N}}^{[j]}$ are independent from the messages  stored at the databases.
\end{proof}

The following Lemma gives a lower bound on the number of downloaded bits $D$ in terms of the summation of a mutual information term of the form $I(\W_{{[1:K]}\backslash\mathcal{K}};Q^{[j]}_i,A^{[j]}_i|\W_{\mathcal{K}},\Z_{\mathcal{N}})$. Notice that this is the same term appears in right side of the bound in Lemma~\ref{lem1}.

\begin{lemma}\label{lem2}
The download cost $D$ of the storage-constrained PIR is lower bounded as follows:
\begin{align}
D\geq L+\sum_{n=0}^{N-1}T(n,1)+o(L),
\end{align} 
 where $T(n,k)$ for $n\in[0:N]$ and $k\in[0:K]$ is defined as follows:
\begin{align}
\label{eq:def-T_nk}
T(n,k)\overset{\Delta}{=}\frac{1}{NK\binom{K-1}{k}\binom{N-1}{n}}\sum_{\mathcal{K}\subseteq[1:K]\atop \left|\mathcal{K}\right|=k}\sum_{\mathcal{N}\subseteq[1:N]\atop\left|\mathcal{N}\right|=n}\sum_{j\in [1:K]\backslash \mathcal{K}}\sum_{i\in [1:N]\backslash\mathcal{N}}I(\W_{{[1:K]}\backslash\mathcal{K}};Q^{[j]}_i,A^{[j]}_i|\W_{\mathcal{K}},\Z_{\mathcal{N}}).
\end{align}
We notice that when $n=N$ or $k=K$, then we get all the messages in the conditioning of the mutual information term above, and therefore we get the following boundary conditions on $T(n,k)$:
\begin{align}
\label{eq:BC-T-nk}
T(n=N,k) = 0,\  \forall k\in[0:K], \qquad T(n,k=K) = 0,\ \forall n\in[0:N].
\end{align}
\end{lemma}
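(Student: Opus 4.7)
The plan is to generalize the $N=K=3$ derivation from Example~\ref{ex:thm1} in two stages: first obtain a single-desired-message bound for arbitrary $(N,K)$, and then symmetrize via permutations of databases and messages together with the privacy constraint. In the first stage, for any fixed $k\in[1:K]$, I would prove the general form of (\ref{upperbound}),
$$D - L + o(L) \geq I(\W_{[1:K]\setminus\{k\}};\Q^{[k]}_{[1:N]},\A^{[k]}_{[1:N]}|W_k),$$
by repeating the derivation (\ref{upperbound-0}) line-by-line: decompose via chain rule, bound $H(\A^{[k]}_{[1:N]}|\Q^{[k]}_{[1:N]}) \leq D$, use (\ref{cons3}) to drop query--message mutual informations, use (\ref{cons4}) so that answers become functions of queries together with the full message set, and invoke the decoding constraint (\ref{eq:decoding-const}) to account for $H(W_k)=L$ with $o(L)$ slack.

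In the second stage, I would expand the right-hand side along an arbitrary permutation $\bsigma$ of the database indices into $N$ successive mutual information terms of the form $I(\W_{[1:K]\setminus\{k\}};Q^{[k]}_{\sigma_\ell},A^{[k]}_{\sigma_\ell}|W_k,\Q^{[k]}_{\bsigma_{[1:\ell-1]}},\A^{[k]}_{\bsigma_{[1:\ell-1]}})$. Applying Lemma~\ref{lem1} to each summand (with $\mathcal{K}=\{k\}$, query index $k$, $i=\sigma_\ell$, $\mathcal{N}=\bsigma_{[1:\ell-1]}$) replaces the conditioning queries and answers with the storage variables $\Z_{\bsigma_{[1:\ell-1]}}$. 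Averaging the resulting inequality over all $N!$ permutations, I would invoke the counting identity $(\ell-1)!(N-\ell)!/N! = 1/(N\binom{N-1}{\ell-1})$ and reindex $n=\ell-1$ to arrive at
$$D - L + o(L) \geq \sum_{n=0}^{N-1} \frac{1}{N\binom{N-1}{n}} \sum_{\substack{\mathcal{N}\subseteq[1:N]\\ |\mathcal{N}|=n}} \sum_{i\notin\mathcal{N}} I(\W_{[1:K]\setminus\{k\}};Q^{[k]}_i,A^{[k]}_i|W_k,\Z_\mathcal{N}).$$

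Finally, I would invoke the privacy constraint (\ref{eq:privacy-const0}), which makes the joint law of $(Q^{[j]}_i,A^{[j]}_i,\W_{[1:K]},\Z_{[1:N]})$ independent of $j$; hence every mutual information above is unchanged when the query index $k$ is replaced by any $j\in[1:K]\setminus\{k\}$. Averaging first over such $j\neq k$ and then over $k\in[1:K]$ (and writing $\mathcal{K}=\{k\}$) collects the summands into exactly the definition (\ref{eq:def-T_nk}) of $T(n,1)$, giving $D \geq L + \sum_{n=0}^{N-1}T(n,1) + o(L)$. The boundary conditions (\ref{eq:BC-T-nk}) are then immediate: for $n=N$ the set $[1:N]\setminus\mathcal{N}$ is empty and for $k=K$ the condition $j\notin\mathcal{K}$ is vacuous. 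The main obstacle is the bookkeeping: verifying the combinatorial coefficient in the permutation average, and recognizing that privacy must be invoked to shift the query superscript from $k$ (the decoded message index) to an index $j\neq k$ as demanded by $T(n,1)$. No information-theoretic ideas beyond Example~\ref{ex:thm1} and Lemma~\ref{lem1} are required.
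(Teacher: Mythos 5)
Your proposal is correct and follows essentially the same route as the paper's proof: the generalization of \eqref{upperbound} to arbitrary $(N,K)$, the chain-rule expansion along database permutations, Lemma~\ref{lem1} to swap conditioning queries/answers for storage variables, the permutation-counting coefficient $\frac{(n)!(N-n-1)!}{N!}=\frac{1}{N\binom{N-1}{n}}$, and the privacy constraint to move the query superscript from the decoded index $k$ to $j\neq k$ are exactly the steps in \eqref{eq:lem2-step1}--\eqref{eq:lem2-step2}, merely with the averaging over permutations, $j$, and $k$ performed in a slightly different order. The boundary-condition justification (empty sums over $i$ when $n=N$ and over $j$ when $k=K$) is also consistent with the paper.
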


\begin{proof}

We start by obtaining the following bound for all $k\in[1:K]$:
\begin{align}
\label{eq:lem2-step1}
I(\W_{[1:K]\backslash k}&; \Q_{[1:N]}^{[k]},\A_{[1:N]}^{[k]}|W_{k})\nonumber\\
&\overset{}{=}I(\W_{[1:K]\backslash k}; \Q_{[1:N]}^{[k]},\A_{[1:N]}^{[k]},W_{k})-I(\W_{[1:K]\backslash k}; W_{k})\nonumber\\
&\overset{(a)}{=}I(\W_{[1:K]\backslash k};\Q_{[1:N]}^{[k]})+I(\W_{[1:K]\backslash k};\A_{[1:N]}^{[k]}|\Q_{[1:N]}^{[k]})+I(\W_{[1:K]\backslash k};W_{k}|\Q_{[1:N]}^{[k]},\A_{[1:N]}^{[k]})\nonumber\\
&\overset{(b)}{=}H(\A_{[1:N]}^{[k]}|\Q_{[1:N]}^{[k]})-H(\A_{[1:N]}^{[k]}|\W_{[1:K]\backslash k},\Q_{[1:N]}^{[k]})+I(\W_{[1:K]\backslash k};W_{k}|\Q_{[1:N]}^{[k]},\A_{[1:N]}^{[k]})\nonumber\\
&=H(\A_{[1:N]}^{[k]}|\Q_{[1:N]}^{[k]})-H(\A_{[1:N]}^{[k]},W_{k}|\W_{[1:K]\backslash k},\Q_{[1:N]}^{[k]})+H(W_{k}|\W_{[1:K]\backslash k},\Q_{[1:N]}^{[k]},\A_{[1:N]}^{[k]})\nonumber\\
&\qquad+I(\W_{[1:K]\backslash k};W_{k}|\Q_{[1:N]}^{[k]},\A_{[1:N]}^{[k]})\nonumber\\
&\overset{}{\leq}D-H(W_{k}|\W_{[1:K]\backslash k},\Q_{[1:N]}^{[k]})-H(\A_{[1:N]}^{[k]}|\W_{[1:K]},\Q_{[1:N]}^{[k]})+H(W_{k}|\Q_{[1:N]}^{[k]},\A_{[1:N]}^{[k]})\nonumber\\
&\overset{(c)}{=}D-L+o(L),
\end{align}
where $(a)$ follows from the chain rule of mutual information and from the fact that the messages are i.i.d., $(b)$ follows from \eqref{cons3}
where queries are not functions of the messages; $(c)$ follows from \eqref{cons4} where answers are functions of the messages and the corresponding queries and also from the decodability constraint in \eqref{eq:decoding-const}, where $W_{k}$ is decodable from $\Q_{[1:N]}^{[k]}$ and $\A_{[1:N]}^{[k]}$. Summing up the obtained bound in \eqref{eq:lem2-step1} over $k\in[1:K]$, we get the following bound:
\begin{align}
\label{eq:lem2-step2}
D&\geq L+\frac{1}{K}\sum_{k=1}^K I(\W_{[1:K]\backslash k};\Q_{[1:N]}^{[k]},\A_{[1:N]}^{[k]}|W_{k})+o(L)\nonumber\\
&= L+\frac{1}{K}\frac{1}{N!} \sum_{k=1}^K \sum_{\bsigma\in [N!]} I(\W_{[1:K]\backslash k};\Q_{\bsigma_{[1:N]}}^{[k]},\A_{\bsigma_{[1:N]}}^{[k]}|W_{k})+o(L)\nonumber\\
&\overset{(a)}{=} L\hspace{-1pt} +\hspace{-1pt} \frac{1}{K}\frac{1}{N!} \sum_{n=1}^N   \sum_{k=1}^K \sum_{\bsigma\in [N!]} \hspace{-6pt} I(\W_{[1:K]\backslash k};Q_{\sigma_n}^{[k]},A_{\sigma_n}^{[k]}|\W_{k},\Q_{\bsigma_{[1:n-1]}}^{[k]},\A_{\bsigma_{[1:n-1]}}^{[k]})\hspace{-1pt} +\hspace{-1pt} o(L)\nonumber\\
&\overset{(b)}{\geq} L\hspace{-1pt} +\hspace{-1pt} \frac{1}{K}\frac{1}{N!} \sum_{n=1}^N   \sum_{k=1}^K \sum_{\bsigma\in [N!]} \hspace{-6pt} I(\W_{[1:K]\backslash k};Q_{\sigma_n}^{[k]},A_{\sigma_n}^{[k]}|\W_{k},\Z_{\bsigma_{[1:n-1]}})\hspace{-1pt} +\hspace{-1pt} o(L)\nonumber\\
&\overset{(c)}{=} L\hspace{-1pt} +\hspace{-1pt} \frac{1}{K(K-1)}\frac{1}{N!} \sum_{n=1}^N   \sum_{k=1}^K \sum_{j\in[1:K]\setminus k} \sum_{\bsigma\in [N!]} \hspace{-6pt} I(\W_{[1:K]\backslash k};Q_{\sigma_n}^{[j]},A_{\sigma_n}^{[j]}|\W_{k},\Z_{\bsigma_{[1:n-1]}})\hspace{-1pt} +\hspace{-1pt} o(L)\nonumber\\
&= L\hspace{-1pt} +\hspace{-1pt} \frac{1}{K\binom{K-1}{1}}\frac{1}{N!} \sum_{n=1}^N    \sum_{\substack{\mathcal{K}\subseteq[1:K]\\ \vert \mathcal{K}\vert =1}}\sum_{j\in[1:K]\setminus \mathcal{K}} \sum_{\bsigma\in [N!]} \hspace{-6pt} I(\W_{[1:K]\backslash \mathcal{K}};Q_{\sigma_n}^{[j]},A_{\sigma_n}^{[j]}|\W_{\mathcal{K}},\Z_{\bsigma_{[1:n-1]}})\hspace{-1pt} +\hspace{-1pt} o(L)\nonumber\\
&\overset{(d)}{=}L+ \sum_{n=1}^N   \sum_{\substack{\mathcal{K}\subseteq[1:K]\\ \vert \mathcal{K}\vert =1}} \sum_{j\in[1:K]\setminus \mathcal{K}} \hspace{-5pt}\frac{(N-n)!(n-1)!}{K\binom{K-1}{1}N!}\hspace{-3pt}\sum_{\mathcal{N}\subseteq[1:N]\atop|\mathcal{N}|=n-1}\sum_{i\in [1:N]\backslash \mathcal{N}}\hspace{-5pt} I(\W_{[1:K]\backslash \mathcal{K}};Q_i^{[j]},A_i^{[j]}|\W_{\mathcal{K}},\Z_{\mathcal{N}})\hspace{-1pt} + \hspace{-1pt}o(L)\nonumber\\
&=L+ \sum_{n=0}^{N-1}\frac{1}{KN\binom{N-1}{n}\binom{K-1}{1}} \sum_{\substack{\mathcal{K}\subseteq[1:K]\\ \vert \mathcal{K}\vert =1}} \sum_{j\in[1:K]\setminus \mathcal{K}}\sum_{\mathcal{N}\subseteq[1:N]\atop|\mathcal{N}|=n}\sum_{i\in [1:N]\backslash \mathcal{N}}\hspace{-5pt} I(\W_{[1:K]\backslash \mathcal{K}};Q_i^{[j]},A_i^{[j]}|\W_{\mathcal{K}},\Z_{\mathcal{N}}) \hspace{-1pt}+\hspace{-1pt} o(L)\nonumber\\
&\overset{(e)}{=}L+\sum_{n=0}^{N-1}T(n,1)+o(L),
\end{align}
where $(a)$ follows from chain rule of mutual information; $(b)$ follows from Lemma ~\ref{lem1}; $(c)$ follows from the privacy constraint in \eqref{eq:privacy-const} where the individual queries and answers are invariant with respect to the requested message index; $(d)$ follows from the symmetry with respect to the summation indexes, where for every set $\mathcal{N}\subseteq[1:N]$ of size $(n-1)$ and every $i\in[1:N]\backslash \mathcal{N}$, the number of permutations $\sigma$ that lead to the mutual information $I(\W_{[1:K]\backslash \mathcal{K}};Q_i^{[j]},A_i^{[j]}|\W_{\mathcal{K}},\Z_{\mathcal{N}})$ is $(N-n)!(n-1)!$; and $(e)$ follows from the definition of $T(n,k)$ in \eqref{eq:def-T_nk}. 
\end{proof}

In order to utilize the bound developed in Lemma~\ref{lem2}, we further lower bound the function $T(n,k)$ in the following Lemma. This lower bound on $T(n,k)$ has an interesting recursive structure, which in turn allows us to leverage the boundary conditions \eqref{eq:BC-T-nk} of the function $T(n,k)$ and thus obtain a closed-form lower bound on the download cost. 

\begin{lemma}\label{lem3}
The function $T(n,k)$ is lower bounded as follows:
\begin{align}
T(n,k)\geq \frac{1}{N-n}\left[\sum_{n'=n}^{N-1}T(n',k+1)+\lambda_{(n,k)}L\right] + o(L),
\end{align}
where $\lambda_{(n,k)}$ as defined in \eqref{eq:def-L_nk},
\begin{align}
\label{eq:def-L_nk2}
\lambda_{(n,k)}\ \overset{\Delta}{=}\ \frac{1}{KL\binom{K-1}{k}\binom{N}{n}} \sum_{\mathcal{K}\subseteq[1:K] \atop \left|\mathcal{K}\right|=k} \sum_{\mathcal{N}\subseteq[1:N]\atop\left|\mathcal{N}\right|=n}\sum_{j\in [1:K]\backslash \mathcal{K}} H(W_j|\Z_{\mathcal{N}},\W_{\mathcal{K}}),
\end{align}
for $n\in[0:N]$ and $k\in[0:K]$. 

%We notice that when $n=N$ or $k=K$, then we get all the messages in the conditioning of the entropy term above, and therefore we get the following boundary conditions on $\lambda_{(n,k)}$:
%\begin{align}
%\label{eq:BC-L-nk}
%\lambda_{(n=N,k)} = 0,\  \forall k\in[0:K], \qquad \lambda_{(n,k=K)} = 0,\ \forall n\in[0:N].
%\end{align}
%We further notice that for $n=0$ and all $k\in[0:K]$, we only have messages in the conditioning which are i.i.d., therefore, we get another set  of boundary conditions on $\lambda_{(n,k)}$:
%\begin{align}
%\label{eq:BC-L-nk2}
%\lambda_{(n=0,k)} &= \frac{1}{KL\binom{K-1}{k}\binom{N}{n}} \sum_{\mathcal{K}\subseteq[1:K] \atop \left|\mathcal{K}\right|=k} \sum_{\mathcal{N}\subseteq[1:N]\atop\left|\mathcal{N}\right|=n}\sum_{j\in [1:K]\backslash \mathcal{K}} H(W_j)\nonumber\\
%&=\frac{1}{KL\binom{K-1}{k}\binom{N}{n}} \sum_{j=1}^K \sum_{\mathcal{K}\subseteq[1:K]\backslash j \atop \left|\mathcal{K}\right|=k} \sum_{\mathcal{N}\subseteq[1:N]\atop\left|\mathcal{N}\right|=n} L \ = \ 1
%,\quad  \forall k\in[0:K].
%\end{align}
\end{lemma}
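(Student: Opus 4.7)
The proof will generalize the recursion step developed in Example~\ref{ex:thm1}, specifically the passage from \eqref{lemma1_5} to the analysis of Term$_1$ in \eqref{lemma1_6}--\eqref{lemma1_8}. Fix $\mathcal{K}, \mathcal{N}, j$ with $|\mathcal{K}|=k$, $|\mathcal{N}|=n$, $j\in[1:K]\setminus\mathcal{K}$. My plan is to lower bound the partial sum $\sum_{i\notin\mathcal{N}} I(\W_{[1:K]\setminus\mathcal{K}}; Q_i^{[j]}, A_i^{[j]} | \W_{\mathcal{K}}, \Z_{\mathcal{N}})$ inside $T(n,k)$ in two stages: first, extract one fresh message entropy $H(W_{j'} | \W_{\mathcal{K}}, \Z_{\mathcal{N}})$ (which will yield $\lambda_{(n,k)}$ upon averaging over $j'$), and second, expand the remaining mutual-information residue into single-database pieces of the form appearing in $T(n', k+1)$, with $\mathcal{K}' := \mathcal{K} \cup \{j'\}$ and $\mathcal{N}^* \supseteq \mathcal{N}$.

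For the first stage, I invoke the privacy constraint \eqref{eq:privacy-const0} individually on each $i$ to swap $(Q_i^{[j]}, A_i^{[j]})$ for $(Q_i^{[j']}, A_i^{[j']})$ with an arbitrary $j'\in[1:K]\setminus\mathcal{K}$, then lower bound each summand by $H(A_i^{[j']}|\W_{\mathcal{K}}, \Z_{\mathcal{N}}, Q_i^{[j']})$ using \eqref{cons4}. Enlarging conditioning to include $\Q_{[1:N]}^{[j']}$ and applying $\sum_i H(A_i^{[j']}|\cdot) \geq H(\A_{[1:N]\setminus\mathcal{N}}^{[j']}|\cdot)$, I use that queries are independent of messages (from \eqref{cons3}) and that $\A_{\mathcal{N}}^{[j']}$ is computable from $(\Z_{\mathcal{N}}, \Q_{\mathcal{N}}^{[j']})$ to re-form
\begin{align*}
\sum_{i\notin\mathcal{N}} I(\W_{[1:K]\setminus\mathcal{K}}; Q_i^{[j]}, A_i^{[j]} | \W_{\mathcal{K}}, \Z_{\mathcal{N}}) \geq I(\W_{[1:K]\setminus\mathcal{K}}; \Q_{[1:N]}^{[j']}, \A_{[1:N]}^{[j']} | \W_{\mathcal{K}}, \Z_{\mathcal{N}}).
\end{align*}
Decodability \eqref{eq:decoding-const} lets me add $W_{j'}$ inside the mutual information at cost $o(L)$, and the chain rule splits the result into $H(W_{j'}|\W_{\mathcal{K}},\Z_{\mathcal{N}}) + I(\W_{[1:K]\setminus\mathcal{K}'}; \Q_{[1:N]\setminus\mathcal{N}}^{[j']}, \A_{[1:N]\setminus\mathcal{N}}^{[j']} | \W_{\mathcal{K}'}, \Z_{\mathcal{N}})$.

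In the second stage, I mimic the permutation averaging from \eqref{lemma1_2_1}--\eqref{lemma1_2_2}: for each permutation $\bsigma$ of $[1:N]\setminus\mathcal{N}$, chain rule plus Lemma~\ref{lem1} gives
\begin{align*}
I(\W_{[1:K]\setminus\mathcal{K}'}; \Q_{[1:N]\setminus\mathcal{N}}^{[j']}, \A_{[1:N]\setminus\mathcal{N}}^{[j']} | \W_{\mathcal{K}'}, \Z_{\mathcal{N}}) \geq \sum_{m=1}^{N-n} I(\W_{[1:K]\setminus\mathcal{K}'}; Q_{\sigma_m}^{[j']}, A_{\sigma_m}^{[j']} | \W_{\mathcal{K}'}, \Z_{\mathcal{N} \cup \bsigma_{[1:m-1]}}).
\end{align*}
Averaging over all $(N-n)!$ permutations and regrouping by $n' = n + m - 1$ converts this to a weighted sum over $n' \in \{n,\ldots,N-1\}$, over $\mathcal{N}^* \supset \mathcal{N}$ with $|\mathcal{N}^*|=n'$, and over $i \notin \mathcal{N}^*$, of the single-database pieces $I(\W_{[1:K]\setminus\mathcal{K}'}; Q_i^{[j']}, A_i^{[j']} | \W_{\mathcal{K}'}, \Z_{\mathcal{N}^*})$, with coefficients $c_{n',n} := (n'-n)!(N-n'-1)!/(N-n)!$.

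Finally, I average over $j'\in[1:K]\setminus\mathcal{K}$, sum over $(\mathcal{K}, \mathcal{N}, j)$ of the prescribed sizes, re-symmetrize the query index via privacy so the inner terms match those defining $T(n',k+1)$, and divide through by $NK\binom{K-1}{k}\binom{N-1}{n}$. The main obstacle will be the combinatorial bookkeeping, hinging on the identity
\begin{align*}
c_{n',n}\,\binom{n'}{n}\,\binom{N-1}{n'} = \frac{\binom{N}{n}}{N},
\end{align*}
which is independent of $n'$, combined with $\frac{(k+1)\,K\binom{K-1}{k+1}}{K-k-1} = K\binom{K-1}{k}$; together these make all prefactors collapse so that the recursion coefficient on both $\sum_{n'=n}^{N-1} T(n',k+1)$ and $\lambda_{(n,k)}L$ becomes the uniform $\frac{1}{N-n}$ claimed in Lemma~\ref{lem3}.
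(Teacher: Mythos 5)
Your proposal is correct and follows essentially the same route as the paper's proof of Lemma~\ref{lem3}: you extract $H(W_{j'}|\W_{\mathcal{K}},\Z_{\mathcal{N}})$ via the decodability constraint to form the $\frac{1}{N-n}\lambda_{(n,k)}L$ term, and then use the chain rule, Lemma~\ref{lem1}, permutation averaging, and a privacy re-indexing of the query superscript to assemble $\frac{1}{N-n}\sum_{n'=n}^{N-1}T(n',k+1)$, exactly as the paper does through its intermediate quantity $\tilde{T}(n,k)$. The only differences are cosmetic bookkeeping choices (an extra harmless privacy swap $j\to j'$ up front, and counting via supersets $\mathcal{N}^*\supseteq\mathcal{N}$ with coefficients $c_{n',n}$ rather than permutations of $[1:N]$ with the factor $n'!(N-n'-1)!$), and your identities, e.g.\ $c_{n',n}\binom{n'}{n}\binom{N-1}{n'}=\binom{N}{n}/N$, do verify and reproduce the paper's prefactor collapse.
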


\begin{proof}

We start by bounding $T(n,k)$ defined in \eqref{eq:BC-T-nk} as follows,
\begin{align}
\label{eq:lem3_proof}
&T(n,k)\nonumber\\
&=\frac{1}{NK\binom{K-1}{k}\binom{N-1}{n}}\sum_{\mathcal{K}\subseteq[1:K]\atop \left|\mathcal{K}\right|=k}\sum_{\mathcal{N}\subseteq[1:N]\atop\left|\mathcal{N}\right|=n}\sum_{j\in [1:K]\backslash \mathcal{K}}\sum_{i\in [1:N]\backslash\mathcal{N}}I(\W_{{[1:K]}\backslash\mathcal{K}};Q^{[j]}_i,A^{[j]}_i|\W_{\mathcal{K}},\Z_{\mathcal{N}})\nonumber\\
&\overset{(a)}{=}\frac{1}{NK\binom{K-1}{k}\binom{N-1}{n}}\sum_{\mathcal{K}\subseteq[1:K]\atop \left|\mathcal{K}\right|=k}\sum_{\mathcal{N}\subseteq[1:N]\atop\left|\mathcal{N}\right|=n}\sum_{j\in [1:K]\backslash \mathcal{K}}\sum_{i\in [1:N]\backslash\mathcal{N}}I(\W_{{[1:K]}\backslash\mathcal{K}};A^{[j]}_i|\W_{\mathcal{K}},\Z_{\mathcal{N}},Q^{[j]}_i)\nonumber\\
&\overset{(b)}{=}\frac{1}{NK\binom{K-1}{k}\binom{N-1}{n}}\sum_{\mathcal{K}\subseteq[1:K]\atop \left|\mathcal{K}\right|=k}\sum_{\mathcal{N}\subseteq[1:N]\atop\left|\mathcal{N}\right|=n}\sum_{j\in [1:K]\backslash \mathcal{K}}\sum_{i\in [1:N]\backslash\mathcal{N}} H(A^{[j]}_i|\W_{\mathcal{K}},\Z_{\mathcal{N}},Q^{[j]}_i)\nonumber\\
&\overset{(c)}{\geq} \frac{1}{NK\binom{K-1}{k}\binom{N-1}{n}}\sum_{\mathcal{K}\subseteq[1:K]\atop \left|\mathcal{K}\right|=k}\sum_{\mathcal{N}\subseteq[1:N]\atop\left|\mathcal{N}\right|=n}\sum_{j\in [1:K]\backslash \mathcal{K}} \sum_{i\in [1:N]\backslash\mathcal{N}} H(A^{[j]}_i|\W_{\mathcal{K}},\Z_{\mathcal{N}},\Q^{[j]}_{[1:N]})\nonumber\\
&{\geq} \frac{1}{NK\binom{K-1}{k}\binom{N-1}{n}}\sum_{\mathcal{K}\subseteq[1:K]\atop \left|\mathcal{K}\right|=k}\sum_{\mathcal{N}\subseteq[1:N]\atop\left|\mathcal{N}\right|=n}\sum_{j\in [1:K]\backslash \mathcal{K}} H(\A^{[j]}_{[1:N]\backslash\mathcal{N}}|\W_{\mathcal{K}},\Z_{\mathcal{N}},\Q^{[j]}_{[1:N]})\nonumber\\
&\overset{(d)}{=}\frac{1}{NK\binom{K-1}{k}\binom{N-1}{n}}\sum_{\mathcal{K}\subseteq[1:K]\atop \left|\mathcal{K}\right|=k}\sum_{\mathcal{N}\subseteq[1:N]\atop\left|\mathcal{N}\right|=n}\sum_{j\in [1:K]\backslash \mathcal{K}} I(\W_{[1:K]\backslash\mathcal{K}};\A^{[j]}_{[1:N]\backslash\mathcal{N}}|\W_{\mathcal{K}},\Z_{\mathcal{N}},\Q^{[j]}_{[1:N]})\nonumber\\
&\overset{(e)}{=}\frac{1}{NK\binom{K-1}{k}\binom{N-1}{n}}\sum_{\mathcal{K}\subseteq[1:K]\atop \left|\mathcal{K}\right|=k}\sum_{\mathcal{N}\subseteq[1:N]\atop\left|\mathcal{N}\right|=n}\sum_{j\in [1:K]\backslash \mathcal{K}} I(\W_{[1:K]\backslash\mathcal{K}};\Q^{[j]}_{[1:N]},\A^{[j]}_{[1:N]\backslash\mathcal{N}}|\W_{\mathcal{K}},\Z_{\mathcal{N}})\nonumber\\
&\overset{(f)}{\geq}\frac{1}{NK\binom{K-1}{k}\binom{N-1}{n}}\sum_{\mathcal{K}\subseteq[1:K]\atop \left|\mathcal{K}\right|=k}\sum_{\mathcal{N}\subseteq[1:N]\atop\left|\mathcal{N}\right|=n}\sum_{j\in [1:K]\backslash \mathcal{K}} I(\W_{[1:K]\backslash\mathcal{K}};W_j,\Q^{[j]}_{[1:N]},\A^{[j]}_{[1:N]\backslash\mathcal{N}}|\W_{\mathcal{K}},\Z_{\mathcal{N}}) + o(L)\nonumber\\
%&=\frac{1}{NK\binom{K-1}{k}\binom{N-1}{n}}\sum_{\mathcal{K}\subseteq[1:K]\atop \left|\mathcal{K}\right|=k}\sum_{\mathcal{N}\subseteq[1:N]\atop\left|\mathcal{N}\right|=n}\sum_{j\in [1:K]\backslash \mathcal{K}} I(W_{[1:K]\backslash\mathcal{K}};W_j|W_{\mathcal{K}},Z_{\mathcal{N}})\nonumber\\
%&+\frac{1}{NK\binom{K-1}{k}\binom{N-1}{n}}\sum_{\mathcal{K}\subseteq[1:K]\atop \left|\mathcal{K}\right|=k}\sum_{\mathcal{N}\subseteq[1:N]\atop\left|\mathcal{N}\right|=n}\sum_{j\in [1:K]\backslash \mathcal{K}} I(W_{[1:K]\backslash\mathcal{K}};Q^{[j]}_{[1:N]},A^{[j]}_{[1:N]\backslash\mathcal{N}}|W_{\mathcal{K}},Z_{\mathcal{N}},W_j)\nonumber\\
&=\frac{1}{NK\binom{K-1}{k}\binom{N-1}{n}}\sum_{\mathcal{K}\subseteq[1:K]\atop \left|\mathcal{K}\right|=k}\sum_{\mathcal{N}\subseteq[1:N]\atop\left|\mathcal{N}\right|=n}\sum_{j\in [1:K]\backslash \mathcal{K}} I(\W_{[1:K]\backslash\mathcal{K}};\Q^{[j]}_{[1:N]},\A^{[j]}_{[1:N]\backslash\mathcal{N}}|\W_{\mathcal{K}},\Z_{\mathcal{N}},W_j)\nonumber\\
 &\hspace{15pt}+\frac{1}{NK\binom{K-1}{k}\binom{N-1}{n}}\sum_{\mathcal{K}\subseteq[1:K]\atop \left|\mathcal{K}\right|=k}\sum_{\mathcal{N}\subseteq[1:N]\atop\left|\mathcal{N}\right|=n}\sum_{j\in [1:K]\backslash \mathcal{K}} H(W_j|\W_{\mathcal{K}},\Z_{\mathcal{N}})+o(L)\nonumber\\
 &\overset{(g)}{=}\frac{1}{NK\binom{K-1}{k}\binom{N-1}{n}}\sum_{\mathcal{K}\subseteq[1:K]\atop \left|\mathcal{K}\right|=k}\sum_{\mathcal{N}\subseteq[1:N]\atop\left|\mathcal{N}\right|=n}\sum_{j\in [1:K]\backslash \mathcal{K}} I(\W_{[1:K]\backslash (\mathcal{K} \cup j)};\Q^{[j]}_{[1:N]\backslash\mathcal{N}},\A^{[j]}_{[1:N]\backslash\mathcal{N}}|\W_{(\mathcal{K}\cup j)},\Z_{\mathcal{N}})\nonumber\\
&\hspace{15pt}+\frac{1}{N-n}\lambda_{(n,k)}L+o(L)\nonumber\\
 &\overset{}{=}\underbrace{\frac{1}{NK\binom{K-1}{k}\binom{N-1}{n}}\sum_{\mathcal{K}\subseteq[1:K]\atop \left|\mathcal{K}\right|=k+1}\sum_{\mathcal{N}\subseteq[1:N]\atop\left|\mathcal{N}\right|=n}\sum_{j\in  \mathcal{K}} I(\W_{[1:K]\backslash \mathcal{K} };\Q^{[j]}_{[1:N]\backslash\mathcal{N}},\A^{[j]}_{[1:N]\backslash\mathcal{N}}|\W_{\mathcal{K}},\Z_{\mathcal{N}})}_{\tilde{T}(n,k)}\nonumber\\
&\hspace{15pt}+\frac{1}{N-n}\lambda_{(n,k)}L+o(L),
 \end{align}
where $(a)$ and $(e)$ follow from the fact that queries are independent from the messages; $(b)$ and $(d)$ follow from the fact that answers are functions of all the messages; $(c)$ is because conditioning reduces entropy; $(f)$ follows from the decoding constraint in \eqref{eq:decoding-const} where $W_{j}$ is decodable from $\Q_{[1:N]}^{[j]}$, $\A_{[1:N]\setminus \mathcal{N}}^{[j]}$ and $\Z_{\mathcal{N}}$; and $(g)$ follows from the definition of $\lambda_{(n,k)}$ in \eqref{eq:def-L_nk} and since the queries $\Q^{[j]}_{\mathcal{N}}$ are independent from the messages.
 We further lower bound $T(n,k)$ by bounding the term $\tilde{T}(n,k)$ in \eqref{eq:lem3_proof} as follows,
 \begin{align}
 \label{eq:tilde_T}
 &\tilde{T}(n,k)=\frac{1}{NK\binom{K-1}{k}\binom{N-1}{n}}\sum_{\mathcal{K}\subseteq[1:K]\atop \left|\mathcal{K}\right|=k+1}\sum_{\mathcal{N}\subseteq[1:N]\atop\left|\mathcal{N}\right|=n}\sum_{j\in  \mathcal{K}} I(\W_{[1:K]\backslash \mathcal{K} };\Q^{[j]}_{[1:N]\backslash\mathcal{N}},\A^{[j]}_{[1:N]\backslash\mathcal{N}}|\W_{\mathcal{K}},\Z_{\mathcal{N}})\nonumber\\
 &\overset{(a)}{=} \frac{1}{NK\binom{K-1}{k}\binom{N-1}{n}} \sum_{\mathcal{K}\subseteq[1:K]\atop \left|\mathcal{K}\right|=k+1} \sum_{j\in  \mathcal{K}} \frac{1}{n!(N-n)!}\sum_{\sigma\in [N!]} I(\W_{[1:K]\backslash\mathcal{K}};\Q^{[j]}_{\bsigma_{[n+1:N]}},\A^{[j]}_{\bsigma_{[n+1:N]}}|W_{\mathcal{K}},\Z_{\bsigma_{[1:n]}})\nonumber\\
&\overset{(b)}{=}\hspace{-2pt} \frac{1}{N!K\binom{K-1}{k}(N-n)}\hspace{-5pt} \sum_{\mathcal{K}\subseteq[1:K]\atop \left|\mathcal{K}\right|=k+1}\hspace{-2pt}  \sum_{j\in \mathcal{K}}\hspace{-1pt}  \sum_{\sigma\in [N!]}\hspace{-1pt} \sum_{n'=n}^{N-1} \hspace{-3pt} I(\W_{[1:K]\backslash\mathcal{K}};Q^{[j]}_{\sigma_{n'+1}},A^{[j]}_{\sigma_{n'+1}}|\W_{\mathcal{K}},\Z_{\bsigma_{[1:n]}}, \Q^{[j]}_{\bsigma_{[n+1:n']}},\A^{[j]}_{\bsigma_{[n+1:n']}})\nonumber\\
&\overset{(c)}{\geq} \frac{1}{N!K\binom{K-1}{k}(N-n)} \sum_{\mathcal{K}\subseteq[1:K]\atop \left|\mathcal{K}\right|=k+1} \sum_{j\in \mathcal{K}} \sum_{\sigma\in [N!]} \sum_{n'=n}^{N-1}  I(\W_{[1:K]\backslash\mathcal{K}};Q^{[j]}_{\sigma_{n'+1}},A^{[j]}_{\sigma_{n'+1}}|\W_{\mathcal{K}},\Z_{\bsigma_{[1:n']}})\nonumber\\
&\overset{(d)}{=} \sum_{n'=n}^{N-1}\frac{1}{N!K\binom{K-1}{k}(N-n)}\sum_{\mathcal{K}\subseteq[1:K]\atop \left|\mathcal{K}\right|=k+1}\sum_{j\in \mathcal{K}}  n'!(N-n'-1)!\hspace{-3pt} \sum_{\mathcal{N}\subseteq [1:N]\atop |\mathcal{N}|=n'}\sum_{i\in [1:N]\backslash\mathcal{N}} \hspace{-8pt}I(\W_{[1:K]\backslash\mathcal{K}};Q_i^{[j]},A_i^{[j]}|\W_{\mathcal{K}},\Z_{\mathcal{N}})\nonumber\\
&\overset{(e)}{=} \frac{1}{N-n}\sum_{n'=n}^{N-1}\frac{1}{NK\binom{K-1}{k}\binom{N-1}{n'}}\sum_{\mathcal{N}\subseteq [1:N]\atop |\mathcal{N}|=n'}\sum_{i\in [1:N]\backslash\mathcal{N}}\sum_{\mathcal{K}\subseteq[1:K]\atop \left|\mathcal{K}\right|=k+1}\sum_{j\in \mathcal{K}} \sum_{j'\in [1:K]\backslash \mathcal{K}} \hspace{-8pt}\frac{I(\W_{[1:K]\backslash\mathcal{K}};Q_i^{[j']},A_i^{[j']}|\W_{\mathcal{K}},\Z_{\mathcal{N}})}{K-k-1}\nonumber\\
&{=} \frac{1}{N-n}\sum_{n'=n}^{N-1}\frac{1}{NK\binom{K-1}{k+1}\binom{N-1}{n'}}\sum_{\mathcal{N}\subseteq [1:N]\atop |\mathcal{N}|=n'}\sum_{i\in [1:N]\backslash\mathcal{N}}\sum_{\mathcal{K}\subseteq[1:K]\atop \left|\mathcal{K}\right|=k+1} \sum_{j'\in [1:K]\backslash \mathcal{K}} \hspace{-8pt}{I(\W_{[1:K]\backslash\mathcal{K}};Q_i^{[j']},A_i^{[j']}|\W_{\mathcal{K}},\Z_{\mathcal{N}})}\nonumber\\
&\overset{(f)}{=}\frac{1}{N-n}\sum_{n'=n}^{N-1}T(n',k+1),
\end{align}
where $(a)$ and $(d)$ follow from a similar argument to step $(d)$ in \eqref{eq:lem2-step2}; $(b)$ follows by applying the chain rule of entropy; $(c)$ follows by applying Lemma \ref{lem1};
$(e)$ follows from the privacy constraint in \eqref{eq:privacy-const} where the individual queries and answers are invariant with respect to the requested message index; and $(f)$ follows from the definition of $T(n,k)$ in \eqref{eq:def-T_nk}.
By applying the lower bound on the term $\tilde{T}(n,k)$ in \eqref{eq:tilde_T} to the lower bound on ${T}(n,k)$ in \eqref{eq:lem3_proof} we conclude the proof of Lemma~\ref{lem3}.
\end{proof}

Now, we use the recursive lower bound on $T(n,k)$ given in Lemma~\ref{lem3} to further lower bound the download cost on $D$ obtained in Lemma~\ref{lem2} as follows:
\begin{align}
\label{eq:bound-step1}
D&\geq L+\sum_{n_1=0}^{N-1}T(n_1,1)+o(L)\nonumber\\
&\geq L + \sum_{n_1=0}^{N-1} \frac{1}{N-n_1}\left(\lambda_{(n_1,1)}L + \sum_{n_2=n_1}^{N-1} T(n_2,2)  \right)+o(L)\nonumber\\
&\geq L + \sum_{n_1=0}^{N-1} \frac{\lambda_{(n_1,1)}L}{N-n_1} + \sum_{n_1=0}^{N-1}\sum_{n_2=n_1}^{N-1} \frac{1}{(N-n_1)(N-n_2)} \left(\lambda_{(n_2,2)}L + \sum_{n_3=n_2}^{N-1} T(n_3,3) \right)+o(L)\nonumber\\
&\hspace{7pt}\vdots \nonumber\\
& {\geq} \hspace{-1pt}L\hspace{-2pt} +\hspace{-1pt} \sum_{n_1=0}^{N-1} \hspace{-1pt}\frac{\lambda_{(n_1,1)} L}{N-n_1}\hspace{-1pt}+\hspace{-2pt} \sum_{n_1=0}^{N-1}\sum_{n_2=n_1}^{N-1}\hspace{-1pt} \frac{\lambda_{(n_2,2)}L}{(N-n_1)(N-n_2)}  \hspace{-1pt}+\hspace{-2pt}\sum_{n_1=0}^{N-1}\sum_{n_{2}=n_{1}}^{N-1}\sum_{n_{3}=n_{2}}^{N-1}\hspace{-1pt} \frac{ \lambda_{(n_{3},3)}L}{(N-n_1)(N-n_2)(N-n_3)} \nonumber \\
&\hspace{15pt}+\cdots+ \sum_{n_1=0}^{N-1}\ldots\sum_{n_{K-1}=n_{K-2}}^{N-1} \frac{\lambda_{(n_{K-1},K-1)}L+\sum_{n_{K}=n_{K-1}}^{N-1} {T(n_K,K)}}{(N-n_1) \times\cdots\times (N-n_{K-1})}+o(L)\nonumber\\
& \overset{(a)}{=} \hspace{-1pt}L\hspace{-2pt} +\hspace{-1pt} \sum_{n_1=0}^{N-1} \hspace{-1pt}\frac{\lambda_{(n_1,1)} L}{N-n_1}\hspace{-1pt}+\hspace{-2pt} \sum_{n_1=0}^{N-1}\sum_{n_2=n_1}^{N-1}\hspace{-1pt} \frac{\lambda_{(n_2,2)}L}{(N-n_1)(N-n_2)}  \hspace{-1pt}+\hspace{-2pt}\sum_{n_1=0}^{N-1}\sum_{n_{2}=n_{1}}^{N-1}\sum_{n_{3}=n_{2}}^{N-1}\hspace{-1pt} \frac{ \lambda_{(n_{3},3)}L}{(N-n_1)(N-n_2)(N-n_3)} \nonumber \\
&\hspace{15pt}+\cdots+ \sum_{n_1=0}^{N-1}\ldots\sum_{n_{K-1}=n_{K-2}}^{N-1} \frac{\lambda_{(n_{K-1},K-1)}L}{(N-n_1) \times\cdots\times (N-n_{K-1})}+o(L)\nonumber\\
& \overset{(b)}{=} L + \sum_{n_1=1}^{N} \frac{\lambda_{(N-n_1,1)} L}{n_1}+ \sum_{n_1=1}^{N}\sum_{n_2=1}^{N-n_1} \frac{\lambda_{(N-n_2,2)}L}{n_1n_2}  +\cdots+ \sum_{n_1=1}^{N}\ldots\sum_{n_{K-1}=1}^{N-n_{K-2}} \frac{\lambda_{(N-n_{K-1},K-1)}L}{n_1 \times\cdots\times n_{K-1}}  +o(L)\nonumber\\
& \overset{(c)}{=} L + \sum_{n_1=1}^{N} \frac{\lambda_{(N-n_1,1)}L}{n_1}+ \sum_{n_1=1}^{N}\sum_{n_2=n_1}^{N} \frac{\lambda_{(N-n_1,2)}L}{n_1n_2}  +\cdots+ \sum_{n_1=1}^{N}\ldots\hspace{-6pt}\sum_{n_{K-1}=n_{K-2}}^{N}\hspace{-4pt} \frac{\lambda_{(N-n_1,K-1)}L}{n_1 \times\cdots\times n_{K-1}}  +o(L),
\end{align}
where $(a)$ follows by applying the boundary condition on $T(n,k)$ as given in \eqref{eq:BC-T-nk}, where $T(n,k=K)=0$ for $n\in[0:N-1]$; and $(b)$ and $(c)$ follow by simple changing of the summation indexes.
Taking the limit $L\rightarrow \infty$, we obtain the bound on $\frac{D}{L}$, which is also a valid bound on the optimal normalized download cost, $D^*(\mu)$, as defined in \eqref{optimal-download-cost} for $\mu \in[\frac{1}{N},1]$,
since the bound in \eqref{eq:bound-step3} is valid for any achievable pair $(D,L)$. Therefore, we obtain the following bound on $D^*(\mu)$,
\begin{align}
\label{eq:bound-thm1}
D^*(\mu)\geq 1 + \sum_{n_1=1}^{N} \frac{\lambda_{(N-n_1,1)}}{n_1}+ \sum_{n_1=1}^{N}\sum_{n_2=n_1}^{N} \frac{\lambda_{(N-n_1,2)}}{n_1n_2}  +\cdots+ \sum_{n_1=1}^{N}\ldots\hspace{-6pt}\sum_{n_{K-1}=n_{K-2}}^{N}\hspace{-4pt} \frac{\lambda_{(N-n_1,K-1)}}{n_1 \times\cdots\times n_{K-1}},
\end{align}
which completes the proof of Theorem~\ref{theorem1}.

\section{Proof of Theorem \ref{theorem2}: Lower Bounds for Uncoded Storage Constrained  Databases and General $(N,K,\mu)$}
\label{sec:converse}

We now specialize the lower bound in (\ref{eq:bound-step1}) for the case of uncoded storage placement as defined in Section~\ref{re:uncoded-assump}. Using the uncoded storage content given in \eqref{eq:storage-content2}, and since the messages are i.i.d., the following relation is already satisfied,
\begin{align}
\label{eq:const-uncoded}
H(W_j/\Z_{\mathcal{N}},\W_{\mathcal{K}}) =H(W_j/\Z_{\mathcal{N}}) =\sum_{\substack{\mathcal{S} \subseteq [1:N]\setminus \mathcal{N}\\ \vert \mathcal{S}\vert \geq 1 }} \vert W_{j,\mathcal{S}}\vert L,
\end{align}
for every $\mathcal{K}\subseteq [1:K]$, $\mathcal{N}\subseteq [1:N]$, and $j\in[1:K]\setminus \mathcal{K}$. Therefore, the term $\lambda_{(n,k)}$ as defined in \eqref{eq:def-L_nk} can be expressed as,
\begin{align}
\lambda_{(n,k)} &= \frac{1}{K\binom{K-1}{k}\binom{N}{n}}\sum_{\substack{\mathcal{K}\subseteq[1:K]\\ \vert \mathcal{K}\vert =k}} \sum_{\substack{\mathcal{N}\subseteq[1:N]\\ \vert \mathcal{N}\vert =n}} \sum_{j\in[1:K]\setminus \mathcal{K}} H(W_j/\Z_{\mathcal{N}})\nonumber\\
 &=\frac{1}{K\binom{K-1}{k}\binom{N}{n}} \sum_{\substack{\mathcal{N}\subseteq[1:N]\\ \vert \mathcal{N}\vert =n}} \sum_{j=1}^K \sum_{\substack{\mathcal{K}\subseteq[1:K]\setminus \{j\}\\ \vert \mathcal{K}\vert =k}}
 H(W_j/\Z_{\mathcal{N}})=\frac{1}{K\binom{N}{n}} \sum_{\substack{\mathcal{N}\subseteq[1:N]\\ \vert \mathcal{N}\vert =n}} \sum_{j=1}^K H(W_j/\Z_{\mathcal{N}}).
\end{align}
%where $(a)$ follows since every message index, $j\in[1:K]$, in LHS appears whenever $j\not \in\mathcal{K}$ in total number of $\binom{K-1}{k}$ times in the summation over $\mathcal{K}$.
We notice that $\lambda_{(n,k)}$ is independent of $k$, and hence we can define $\lambda_{n} \overset{\Delta}{=} \lambda_{(n,k)}$, for all $k\in [1:K]$. Therefore, we can write the bound in \eqref{eq:bound-step1} as follows,
\begin{align}
\label{eq:bound-step2}
 D&\geq L + \sum_{n_1=1}^{N} \left(\frac{1}{n_1}+ \sum_{n_2=n_1}^{N} \frac{1}{n_1n_2}  +\cdots+ \sum_{n_2=n_1}^{N}\ldots\sum_{n_{K-1}=n_{K-2}}^{N} \frac{1}{n_1 \times\cdots\times n_{K-1}} \right)\lambda_{N-n_1}  +o(L)\nonumber\\
 & = L + \sum_{n_1=1}^{N}  S(n_1,K) \ \lambda_{N-n_1} +o(L),
\end{align}
where $S(n,k)$, for $n\in[1:N]$ and $k\in[1:K]$, is defined as follows,
\begin{align}
\label{eq:def-S_nk}
S(n,k) \overset{\Delta}{=} \frac{1}{n}+ \sum_{n_2=n}^{N} \frac{1}{n n_2}  +\cdots+ \sum_{n_2=n}^{N}\ldots\sum_{n_{k-1}=n_{k-2}}^{N} \frac{1}{n n_2 \times\cdots\times n_{k-1}}.
\end{align}
It is important to notice the following boundary conditions and properties of $S(n,k)$:
\begin{align}\label{eq:BC-S_nk}
&\textbf{Property 1: }
S(n,k=1) = 0, \qquad\qquad\qquad \textbf{Property 2: }S(n,k=2) = \frac{1}{n},\nonumber\\  
&\textbf{Property 3: } nS(n=N,k) = S(n=N,k-1) +1,\nonumber\\  
&\textbf{Property 4: } nS(n,k)-(n+1)S(n+1,k)=S(n,k-1).
\end{align}
The first $3$ properties are straight forward to prove from the definition of $S(n,k)$ in \eqref{eq:def-S_nk}.
The forth property of $S(n,k)$ provides a useful recursive relation and can be proved as follows:
\begin{align}
nS(n,k)&-(n+1)S(n+1,k)\nonumber\\
&\overset{(a)}{=}\left(1+\sum_{n_2=n}^N\frac{1}{n_2}+\sum_{n_2=n}^N\sum_{n_3=n_2}^N\frac{1}{n_2n_3}+\cdots+\sum_{n_2=n}^N\cdots\sum_{n_{k-1}=n_{k-2}}\frac{1}{n_2\times\cdots\times n_{k-1}}\right)\nonumber\\
&\qquad- \left(1+\sum_{n_2=n+1}^N\frac{1}{n_2}+\sum_{n_2=n+1}^N\sum_{n_3=n_2}^N\frac{1}{n_2n_3}+\cdots+\sum_{n_2=n+1}^N\cdots\sum_{n_{k-1}=n_{k-2}}\frac{1}{n_2\times\cdots\times n_{k-1}}\right)\nonumber\\
&=\frac{1}{n}+\sum_{n_3=n}^N\frac{1}{nn_3}+\sum_{n_3=n}^N\sum_{n_4=n_3}^N\frac{1}{nn_3n_4}+\cdots+\sum_{n_3=n}^N\cdots\sum_{n_{k-1}=n_{k-2}}^N\frac{1}{nn_3\times\cdots\times n_{k-1}}\nonumber\\
&\overset{(b)}{=}\frac{1}{n}+\sum_{n_2=n}^N\frac{1}{nn_2}+\sum_{n_2=n}^N\sum_{n_3=n_2}^N\frac{1}{nn_2n_3}+\cdots+\sum_{n_2=n}^N\cdots\sum_{n_{k-2}=n_{k-3}}^N\frac{1}{nn_2\times\cdots\times n_{k-2}}\nonumber\\
&\overset{(c)}{=}S(n,k-1),
\end{align}
where $(a)$ and $(c)$ follow from the definition of $S(n,k)$ in \eqref{eq:def-S_nk}; and $(b)$ follows by simple relabeling of the summation indexes.

Next, we express the $\lambda_n$ term that appears in \eqref{eq:bound-step2} in terms of $x_{\ell}$ as defined in  \eqref{eq:def-x_ell} as follows,
\begin{align}
\label{eq:bound-step3}
\lambda_{n} &=\frac{1}{K\binom{N}{n}} \sum_{\substack{\mathcal{N}\subseteq[1:N]\\ \vert \mathcal{N}\vert =n}} \sum_{k=1}^K  H(W_k/\Z_{\mathcal{N}})\overset{(a)}{=} \frac{1}{K\binom{N}{n}} \sum_{\substack{\mathcal{N}\subseteq[1:N]\\ \vert \mathcal{N}\vert =n}} \sum_{k=1}^K   \sum_{\substack{\mathcal{S} \subseteq [1:N]\setminus \mathcal{N}\\ \vert \mathcal{S}\vert \geq 1 }} \vert W_{k,\mathcal{S}}\vert L\nonumber \\
&=\frac{1}{K\binom{N}{n}} \sum_{\ell = 1}^{N-n}\sum_{k=1}^K   \sum_{\substack{\mathcal{N}\subseteq[1:N]\\ \vert \mathcal{N}\vert =n}}  \sum_{\substack{\mathcal{S} \subseteq [1:N]\setminus \mathcal{N}\\ \vert \mathcal{S}\vert =\ell }} \vert W_{k,\mathcal{S}}\vert L 
=\frac{1}{K\binom{N}{n}} \sum_{\ell = 1}^{N-n}\sum_{k=1}^K \sum_{\substack{\mathcal{S} \subseteq [1:N]\\ \vert \mathcal{S}\vert =\ell }} \sum_{\substack{\mathcal{N}\subseteq[1:N] \setminus \mathcal{S}\\ \vert \mathcal{N}\vert =n}} \vert W_{k,\mathcal{S}}\vert L \nonumber\\
&=\frac{1}{K\binom{N}{n}} \sum_{\ell = 1}^{N-n}\sum_{k=1}^K    \sum_{\substack{\mathcal{S} \subseteq [1:N]\\ \vert \mathcal{S}\vert =\ell }}\binom{N-\ell}{n}  \vert W_{k,\mathcal{S}}\vert L\nonumber\\
&=  \sum_{\ell = 1}^{N-n}   \binom{N-n}{\ell}  \frac{1}{K\binom{N}{\ell}} \sum_{k=1}^K \sum_{\substack{\mathcal{S} \subseteq [1:N]\\ \vert \mathcal{S}\vert =\ell }} \vert W_{k,\mathcal{S}}\vert L = \sum_{\ell = 1}^{N-n}   \binom{N-n}{\ell}  x_{\ell} L,
\end{align}
where $(a)$ follows from \eqref{eq:const-uncoded}.
%; and $(b)$ follows since the term $\vert W_k^{\mathcal{S}} \vert$ appears on the LHS of $(b)$, for some $k\in [1:K]$ and $\mathcal{S} \subseteq [1:N]$ where $\vert {\mathcal{S}} \vert =\ell$, only if ${\mathcal{S}} \cap {\mathcal{N}} = \phi$
Substituting with \eqref{eq:bound-step3}  in \eqref{eq:bound-step2} and taking the limit $L\rightarrow \infty$, we obtain the bound on $\frac{D}{L}$ in terms of $x_{\ell}$ as follows,
\begin{align}
\label{eq:bound-step4}
 \frac{D}{L}&\geq  1 + \sum_{n_1=1}^{N}\sum_{\ell = 1}^{n_1}  \binom{n_1}{\ell}  S(n_1,K) \  x_{\ell} =1 + \sum_{\ell = 1}^{N} \sum_{n_1=\ell}^{N}  \binom{n_1}{\ell}  S(n_1,K) \  x_{\ell} =1 + \sum_{\ell = 1}^{N} \alpha(\ell,K) \  x_{\ell},
\end{align}
where $\alpha(\ell,k)$ for $\ell\in[1:N]$ and $k\in[1:K]$ is defined as follows,
\begin{align}
\label{eq:def-a_lk}
\alpha(\ell,k)\ \overset{\Delta}{=} \ \sum_{n=\ell}^N\binom{n}{\ell} \  S(n,k).
\end{align}
%The following boundary conditions on $\alpha(\ell,k)$ readily follow:
%\begin{align}\label{eq:BC-a_nk}
%\alpha(\ell ,k=1) = 0, \qquad \alpha(\ell=N ,k) = S(n=N,k).
%\end{align}

Since the bound in \eqref{eq:bound-step3} is valid for any achievable pair $(D,L)$, it is also a valid bound on the optimal normalized download cost, $D^*(\mu)$, as defined in \eqref{optimal-download-cost},  where $\mu \in[\frac{1}{N},1]$. Therefore, we obtain the following bound on $D^*(\mu)$,
\begin{align}
\label{eq:bound-step5}
D^*(\mu)\geq 1 + \sum_{\ell = 1}^{N} \alpha(\ell,K) \  x_{\ell}.
\end{align}

Next, we use the properties of $S(n,k)$ in \eqref{eq:BC-S_nk} to obtain a recursion relation for $\alpha(\ell,k)$ as introduced in the following Lemma:
\begin{lemma}\label{lem5}
The function $\alpha(\ell,k)$ satisfies the  following recursion relation:
\begin{align}
\alpha(\ell,k)=\frac{1}{\ell}\left[\alpha(\ell,k-1)+\binom{N}{\ell}\right].
\end{align}
\end{lemma}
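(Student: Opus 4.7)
The plan is to establish the recursion by starting from the right-hand side, $\alpha(\ell,k-1)$, rewriting each $S(n,k-1)$ using the recursive properties of $S(n,k)$ collected in \eqref{eq:BC-S_nk}, and then performing an Abel-style summation-by-parts so that the resulting expression collapses back to $\ell\,\alpha(\ell,k)$ minus a boundary term.

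Concretely, I would begin by splitting the sum defining $\alpha(\ell,k-1)=\sum_{n=\ell}^{N}\binom{n}{\ell}S(n,k-1)$ into its $n<N$ contribution and its $n=N$ contribution. For $n<N$ I would apply Property 4, $S(n,k-1)=nS(n,k)-(n+1)S(n+1,k)$, and for the top term $n=N$ I would apply Property 3, $S(N,k-1)=NS(N,k)-1$. Shifting the summation index in the resulting $\sum_{n=\ell}^{N-1}\binom{n}{\ell}(n+1)S(n+1,k)$ by writing $m=n+1$ converts it to $\sum_{n=\ell+1}^{N}\binom{n-1}{\ell}n S(n,k)$, and after adding and subtracting the $n=N$ term of $\binom{n}{\ell}nS(n,k)$ in the first piece, the isolated terms $\binom{N}{\ell}NS(N,k)$ cancel. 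What remains is
\begin{equation*}
\alpha(\ell,k-1)=\sum_{n=\ell}^{N}\binom{n}{\ell}nS(n,k)-\sum_{n=\ell+1}^{N}\binom{n-1}{\ell}nS(n,k)-\binom{N}{\ell}.
\end{equation*}

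Next I would merge the two sums using Pascal's identity $\binom{n}{\ell}-\binom{n-1}{\ell}=\binom{n-1}{\ell-1}$, treating the $n=\ell$ term separately (where $\binom{\ell}{\ell}=1$ contributes $\ell S(\ell,k)$). Finally, the key combinatorial identity $n\binom{n-1}{\ell-1}=\ell\binom{n}{\ell}$ lets me factor out $\ell$ uniformly, so that the remaining combined sum is exactly $\ell\sum_{n=\ell}^{N}\binom{n}{\ell}S(n,k)=\ell\,\alpha(\ell,k)$. This yields $\alpha(\ell,k-1)=\ell\,\alpha(\ell,k)-\binom{N}{\ell}$, which upon rearrangement is exactly the claimed recursion.

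I do not expect a major obstacle: the whole argument is a bookkeeping exercise in index shifts, and the only delicate point is correctly handling the boundary at $n=N$ (requiring Property 3 in place of Property 4) and at $n=\ell$ (where the shifted sum does not contribute). Once those are handled, the Pascal identity and the factorization $n\binom{n-1}{\ell-1}=\ell\binom{n}{\ell}$ make the collapse immediate.
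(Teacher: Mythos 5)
Your proposal is correct and is essentially the paper's own argument run in reverse: the paper starts from $\alpha(\ell,k)$, uses $\ell\binom{n}{\ell}=n\binom{n-1}{\ell-1}$, Pascal's identity, an index shift, and Properties 3--4 of $S(n,k)$ to reach $\alpha(\ell,k-1)+\binom{N}{\ell}$, while you apply the same ingredients in the opposite order starting from $\alpha(\ell,k-1)$. The boundary handling at $n=N$ (Property 3) and at $n=\ell$ is done correctly, so no gap.
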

\begin{proof}
\begin{align}
\alpha(\ell,k)&\overset{(a)}{=}\sum_{n=\ell}^N\binom{n}{\ell}S(n,k)=\frac{1}{\ell}\sum_{n=\ell}^N\binom{n-1}{\ell-1}nS(n,k)\nonumber\\
&=\frac{1}{\ell}\left[\sum_{n=\ell}^N\binom{n-1}{\ell-1}nS(n,k)+\sum_{n=\ell+1}^N\binom{n-1}{\ell}nS(n,k)-\sum_{n=\ell+1}^N\binom{n-1}{\ell}nS(n,k)\right]\nonumber\\
%&=\frac{1}{\ell}\left[\binom{\ell-1}{\ell-1}\ell S(\ell,k)+\sum_{n=\ell+1}^N\binom{n}{\ell}nS(n,k)-\sum_{n=\ell}^{N-1}\binom{n}{\ell}(n+1)S(n+1,k)\right]\nonumber\\
&=\frac{1}{\ell}\left[\sum_{n=\ell}^{N}\binom{n}{\ell}nS(n,k)-\sum_{n=\ell}^{N-1}\binom{n}{\ell}(n+1)S(n+1,k)\right]\nonumber\\
&=\frac{1}{\ell}\left[\sum_{n=\ell}^{N-1}\binom{n}{\ell}[nS(n,k)-(n+1)S(n,k)]+\binom{N}{\ell}NS(N,k)\right]\nonumber\\
&\overset{(b)}{=}\frac{1}{\ell}\left[\sum_{n=\ell}^{N-1}\binom{n}{\ell}S(n,k-1)+\binom{N}{\ell}NS(N,k-1)+\binom{N}{\ell}\right]\nonumber\\
&=\frac{1}{\ell}\left[\sum_{n=\ell}^{N}\binom{n}{\ell}S(n,k-1)+\binom{N}{\ell}\right]\nonumber\\
&\overset{(c)}{=}\frac{1}{\ell}\left[\alpha(\ell,k-1)+\binom{N}{\ell}\right],
\end{align}
where $(a)$ and $(c)$ follow from the definition of $\alpha(\ell,k)$ in \eqref{eq:def-a_lk}; and $(b)$ follows from properties $3$ and $4$  in \eqref{eq:BC-S_nk}.
\end{proof}

Next, we use the recursion relation for $\alpha(\ell,k)$ given in Lemma~\ref{lem5}  to obtain a closed form expression for the coefficients $\alpha(\ell,K)$, for $\ell\in[1:N]$, in terms of the system parameters as follows:
\begin{align}
\label{eq:a_lk}
\alpha(\ell,K) &= \frac{1}{\ell} \left( \alpha(\ell,K-1) +\binom{N}{\ell} \right)\nonumber\\
& = \frac{1}{\ell} \binom{N}{\ell} + \frac{1}{\ell^2} \left( \alpha(\ell,K-2) +\binom{N}{\ell} \right)\nonumber\\
&\hspace{7pt} \vdots \nonumber\\
& = \binom{N}{\ell} \left( \frac{1}{\ell}+\frac{1}{\ell^2}+\cdots +\frac{1}{\ell^{K-1}} \right)= \binom{N}{\ell} \left( \tilde{D}(\ell)-1 \right),
\end{align}
which follows by applying the boundary condition on $\alpha(\ell,k)$ where $\alpha(\ell,k=1)=0$, and $\tilde{D}(\ell) =\sum_{k=0}^{K-1} \frac{1}{\ell^k}$ as defined in \eqref{eq:def-D_l}.
Therefore, the bound in \eqref{eq:bound-step5} can be written as
\begin{align}
\label{eq:objective}
D^*(\mu)\geq 1 + \sum_{\ell = 1}^{N} \binom{N}{\ell} \left( \tilde{D}(\ell)-1 \right)\  x_{\ell}.
\end{align}

Next, we  obtain $N-1$ different lower bounds on $D^*(\mu)$, by eliminating the pairs $(x_{j-1},x_{j})$, for each ${j \in \left[1:N-1\right]}$, in the equation \eqref{eq:objective} using the message size, and the storage constraints for uncoded storage placement given in \eqref{eq:size-const}, and \eqref{eq:storage-const}, respectively. 
We use \eqref{eq:size-const} to write $x_{j}$ as follows:
\begin{align}
\label{eq:x_j}
&x_{j}=\frac{1}{\binom{N}{j}}\left(1-\sum_{\ell\in\left[1:N\right]\setminus j}\binom{N}{\ell}x_{\ell}\right).
\end{align}
We first apply \eqref{eq:x_j} in \eqref{eq:objective} to obtain
\begin{align}
\label{eq:objective-step2}
D^*(\mu)&\geq 1 + \sum_{\ell \in [1:N]\setminus j} \binom{N}{\ell} \left( \tilde{D}(\ell)-1 \right)\  x_{\ell} + \left(1-\sum_{\ell\in\left[1:N\right]\setminus j}\binom{N}{\ell}x_{\ell}\right)\left( \tilde{D}(j)-1 \right)\nonumber\\
& =\tilde{D}(j) + \sum_{\ell \in [1:N]\setminus j} \binom{N}{\ell} \left( \tilde{D}(\ell)-\tilde{D}(j) \right)\  x_{\ell}.
\end{align}
We next apply \eqref{eq:x_j} in the storage constraint \eqref{eq:storage-const} to obtain
\begin{align}
\mu N &\geq \sum_{\ell\in\left[1:N\right]\setminus j}\ell \binom{N}{\ell} x_{\ell}+j \left(1-\sum_{\ell\in\left[1:N\right]\setminus j}\binom{N}{\ell}x_{\ell}\right)=j + \sum_{\ell\in\left[1:N\right]\setminus j} \binom{N}{\ell} (\ell-j) x_{\ell}. \label{eq:objective-step3} 
\end{align}
In order to eliminate $x_{j+1}$ from \eqref{eq:objective-step2}, we first use \eqref{eq:objective-step3}  to bound $x_{j+1}$ as
\begin{align}
&x_{j+1}\leq \frac{1}{\binom{N}{j+1}}\left(\mu N-j-\sum_{\ell\in\left[1:N\right]\setminus \{j,j+1\}}\binom{N}{\ell}(\ell-j)x_{\ell}\right), \label{eq:x_j+1}
\end{align}
which can be applied in \eqref{eq:objective-step2} to obtain the following bound on $D^*(\mu)$,
\begin{align}
\label{eq:objective-step4} 
D^*(\mu)&\geq \tilde{D}(j) + \sum_{\ell \in [1:N]\setminus j} \binom{N}{\ell} \left( \tilde{D}(\ell)-\tilde{D}(j) \right)\  x_{\ell}\nonumber\\
& \overset{(a)}{\geq} \tilde{D}(j) + \sum_{\ell \in [1:N]\setminus \{j,j+1\}}\binom{N}{\ell} \left( \tilde{D}(\ell)-\tilde{D}(j) \right) x_{\ell} \nonumber\\
&\hspace{15pt}+ \left( \tilde{D}(j+1)-\tilde{D}(j) \right)\left(\mu N-j-\sum_{\ell\in\left[1:N\right]\setminus \{j,j+1\}}\binom{N}{\ell}(\ell-j)x_{\ell}\right)\nonumber\\
& \overset{(b)}{=} (\mu N -j)\tilde{D}(j+1) -(\mu N -j-1)\tilde{D}(j) +\sum_{\ell\in\left[1:N\right]\setminus \{j,j+1\}} \binom{N}{\ell} \Gamma_{\ell}^{(j)}\ x_{\ell}\nonumber\\
&\overset{(c)}{\geq}(\mu N -j)\tilde{D}(j+1) -(\mu N -j-1)\tilde{D}(j),
\end{align} 
where $(a)$ follows from \eqref{eq:x_j+1} where the coefficient $\tilde{D}(j+1)-\tilde{D}(j)$  is negative for all ${j\in\left[1:N-1\right]}$; $\Gamma_{\ell}^{(j)}$ for $\ell\in[1:N]\setminus \{j,j+1\}$ in $(b)$ is defined as
\begin{align}
\label{eq:def-L_l}
\Gamma_{\ell}^{(j)}\overset{\Delta}{=} \tilde{D}(\ell) +(\ell-j-1)\tilde{D}(j) -(\ell-j)\tilde{D}(j+1);
\end{align}
 and $(c)$ since $x_{\ell}$ and $\Gamma_{\ell}^{(j)}$ are non-negative for ${\ell\in\left[1:N\right]\setminus \{j,j+1\}}$, which can be shown in the following discussion.
 In order to prove that $\Gamma_{\ell}^{(j)}$ is non-negative for ${\ell\in\left[1:N\right]\setminus \{j,j+1\}}$, we first need to prove an important property for $\tilde{D}(\ell)$ in the following Lemma:
\begin{lemma}\label{lem6}
$\tilde{D}(\ell)-\tilde{D}(\ell+1)$ is non increasing with respect to $\ell$, i.e., $\tilde{D}(\ell')-\tilde{D}(\ell'+1)\geq \tilde{D}(\ell)-\tilde{D}(\ell+1)$, for any $\ell' \leq \ell$.
\end{lemma}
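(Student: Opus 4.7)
The plan is to prove the lemma by reducing it to the convexity of the power function $x \mapsto 1/x^k$. Starting from the definition of $\tilde{D}$ in \eqref{eq:def-D_l}, I would first write
\begin{align*}
\tilde{D}(\ell) - \tilde{D}(\ell+1) \ =\ \sum_{k=0}^{K-1}\left(\frac{1}{\ell^k} - \frac{1}{(\ell+1)^k}\right)\ =\ \sum_{k=1}^{K-1} h_k(\ell),
\end{align*}
where $h_k(\ell) \triangleq \frac{1}{\ell^k} - \frac{1}{(\ell+1)^k}$ (the $k=0$ term vanishes). To establish that the left-hand side is non-increasing in $\ell$, it therefore suffices to show that each $h_k(\ell)$ is non-increasing in $\ell$ for every $k \in [1:K-1]$.

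Next, I would argue this termwise monotonicity by convexity: for any integer $k \geq 1$, the function $g_k(x) = x^{-k}$ on $(0,\infty)$ has second derivative $g_k''(x) = k(k+1)\, x^{-k-2} > 0$, hence $g_k$ is strictly convex. The midpoint inequality then gives $g_k(\ell) + g_k(\ell+2) \geq 2\, g_k(\ell+1)$, which rearranges to
\begin{align*}
h_k(\ell)\ =\ g_k(\ell) - g_k(\ell+1)\ \geq\ g_k(\ell+1) - g_k(\ell+2)\ =\ h_k(\ell+1).
\end{align*}
Summing this single-step inequality over $k \in [1:K-1]$ yields $\tilde{D}(\ell)-\tilde{D}(\ell+1) \geq \tilde{D}(\ell+1)-\tilde{D}(\ell+2)$.

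Finally, I would iterate this one-step bound: for any $\ell' \leq \ell$, applying the single-step inequality $\ell - \ell'$ times gives
\begin{align*}
\tilde{D}(\ell') - \tilde{D}(\ell'+1)\ \geq\ \tilde{D}(\ell'+1) - \tilde{D}(\ell'+2)\ \geq\ \cdots\ \geq\ \tilde{D}(\ell) - \tilde{D}(\ell+1),
\end{align*}
which is exactly the statement of the lemma. There is no real obstacle here, as the only non-trivial ingredient is convexity of $x \mapsto x^{-k}$ on $(0,\infty)$; everything else is bookkeeping. The only thing to be slightly careful about is the degenerate case $K = 1$, in which the sum defining $\tilde{D}(\ell) - \tilde{D}(\ell+1)$ is empty and the inequality holds with equality ($0 \geq 0$).
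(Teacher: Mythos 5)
Your proposal is correct and follows essentially the same route as the paper: both reduce the claim to the single-step second-difference inequality $\tilde{D}(\ell)-2\tilde{D}(\ell+1)+\tilde{D}(\ell+2)\geq 0$ (the paper states it as $\tilde{D}(\ell-1)-2\tilde{D}(\ell)+\tilde{D}(\ell+1)\geq 0$ and handles general $\ell'\leq\ell$ by induction/iteration), and both establish it termwise for each power $1/\ell^{k}$. The only difference is the elementary justification of that termwise step: the paper uses AM--GM, namely $\frac{1}{(\ell-1)^{j}}+\frac{1}{(\ell+1)^{j}}\geq \frac{2}{(\ell^{2}-1)^{j/2}}\geq \frac{2}{\ell^{j}}$, whereas you invoke convexity of $x\mapsto x^{-k}$ via its second derivative, which proves the identical midpoint inequality.
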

\begin{proof}
In order to prove Lemma~\ref{lem6}, it is sufficient to prove that $\tilde{D}(\ell')-\tilde{D}(\ell'+1)\geq \tilde{D}(\ell)-\tilde{D}(\ell+1)$ for $\ell' = \ell -1$, or $\tilde{D}(\ell -1)-2\tilde{D}(\ell)+\tilde{D}(\ell+1) \geq 0$. The proof for any $\ell' \leq \ell$ follows by induction.
\begin{align}
\tilde{D}(\ell-1)-2\tilde{D}(\ell)+\tilde{D}(\ell+1)&=\sum_{j=0}^{K-1}\frac{1}{(\ell-1)^j}+\frac{1}{(\ell+1)^j}-\frac{2}{(\ell)^j}\nonumber\\
&\overset{(a)}{\geq} 2\sum_{j=0}^{K-1} \frac{1}{(\ell^2-1)^{j/2}}-\frac{1}{(\ell)^j}\nonumber\\
&=2\sum_{j=0}^{K-1}\frac{(\ell^2)^{j/2}-(\ell^2-1)^{j/2}}{(\ell)^j(\ell^2-1)^{j/2}}\geq 0,
\end{align}
where, $(a)$ follows from the AM-GM inequality, i.e., arithmetic mean is larger than geometric mean, that is $\frac{x_1+x_2}{2}\geq {(x_1x_2)^{1/2}},\ \forall x_1,x_2\geq 0$. 
Therefore, we obtain $\tilde{D}(\ell-1)-\tilde{D}(\ell)\geq \tilde{D}(\ell)-\tilde{D}(\ell+1)$, which completes the proof of the Lemma.
\end{proof}

 \noindent $\bullet$ {Case $\ell < j$:} We prove that $\Gamma_{\ell}^{(j)}\geq 0$ for ${\ell < j}$ as follows,
\begin{align}
\Gamma_{\ell}^{(j)}&= \tilde{D}(\ell) +(\ell-j-1)\tilde{D}(j) -(\ell-j)\tilde{D}(j+1)\nonumber\\
& = \left[\tilde{D}(\ell)- \tilde{D}(j)\right] - (j-\ell) \left[\tilde{D}(j)- \tilde{D}(j+1)\right] \nonumber\\
& = \sum_{i=\ell}^{j-1}\left[\tilde{D}(i)- \tilde{D}(i+1)\right] - (j-\ell) \left[\tilde{D}(j)- \tilde{D}(j+1)\right] \nonumber\\
& = \sum_{i=\ell}^{j-1}\left(\left[\tilde{D}(i)- \tilde{D}(i+1)\right] -  \left[\tilde{D}(j)- \tilde{D}(j+1)\right]\right) \nonumber\\
& \overset{(a)}{\geq} \sum_{i=\ell}^{j-1}\left(\left[\tilde{D}(j)- \tilde{D}(j+1)\right] -  \left[\tilde{D}(j)- \tilde{D}(j+1)\right]\right)  =0,
\end{align}
where $(a)$ follows from Lemma~\ref{lem6}.

  \noindent $\bullet$ {Case $\ell >j+1$:}
Similar to the case ${\ell < j}$, we prove  $\Gamma_{\ell}^{(j)}\geq 0$ for ${\ell > j+1}$ as follows,
\begin{align}
\Gamma_{\ell}^{(j)}&= \tilde{D}(\ell) +(\ell-j-1)\tilde{D}(j) -(\ell-j)\tilde{D}(j+1)\nonumber\\
& = (\ell-j-1)\left[\tilde{D}(j)- \tilde{D}(j+1)\right] - \left[\tilde{D}(j+1)- \tilde{D}(\ell)\right] \nonumber\\
& = (\ell-j-1)\left[\tilde{D}(j)- \tilde{D}(j+1)\right] - \sum_{i=j+2}^{\ell}\left[\tilde{D}(i-1)- \tilde{D}(i)\right] \nonumber\\
& = \sum_{i=j+2}^{\ell}\left(\left[\tilde{D}(j)- \tilde{D}(j+1)\right] -  \left[\tilde{D}(i-1)- \tilde{D}(i)\right]\right) \nonumber\\
& \overset{(a)}{\geq} \sum_{i=j+2}^{\ell}\left(\left[\tilde{D}(j)- \tilde{D}(j+1)\right] -  \left[\tilde{D}(j)- \tilde{D}(j+1)\right]\right)  =0,
\end{align}
where $(a)$ follows from Lemma~\ref{lem6}.

From \eqref{eq:objective-step4} we arrive to the following lower bound on $D^*(\mu)$:
\begin{align}
D^*(\mu)\geq (\mu N -j)\tilde{D}(j+1) -(\mu N -j-1)\tilde{D}(j),
\end{align} 
which is a linear function of $\mu$ for a fixed value of $j\in[1:N-1]$ passing through the two points: $(\mu_1 = \frac{j}{N}, \frac{D}{L} = \tilde{D}(j))$ and $(\mu_2 = \frac{j+1}{N}, \frac{D}{L} = \tilde{D}(j+1))$. 
We obtain $N-1$ such lower bounds for every $j\in[1:N-1]$, which eventually give the lower bound on
the optimal download cost $D^*(\mu)$ as the lower convex envelope of the following $N$ points: 
\begin{align}
\left( \mu = \frac{t}{N}, \ \frac{D}{L}  =\tilde{D}(t) = 1+\frac{1}{t}+\frac{1}{t^2}+\cdots+ \frac{1}{t^{K-1}} \right), \quad \forall t\in[1:N],
\end{align}
which completes the converse proof of Theorem~\ref{theorem2}.

\section{Proof of Theorem \ref{theorem2}: Achievability for General $(N,K,\mu)$}
\label{sec:achiev}
The achievability proof of the optimal download cost given in  Theorem \ref{theorem2} has two main parts: a) the storage design (i.e., what to store across $N$ databases) subject to storage constraints; and b) the design of the PIR scheme from storage constrained databases. We next describe our placement scheme while satisfying the storage constraint at each database. In particular, we focus on the storage points $\mu KL$ for $\mu = t/N$ and $t\in[1:N]$. Once we achieve a scheme for these storage points, the lower convex envelope is also achieved using memory sharing as discussed in Claim~\ref{claim1}. 

\subsection{Storage Placement Scheme for $\mu = t/N$ and $t\in[1:N]$}
This storage placement scheme is inspired by the placement strategy proposed in the work on coded caching \cite{Maddah}.
For a fixed parameter $t\in [1:N]$, we take each message $W_k$ and sub-divide it into $\binom{N}{t}$ equal sized sub-messages. We then label each sub-message with a unique subset $\mathcal{S}\subseteq[1:N]$ of size $t$. Therefore, the message $W_k$ can be expressed as:
\begin{align}
W_k = \underset{\substack{\mathcal{S}\subseteq [1:N]\\ \vert \mathcal{S} \vert =t}}{\cup} W_{k,\mathcal{S}}.
\end{align}
Using this message splitting scheme, we propose the databases storage placement scheme as follows: for every message, each database stores all sub-messages which contain its index.
%Furthermore, we define $W_{\mathcal{S}} =\cup_{k\in[1:K]} W_{k,\mathcal{S}}$ as the set of all sub-messages stored by the set of databases in the set $\mathcal{S}$. 
 For instance, consider an example of $t=2$, $N=3$ databases, and $K=2$ messages (say $W_1$, and $W_2$). We first sub-divide the two messages into $\binom{3}{2}= 3$  sub-messages as $W_1=\{W_{1,\{1,2\}}, W_{1,\{2,3\}}, W_{1,\{1,3\}}\}$, and $W_2=\{W_{2,\{1,2\}}, W_{2,\{2,3\}}, W_{2,\{1,3\}}\}$. Therefore, the storage scheme is as follows:
\begin{itemize}
\item DB$_1$ stores $Z_1 % = \{W_{\{1,2\}}, W_{\{1,3\}}\}
=\{W_{1,\{1,2\}}, W_{1,\{1,3\}}, W_{2,\{1,2\}}, W_{2,\{1,3\}}\}$;
\item DB$_2$ stores $Z_2 % = \{W_{\{1,2\}}, W_{\{2,3\}}\}
= \{W_{1,\{1,2\}}, W_{1,\{2,3\}}, W_{2,\{1,2\}}, W_{2,\{2,3\}}\}$;
\item DB$_3$ stores $Z_3 % = \{W_{\{1,3\}}, W_{\{2,3\}}\}
=\{W_{1,\{1,3\}}, W_{1,\{2,3\}}, W_{2,\{1,3\}}, W_{2,\{2,3\}}\}$.
\end{itemize}

Furthermore, we assume that each sub-message is of size $t^{K}$ bits. Hence the total size of each message, $L$, is given as $L= \binom{N}{t}t^{K}$. 
We next verify that the above scheme satisfies the storage constraint. To this end, we note that for every message, each database stores $\binom{N-1}{t-1}$ sub-messages (this corresponds to the number of sub-sets of databases of size $t$ in which the given database is present). 
Hence, the total storage necessary for any database is given as:
\begin{align}
\underbrace{K}_{\substack{\text{Total number}\\ \text{of messages}}}\times \underbrace{\binom{N-1}{t-1}}_{\substack{\text{Number of submessages}\\ \text{per message per database}}}\times \underbrace{t^{K}}_{\substack{\text{Size of each}\\ \text{submessage}}}
&= \frac{t}{N}\times K\times N\times \binom{N-1}{t-1}\times t^{K-1}\nonumber\\
&= \frac{t}{N}\times K\times \left(\binom{N}{t}t^{K}\right)= \frac{t}{N}\times K\times L= \mu KL.\nonumber
\end{align}
This shows that the proposed scheme satisfies the storage constraints for every database.

\begin{table*}[t]
	\renewcommand{\arraystretch} {1.3}
	\caption{Fixed query structure to DB$_n$: Total vs. desired downloaded bits}
	\label{table4}
	\centering 
	\begin{tabular}{|c|c|c|c|c|}
\specialrule{.15em}{0em}{0em} 
		Blocks & Stages & Tuple&Total bits &Useful bits\\
\specialrule{.15em}{0em}{0em} 
		\multirow{6}{*}{\shortstack{Block $1$ \\ \\ $\mathcal{S}= \{n\}\cup\{1,2,\ldots,t-1\}$ }} & Stage $1$ & Single&$\binom{K}{1}(t-1)^0$ & $\binom{K-1}{0}(t-1)^0$ \\
		\cline{2-5}
		 &Stage $2$ & Pair&$\binom{K}{2}(t-1)^1$&$\binom{K-1}{1}(t-1)^1$\\
		\cline{2-5}
		&\vdots&\vdots&\vdots&\vdots\\
		\cline{2-5}
		&Stage $k$ &$k$-tuple&$\binom{K}{k}(t-1)^{k-1}$&$\binom{K-1}{k-1}(t-1)^{k-1}$\\
		\cline{2-5}
		&\vdots&\vdots&\vdots&\vdots\\
		\cline{2-5}
		&Stage $K$ &$K$-tuple&$\binom{K}{K}(t-1)^{K-1}$&$\binom{K-1}{K-1}(t-1)^{K-1}$\\
\specialrule{.15em}{0em}{0em} 
        		\multirow{5}{*}{\shortstack{Block $2$ \\ \\  $\mathcal{S}= \{n\}\cup\{1,2,\ldots,t-2,t\}$}} & Stage $1$ & Single&$\binom{K}{1}(t-1)^0$ & $\binom{K-1}{0}(t-1)^0$ \\
		\cline{2-5}
		&\vdots&\vdots&\vdots&\vdots\\
		\cline{2-5}
		&Stage $k$ &$k$-tuple&$\binom{K}{k}(t-1)^{k-1}$&$\binom{K-1}{k-1}(t-1)^{k-1}$\\
		\cline{2-5}
		&\vdots&\vdots&\vdots&\vdots\\
		\cline{2-5}
		&Stage $K$ &$K$-tuple&$\binom{K}{K}(t-1)^{K-1}$&$\binom{K-1}{K-1}(t-1)^{K-1}$\\
\specialrule{.15em}{0em}{0em} 
      \shortstack{ \vdots\\ \vdots} &\shortstack{ \vdots\\ \vdots} &\shortstack{ \vdots\\ \vdots}&\shortstack{ \vdots\\ \vdots}&\shortstack{ \vdots\\ \vdots}\\
\specialrule{.15em}{0em}{0em} 
           \multirow{5}{*}{\shortstack{Block $\binom{N-1}{t-1}$ \\ \\  $\mathcal{S}= \{n\}\cup \{N-t+2,\ldots, N\}$}} & Stage $1$ & Single&$\binom{K}{1}(t-1)^0$ & $\binom{K-1}{0}(t-1)^0$ \\
		\cline{2-5}
		&\vdots&\vdots&\vdots&\vdots\\
		\cline{2-5}
		&Stage $k$ &$k$-tuple&$\binom{K}{k}(t-1)^{k-1}$&$\binom{K-1}{k-1}(t-1)^{k-1}$\\
		\cline{2-5}
		&\vdots&\vdots&\vdots&\vdots\\
		\cline{2-5}
		&Stage $K$ &$K$-tuple&$\binom{K}{K}(t-1)^{K-1}$&$\binom{K-1}{K-1}(t-1)^{K-1}$\\
\specialrule{.15em}{0em}{0em} 
	\end{tabular}{}
\end{table*}

\subsection{Storage Constrained PIR Scheme for $\mu = t/N$ and $t\in[1:N]$}

We now present the storage constrained PIR scheme for any $(N,K)$ introduced in our previous work \cite{ICC2017}.
We focus on the storage parameter $\mu=t/N$ for any $t\in [1:N]$.  The general ideas of the scheme are described in the following points:

\noindent $\bullet$ \hspace{5pt} \textbf{\underline{Fixed Query Structure:}} We assume a fixed query structure at each database independent of the desired message $W_i$, where the query set $Q_n^{[i]}$ to each DB$_n$ is structured as follows: The query  $Q_n^{[i]}$ is composed of $\binom{N-1}{t-1}$ blocks, where every block is labeled by a set $\mathcal{S}\in[1:N]$ of size $t$, where $n\in\mathcal{S}$. The query block labeled with $\mathcal{S}$ only involves the sub-messages stored at the databases DB$_n$ where $n\in\mathcal{S}$, i.e., $W_{k,\mathcal{S}}$ for $k\in[1:K]$. Subsequently, since $\vert \mathcal{S}\vert =t$, there are $t$ databases sharing each block labeled with $\mathcal{S}$.
 Every query block is composed of a sequence $K$ ordered stages, where in Stage $k$, the user requests $k$-tuple coded bits composed of $k$ bits from $k$ different files, $W_{\mathcal{K}}$ for $\mathcal{K} \in[1:K]$ and $\vert\mathcal{K}\vert =k$. There are $\binom{K}{k}$ such types of $k$-tuples, and each type is designed to contain $(t-1)^{k-1}$ coded bits. The summary of the query structure per database is shown in Table~\ref{table4}, where the query set to each database is composed of $\binom{N-1}{t-1}$ blocks, each contains $K$ ordered stages,  each Stage $k$ contains $\binom{K}{k}$ types of $k$-tuples, and each type is a set of $(t-1)^{k-1}$ coded bits. In total, the query involves a total number of $\binom{N-1}{t-1} \sum_{k=1}^K \binom{K}{k} (t-1)^{k-1}$ requested coded bits.

\noindent $\bullet$ \hspace{5pt} \textbf{\underline{Privacy Guarantees:}}
The query structure for the Stage $k$, Block $S$ and DB$_n$ is shown in Table~\ref{table5}.
 In order to maintain privacy, we introduce message symmetry such that any query stage must be symmetric with respect to all the messages. Looking at Stage $k$ within any block in Table~\ref{table5}, out of the $\binom{K}{k} (t-1)^{k-1}$ coded bits requested, the desired message $W_i$ is participating in $\binom{K-1}{k-1} (k-1)^{k-1}$ coded bits, which are referred to as \textit{desired symbols.}
The user must download some extra undesired bits of the other messages in order to maintain symmetry, which are referred to as \textit{side-information}, given by the remaining $\binom{K-1}{k} (k-1)^{k-1}$ coded bits.
Furthermore, we restrict our scheme not to download the same bit multiple times from the same database, otherwise the privacy constraint may be violated. 
In addition to the symmetric structure of the queries, before initializing the PIR scheme, the user assumes a random permutation of the bits for each sub-message $W_{k,\mathcal{S}}$ according to $\bdelta^{k,\mathcal{S}}:(1,2,\ldots,L)\rightarrow(\delta^{k,\mathcal{S}}_{1},\delta^{k,\mathcal{S}}_{2}, \ldots, \delta^{k,\mathcal{S}}_{t^K})$. This procedure will become clear when we discuss the stages of the PIR scheme.

\begin{table*}[t]
	\renewcommand{\arraystretch} {1.3}
	\caption{Fixed query structure to DB$_n$ during Block $\mathcal{S}$, Stage $k$: Total vs. desired downloaded bits}
	\label{table5}
	\centering 
	\begin{tabular}{|c|c|c|c|}
\Cline{1pt}{1-4}
		Types & Instance No. & $k$-tuple Coded Symbols & Total bits \\
\Cline{1pt}{1-4}
\multirow{4}{*}{\shortstack{Type $1$ \\ \\ $\mathcal{K}=\{i\} \cup \{1,2,\ldots,k-1\}$}} &$1$ & $\mathbf{n}(W_{i,\mathcal{S}})+\mathbf{s}(\sum_{j\in \mathcal{K}\setminus  \{i\}}W_{j,\mathcal{S}})$& \multirow{9}{*}{\shortstack{(Desired bits) \\ \\  $\binom{K-1}{k-1}(t-1)^{k-1}$ }} \\
\cline{2-3}
&$2$ & $\mathbf{n}(W_{i,\mathcal{S}})+\mathbf{s}(\sum_{j\in \mathcal{K}\setminus  \{i\}}W_{j,\mathcal{S}})$& \\
\cline{2-3}
&\vdots  &\vdots &\\
\cline{2-3}
&$(t-1)^{k-1}$ & $\mathbf{n}(W_{i,\mathcal{S}})+\mathbf{s}(\sum_{j\in \mathcal{K}\setminus \{i\}}W_{j,\mathcal{S}})$& \\
\Cline{1pt}{1-3}
\vdots &\vdots  &\vdots&\\
\Cline{1pt}{1-3}
\multirow{4}{*}{\shortstack{Type $\binom{K-1}{k-1}$ \\ \\ $\mathcal{K}=\{i\} \cup \{K-k+2,\ldots,K\}$}} &$1$ & $\mathbf{n}(W_{i,\mathcal{S}})+\mathbf{s}(\sum_{j\in \mathcal{K}\setminus  \{i\}}W_{j,\mathcal{S}})$& \\
\cline{2-3}
&$2$ & $\mathbf{n}(W_{i,\mathcal{S}})+\mathbf{s}(\sum_{j\in \mathcal{K}\setminus  \{i\}}W_{j,\mathcal{S}})$& \\
\cline{2-3}
&\vdots  &\vdots&\\
\cline{2-3}
&$(t-1)^{k-1}$ & $\mathbf{n}(W_{i,\mathcal{S}})+\mathbf{s}(\sum_{j\in \mathcal{K}\setminus  \{i\}}W_{j,\mathcal{S}})$& \\
\Cline{2pt}{1-4}
\multirow{4}{*}{\shortstack{Type $\binom{K-1}{k-1}+1$ \\ \\ $\mathcal{K}= \{1,2,\ldots,k\}$\\$ (i\not\in \mathcal{K})$}} &$1$ & $\sum_{j\in \mathcal{K}}\mathbf{n}(W_{j,\mathcal{S}})$& \multirow{9}{*}{\shortstack{(Side Info.) \\ \\  $\binom{K-1}{k}(t-1)^{k-1}$ }}   \\
\cline{2-3}
&$2$ & $\sum_{j\in \mathcal{K}}\mathbf{n}(W_{j,\mathcal{S}})$& \\
\cline{2-3}
&\vdots  &\vdots&\\
\cline{2-3}
&$(t-1)^{k-1}$ & $\sum_{j\in \mathcal{K}}\mathbf{n}(W_{j,\mathcal{S}})$& \\
\Cline{1pt}{1-3}
\vdots &\vdots  &\vdots &\\
\Cline{1pt}{1-3}
\multirow{4}{*}{\shortstack{Type $\binom{K}{k}$ \\ \\ $\mathcal{K}= \{K-k+1,\ldots,K\}$\\$ (i\not\in \mathcal{K})$}} &$1$ & $\sum_{j\in \mathcal{K}}\mathbf{n}(W_{j,\mathcal{S}})$ & \\
\cline{2-3}
&$2$ & $\sum_{j\in \mathcal{K}}\mathbf{n}(W_{j,\mathcal{S}})$ & \\
\cline{2-3}
&\vdots  &\vdots &\\
\cline{2-3}
&$(t-1)^{k-1}$ & $\sum_{j\in \mathcal{K}}\mathbf{n}(W_{j,\mathcal{S}})$ & \\
\specialrule{.1em}{0em}{0em} 
	\end{tabular}{}
\end{table*}

\noindent $\bullet$ \hspace{5pt} \textbf{\underline{Exploiting Side Information:}}
At each stage $k$, the user uses the side information downloaded during the previous stage $k-1$ to pair with new desired bits.
However, the side information downloaded from a database at Stage $k-1$ cannot be used to download new desired bits from the same database in Stage $k$ in order to maintain privacy. 
In other words, the number of desired bits that can be downloaded from a database, say DB$_n$, at Stage $k$ is given by the side information downloaded from the remaining databases at the Stage $k-1$, and also stored within DB$_n$.
The exploitation of side-information is carefully designed to account for the limited storage capabilities of the databases. At stage $k$, when the user downloads a $k$-tuple coded bit, he only pairs bits that belong to the same set of database $\mathcal{S}$, such that the resultant coded bit is available at all the databases who indexes are in the set $\mathcal{S}$. The resultant coded bits belong to the query block labeled by $\mathcal{S}$ as discussed before in Table~\ref{table4}, and henceforth we treat the sub-messages that are stored in the same set of databases $\mathcal{S}$ independently in the PIR scheme.
In Table~\ref{table5}, we demonstrate the types of bits requested from DB$_n$ during Stage $k$ in a block characterized by $\mathcal{S}$, where $n\in \mathcal{S}$. We use the notation $\mathbf{n}(W_{j,\mathcal{S}})$ to denote  a new bit of $W_{j,\mathcal{S}}$  that has not been downloaded from any database before; while $\mathbf{s}(\sum_{j\in \mathcal{K}\setminus  \{i\}}W_{j,\mathcal{S}})$ where $\vert \mathcal{K}\vert=k$ denotes a side information coded bit the user already downloaded from the previous Stage $k-1$ from other database. It is important to note here that the side information bits are considered fresh from DB$_n$ perspective.

\noindent $\bullet$ \hspace{5pt} \textbf{\underline{Detailed Description of the stages:}}
As we discussed before, for every DB$_n$ and a desired message $W_i$, the query set $Q^{[i]}_n$ is divided into $\binom{N-1}{t-1}$ blocks labeled with sets $\mathcal{S}\subseteq[1:N]$ and $n\in\mathcal{S}$.
Each  block $\mathcal{S}$ is composed only of the sub-messages $W_{k,\mathcal{S}}$, where $k\in[1:K]$, and the databases in the set $\mathcal{S}$.
The user assumes a random permutation of the bits for each sub-message $W_{k,\mathcal{S}}$ according to $\bdelta^{k,\mathcal{S}}:(1,2,\ldots,L)\rightarrow(\delta^{k,\mathcal{S}}_{1},\delta^{k,\mathcal{S}}_{2}, \ldots, \delta^{k,\mathcal{S}}_{t^K})$, i.e., considering the sub-message bits in the order $\left(w_{k,\mathcal{S}}^{\left(\delta^{k,\mathcal{S}}_{1}\right)},w_{k,\mathcal{S}}^{\left(\delta^{k,\mathcal{S}}_{2}\right)},\ldots,w_{k,\mathcal{S}}^{\left(\delta^{k,\mathcal{S}}_{t^K}\right)}\right)$.
As demonstrated in Table \ref{table4}, the user chooses the new bits of the sub-message $\mathbf{n}(W_{j,\mathcal{S}})$ according to the permutation $\bdelta^{j,\mathcal{S}}$, e.g., the following sequence of calls $\left(\mathbf{n}(W_{1,\mathcal{S}}),\mathbf{n}(W_{2,\mathcal{S}}),\mathbf{n}(W_{1,\mathcal{S}})+\mathbf{n}(W_{2,\mathcal{S}})\right)$ correspond to $\left(w_{1,\mathcal{S}}^{\left(\delta^{1,\mathcal{S}}_1\right)},w_{2,\mathcal{S}}^{\left(\delta^{2,\mathcal{S}}_1\right)},w_{1,\mathcal{S}}^{\left(\delta^{1,\mathcal{S}}_2\right)}+w_{1,\mathcal{S}}^{\left(\delta^{2,\mathcal{S}}_2\right)}\right)$. 
Considering  Block $\mathcal{S}$, we present the $K$ stages of the general PIR scheme for DB$_n$ in details, and assume that the user is interested in privately retrieving message $W_i$.

\noindent \textbf{Stage $1$}: In the first stage, the user starts by downloading single uncoded bits from DB$_n$. 
The user first downloads one bit from the desire sub-message $W_{i,\mathcal{S}}$. 
In order to maintain privacy, we introduce message symmetry, and the user downloads  one bit from remaining $K-1$ sub-messages, $W_{j,\mathcal{S}}$ where $j\in[1:K]\setminus i$.
As a summary as shown in the first row of each Block in Table \ref{table4}, the user downloads in total $\binom{K}{1}$ bits, out of which the number of desired bits are $\binom{K-1}{0}=1$. The total number of bits downloaded of each sub-message $W_{j,\mathcal{S}}$ across all databases DB$_{n'}$, where $n'\in\mathcal{S}\setminus \{n\}$, is given as $\binom{K-1}{0}(t-1)$, and for all undesired sub-messages $W_{j,\mathcal{S}}$, where $j\in[1:K]\setminus \{i\}$, we obtain $\binom{K-1}{1}(t-1)$ undesired bits as side information to be used to get desired bits from DB$_n$ in the next stage, Stage $2$.

\noindent \textbf{Stage $2$}: In the second stage, the user downloads pairs of bits, where the desired bits are coded with the undesired bits previously downloaded as side information from Stage $1$. 
As previously discussed, the number of desired bits that can be downloaded from DB$_n$ is given by the side information downloaded from the remaining databases at Stage $1$, which is given by $\binom{K-1}{1}(t-1)$ bits.
 Furthermore, in order to maintain message symmetry, the user downloads extra coded pairs of undesired bits, which will form the side information of the next stage.  For every one of the $\binom{K-1}{2}$ undesired pair types, i.e., $\mathcal{K}\subseteq [1:K]\setminus \{i\}$ and $\vert \mathcal{K}\vert =2$, the user downloads $(t-1)$ bits from DB$_n$, or total $\binom{K-1}{2}(t-1)$  undesired bits.
 As a summary as shown in the second row of each Block in Table \ref{table4}, the user downloads from DB$_n$ in total $\binom{K}{2}(t-1)$ bits, out of which the number of desired bits are $\binom{K-1}{1}(t-1)$.
 The total number of undesired pairs  downloaded across all databases DB$_{n'}$, where $n'\in\mathcal{S}\setminus \{n\}$, is given as $\binom{K-1}{2}(t-1)^2$ as side information to be used to get desired bits from DB$_n$ in the next stage, Stage $3$.

\noindent \textbf{Stage $k$}: In the general Stage $k$, 
 we download $k$-tuples of bits composed of $k$ different messages. The total number of undesired side information obtained from the previous stage $k$, which can be used to obtain desired bits from DB$_n$, is $\binom{K-1}{k-1}(t-1)^{k-1}$ $k-1$-tuples, which also gives the number of desired bits that can be obtained. By symmetrizing the stage, we obtain a total number of $\binom{K}{k}(t-1)^{k-1}$ $k$-tuples, out of which only $\binom{K-1}{k-1}(t-1)^{k-1}$ desired bits are obtained.
 
 The following two lemmas proves that the achievable scheme is correct and private, and completes the proof of achievability of the capacity expression in Theorem~\ref{theorem2}.
\begin{lemma}
The achievable scheme is correct and achieves the trade-off stated in Theorem~\ref{theorem2}.
\end{lemma}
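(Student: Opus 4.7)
The plan is to exploit the block-structured design of the scheme: within each block labeled by a subset $\mathcal{S}\subseteq[1:N]$ of size $t$, the querying protocol operates \emph{only} on the $K$ sub-messages $\{W_{k,\mathcal{S}}\}_{k=1}^{K}$ and the $t$ databases $\{\text{DB}_n\}_{n\in\mathcal{S}}$. Because every such database stores all of these sub-messages, the stages described in Tables~\ref{table4}--\ref{table5} reduce to an instance of the classical Sun--Jafar PIR protocol from~\cite{SunAndJaffar1} run over $t$ (effective) databases, $K$ messages of size $t^K$ bits each, and the same desired index $i$. Correctness in each block therefore inherits the decoding guarantees of~\cite{SunAndJaffar1}: all $t^K$ bits of $W_{i,\mathcal{S}}$ are recoverable from the answers collected in block $\mathcal{S}$. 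Because the $\binom{N}{t}$ blocks partition $W_i$ into its sub-messages $\{W_{i,\mathcal{S}}\}$, aggregating across blocks recovers the full message $W_i$.

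With correctness in hand, the rate follows from a direct count. From each of the $t$ databases in block $\mathcal{S}$, stage $k$ downloads $\binom{K}{k}(t-1)^{k-1}$ coded bits, so the per-block download is
\begin{align*}
D_{\mathcal{S}} \;=\; t\sum_{k=1}^{K}\binom{K}{k}(t-1)^{k-1}
\;=\; \frac{t}{t-1}\sum_{k=1}^{K}\binom{K}{k}(t-1)^{k}
\;=\; \frac{t\bigl(t^{K}-1\bigr)}{t-1},
\end{align*}
by the binomial theorem. Summing over the $\binom{N}{t}$ blocks and using $L=\binom{N}{t}\,t^{K}$ gives
\begin{align*}
\frac{D}{L} \;=\; \frac{t^{K}-1}{(t-1)\,t^{K-1}} \;=\; \sum_{k=0}^{K-1}\frac{1}{t^{k}} \;=\; \tilde D(t),
\end{align*}
which matches the corner point $(\mu=t/N,\tilde D(t))$ promised by Theorem~\ref{theorem2}. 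Intermediate storage values are handled by the memory-sharing argument of Claim~\ref{claim1}.

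The main obstacle is rigorously justifying the reduction to the Sun--Jafar scheme inside each block. Concretely, one has to verify that the side-information pipeline functions as advertised: the $\binom{K-1}{k-1}(t-1)^{k-1}$ desired $k$-tuples requested from $\text{DB}_n$ at stage $k$ must be matchable with side information generated at stage $k-1$ by the $(t-1)$ other databases of $\mathcal{S}$ (and never previously queried from $\text{DB}_n$), and the permutations $\bdelta^{k,\mathcal{S}}$ must guarantee that every newly requested bit is fresh. A short induction on $k$ settles this: the induction hypothesis supplies exactly $(t-1)\cdot\binom{K-1}{k-1}(t-1)^{k-2}=\binom{K-1}{k-1}(t-1)^{k-1}$ unused undesired $(k-1)$-tuples from the $(t-1)$ other databases, each of which is a linear combination of sub-messages stored at $\text{DB}_n$ and hence usable there; the symmetrization step at stage $k$ then creates $\binom{K-1}{k}(t-1)^{k-1}$ new undesired $k$-tuples per database, feeding the next induction step. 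The boundary case $t=1$ must be treated separately and is trivial: the scheme simply downloads all sub-messages from the single relevant database, giving $D/L=K=\tilde D(1)$.
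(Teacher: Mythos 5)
Your proposal is correct and takes essentially the same route as the paper: a counting argument showing the desired bits across all blocks and stages sum to $L=\binom{N}{t}t^{K}$ while the total download equals $\tilde{D}(t)\,L$ (your per-block count $t\sum_{k=1}^{K}\binom{K}{k}(t-1)^{k-1}$ is just the paper's per-database count regrouped, since each database appears in $\binom{N-1}{t-1}$ blocks), combined with the view---already noted in the paper's remark after the $N=K=3$ example---that each block is a Sun--Jafar scheme over $t$ databases, and memory sharing via Claim~\ref{claim1} for intermediate $\mu$. Your explicit induction checking that the $(t-1)\binom{K-1}{k-1}(t-1)^{k-2}$ undesired $(k-1)$-tuples from the other $t-1$ databases of block $\mathcal{S}$ exactly supply the desired $k$-tuples requested from DB$_n$ at stage $k$ simply makes precise what the paper states more tersely (``desired bits are distinct and only paired with side information previously downloaded''), so no substantive difference remains.
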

\begin{proof}
The correctness of the scheme follows from a counting argument. %For each one of the $\binom{N-1}{t-1}$ blocks of the scheme, a number of $t$ databases participate in the retrieval process. Each one of the $t$ databases is sending in the $K$ stages a number of $\sum{k=1}^K \binom{K-1}{k-1} (t-1)^{k-1}$  coded symbols including distinct desired bits. Therefore, the total number of coded symbols including distinct desired bits across all databases and all blocks is given by
%\begin{align}
%\underbrace{\binom{N-1}{t-1}}_{\text{number of blocks}} \underbrace{t}_{\substack{\text{number of databases}\\ \text{ participating in the block}}} \underbrace{\sum_{k=1}^K \binom{K-1}{k-1} (t-1)^{k-1}}_{\substack{\text{number of coded desired bits}\\ \text{ per database in a block}}}
%\end{align}
Let us calculate the total number of desired bits and the total number of  downloaded bits. For each database, there are $\binom{N-1}{t-1}$ blocks. For each block, there are $K$ ordered stages, where in Stage $k$, a total number of $\binom{K}{k}(t-1)^{k-1}$ coded $k$-tuple bits are downloaded, out of which $\binom{K-1}{k-1}(t-1)^{k-1}$ are desired bits.
\begin{align}
\text{Desired}&\text{ bits (per DB)} \nonumber\\
 &=\binom{N-1}{t-1}\sum_{k=1}^{K}\binom{K-1}{k-1}(t-1)^{k-1} = \binom{N-1}{t-1}\sum_{k'=0}^{K-1}\binom{K-1}{k'}(t-1)^{k'}\nonumber\\
&= \binom{N-1}{t-1}\left(1 +(t-1) \right)^{K-1} =\binom{N-1}{t-1} t^{K-1} = \frac{1}{N}\times\binom{N}{t}t^{K}.\nonumber\\
\text{Total do}&\text{wnloaded bits (per DB)} \label{eq:total-desired}\\
&=\binom{N-1}{t-1}\sum_{k=1}^{K}\binom{K}{k}(t-1)^{k-1} = \binom{N-1}{t-1}\frac{1}{t-1}\sum_{k=1}^{K-1}\binom{K}{k}(t-1)^{k}\nonumber\\
&=  \binom{N-1}{t-1}\frac{1}{t-1}(t^K -1) =\binom{N-1}{t-1}(t^{K-1}+t^{K-2}+\cdots+t+1).\label{eq:total-download}
\end{align}

From all the $N$ databases, the user is able to recover $L=\binom{N}{t}t^{K}$ coded bits of the desired message. 
Furthermore, since all these desired bits are distinct and only paired with side information previously downloaded, the user can decode all the $L$ bits of the desired message and hence the scheme is proven correct.
The download cost $D(\mu)$ of the proposed storage constrained PIR scheme when $\mu=t/N$ and $t\in[1:N]$ is given as
\begin{align}
D(\mu)&= \frac{N\times \text{Total Downloaded bits (per DB)}}{N \times \text{Desired bits (per DB)}}\nonumber\\
&= \frac{\binom{N-1}{t-1}(t^{K-1}+t^{K-2}+\cdots+t+1)}{\binom{N-1}{t-1}t^{K-1}}\nonumber\\
&= 1 + \frac{1}{t} + \frac{1}{t^{2}}+ \cdots + \frac{1}{t^{K-1}} =\tilde{D}(t).
\end{align}

Using the memory sharing concept in Claim~\ref{claim1}, we can achieve the lower convex envelope of the
following $N$ achievable points:
\begin{align}
\left( \mu = \frac{t}{N}, \ D(\mu)  = \tilde{D}(t)=1+\frac{1}{t}+\frac{1}{t^2}+\cdots+ \frac{1}{t^{K-1}} \right), \quad \forall t\in[1:N],
\end{align}
which matches the trade-off given in Theorem \ref{theorem2}.
\end{proof}

\begin{lemma}
The achievable scheme is private.
\end{lemma}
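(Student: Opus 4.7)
The plan is to verify the privacy constraint \eqref{eq:privacy-const0} by showing that, for every database index $n\in[1:N]$ and every pair of desired-message indices $i_1,i_2\in[1:K]$, the conditional distribution of $(Q_n^{[i]},A_n^{[i]})$ given $(W_{[1:K]},Z_{[1:N]})$ is identical for $i=i_1$ and $i=i_2$. Since $A_n^{[i]}$ is a deterministic function of $(Q_n^{[i]},Z_n)$ by \eqref{cons4}, it suffices to establish this invariance for the marginal distribution of $Q_n^{[i]}$ conditioned on the data.

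The first step is to show that the \emph{skeleton} of the query to DB$_n$ does not depend on $i$. From Tables \ref{table4} and \ref{table5}, $Q_n^{[i]}$ consists of $\binom{N-1}{t-1}$ blocks indexed by $\{\mathcal{S}\subseteq[1:N]:n\in\mathcal{S},|\mathcal{S}|=t\}$; each block has $K$ stages; Stage $k$ requests $\binom{K}{k}(t-1)^{k-1}$ coded $k$-tuples, one group of $(t-1)^{k-1}$ bits for \emph{every} subset $\mathcal{K}\subseteq[1:K]$ with $|\mathcal{K}|=k$. Neither the block set nor the number of $k$-tuples per subset $\mathcal{K}$ depends on $i$; the labels ``desired'' ($\mathcal{K}\ni i$) and ``side information'' ($\mathcal{K}\not\ni i$) are bookkeeping internal to the user and are absent from the actual request.

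The second step is to argue that the specific bit indices selected inside this skeleton are distributed symmetrically across messages. For every pair $(j,\mathcal{S})$, the user draws a uniformly random permutation $\boldsymbol{\delta}^{j,\mathcal{S}}$ of the bits of $W_{j,\mathcal{S}}$, independently across pairs and independently of $i$. Whenever the skeleton calls for a fresh bit $\mathbf{n}(W_{j,\mathcal{S}})$ from DB$_n$, the user reads the next unused index of $\boldsymbol{\delta}^{j,\mathcal{S}}$; whenever it calls for a side-information coded bit $\mathbf{s}(\sum_{j\in\mathcal{K}\setminus\{i\}}W_{j,\mathcal{S}})$, that bit is by construction an $(\mathcal{K}\setminus\{i\})$-sum that was transmitted as a \emph{fresh} tuple by another database in the previous stage. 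Because the permutations $\{\boldsymbol{\delta}^{j,\mathcal{S}}\}$ are i.i.d.\ across $j$ and because the per-subset counts in the skeleton are symmetric in the $K$ messages, any bijection of $[1:K]$ that swaps $i_1$ and $i_2$ and relabels the messages accordingly maps the distribution of $Q_n^{[i_1]}$ to that of $Q_n^{[i_2]}$; since the data $(W_{[1:K]},Z_{[1:N]})$ is itself symmetric across messages in distribution and the permutations are uniform, this yields $Q_n^{[i_1]}\sim Q_n^{[i_2]}$ conditioned on the data, which is \eqref{eq:privacy-const0}.

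The main obstacle will be carefully justifying the ``no bit is downloaded twice from the same database'' restriction and the fresh-side-information step: one must verify that, across all $\binom{N-1}{t-1}$ blocks and $K$ stages, the number of fresh bits requested of each sub-message $W_{j,\mathcal{S}}$ from each DB$_n$ with $n\in\mathcal{S}$ is \emph{exactly} $t^{K-1}$ (so that the $t^K$ bits of $W_{j,\mathcal{S}}$ are partitioned evenly among the $t$ databases storing it) and that the $(k{-}1)$-tuples downloaded as side information at stage $k-1$ from the $t-1$ databases in $\mathcal{S}\setminus\{n\}$ are precisely enough to furnish all desired $k$-tuples at stage $k$ from DB$_n$. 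This is a counting exercise paralleling \eqref{eq:total-desired}--\eqref{eq:total-download}; once verified, it confirms that the permutation-based randomization induces a per-query distribution that is invariant under relabeling of messages, which completes the proof.
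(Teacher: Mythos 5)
Your overall strategy mirrors the paper's proof of this lemma --- fixed message-symmetric query skeleton, i.i.d.\ uniform permutations $\bdelta^{j,\mathcal{S}}$, no bit requested twice from the same database, and the count that within each block $\mathcal{S}$ every database sees exactly $t^{K-1}$ indices of every sub-message $W_{j,\mathcal{S}}$ --- but the pivotal deduction has a gap. Your relabeling step only establishes \emph{equivariance}: if $R_\tau$ denotes the action on queries of the bijection $\tau$ swapping $i_1$ and $i_2$, then (using that the permutations are i.i.d.\ across $j$ and the skeleton counts are symmetric) you obtain that $R_\tau(Q_n^{[i_1]})$ has the law of $Q_n^{[i_2]}$. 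This does not by itself give that $Q_n^{[i_1]}$ and $Q_n^{[i_2]}$ have the \emph{same} law; you additionally need that the law of $Q_n^{[i]}$, for fixed $i$, is itself invariant under relabelings of the messages --- equivalently, that the actual request (with the user's ``desired/side-information'' bookkeeping stripped away) attaches to every message the same number of distinct, never-repeated bit indices, read off from that message's own uniform permutation in the same deterministic manner, irrespective of whether that message is the desired one. That is precisely the fact the paper proves directly: it computes the probability of an arbitrary realization of the requested index tuples at DB$_n$ and finds it equals $\left(\frac{1}{t^K}\cdot\frac{1}{t^K-1}\times\cdots\times\frac{1}{t^K-t^{K-1}+1}\right)^{K}$, a constant independent of $i$, whence $I(i;Q_n^{[i]})=0$; with that computation in place, the equivariance argument becomes unnecessary.

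A second, smaller issue is the appeal to the data $(\W_{[1:K]},\Z_{[1:N]})$ being ``symmetric across messages in distribution.'' This is a non sequitur here: by \eqref{cons3} the queries are generated independently of the messages and storage, so the conditional law of $Q_n^{[i]}$ given the data is just its marginal law, and privacy neither can nor should rest on exchangeability of the data --- \eqref{eq:privacy-const0} is a statement with the same realized messages and storage on both sides. The paper finishes exactly at this point with the chain-rule decomposition
$I(i;Q_n^{[i]},A_n^{[i]},\W_{[1:K]},\Z_{[1:N]})=I(i;Q_n^{[i]})+I(i;\W_{[1:K]},\Z_{[1:N]}|Q_n^{[i]})+I(i;A_n^{[i]}|\W_{[1:K]},\Z_{[1:N]},Q_n^{[i]})$,
where the second term vanishes because the user designs the query without knowledge of the data and the third because, by \eqref{cons4}, the answer is a deterministic function of $(Q_n^{[i]},Z_n)$; your reduction via determinism of $A_n^{[i]}$ is the right analogue, but its justification should be independence, not data symmetry. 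The counting obstacle you flag (exactly $t^{K-1}$ fresh bits of each sub-message per database per block, and sufficiency of the side information) is real and matches \eqref{eq:total-desired}--\eqref{eq:total-download}, but it is secondary to the invariance gap above.
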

\begin{proof}
The privacy of the scheme is guaranteed using three factors:
\begin{enumerate}
\item The query structure to each database is symmetric with respect to all the $K$ messages.
 \item All the bits downloaded from a database are distinct for all messages, i.e., no bit is downloaded more than once.
\item The user randomly permutes the bits indexes of each message before designing the queries. 
\end{enumerate}
\noindent These three steps make the query to any database independent from the index of the desired message $W_i$, i.e., $I(i;Q_n^{[i]}) = 0$ for all $n\in[1:N]$.

To prove this formally, we calculate the probability of an arbitrary query realization. In each Block $\mathcal{S}$, the user downloads a number of $t^{K-1}$ bits of the desired sub-message $W_{i,\mathcal{S}}$ from each database DB$_n$, where $n\in\mathcal{S}$. Due to the symmetric query structure, the number of bits of every undesired sub-message, $W_{j,\mathcal{S}}$ where $j\in [1:K]\setminus i$, involved in the block per database is $t^{K-1}$ bits as well. According to the permutation $\bdelta^{j,\mathcal{S}}$, the user permutes the bits indexes of each sub-message $W_{j,\mathcal{S}}$, and therefore, the $t^{K-1}$ bits involved in the query block $\mathcal{S}$ for DB$_n$ are given as: $(w_{j,\mathcal{S}}^{({\ell^{i,j,n,\mathcal{S}}_{1}})},w_{j,\mathcal{S}}^{({\ell^{i,j,n,\mathcal{S}}_{2}})},\ldots,w_{j,\mathcal{S}}^{({\ell^{i,j,n,\mathcal{S}}_{t^{K-1}}})})$, where $(\ell^{i,j,n,\mathcal{S}}_1,\ldots,\ell^{i,j,n,\mathcal{S}}_{t^{K-1}})\subset \bdelta^{j,\mathcal{S}}$ and can be found according to the order of the databases to which the user sends the queries as well as the desired message index $i$. Without loss of generality, an arbitrary realization of $(w_{j,\mathcal{S}}^{({\ell^{i,j,n,\mathcal{S}}_{1}})},w_{j,\mathcal{S}}^{({\ell^{i,j,n,\mathcal{S}}_{2}})},\ldots,w_{j,\mathcal{S}}^{({\ell^{i,j,n,\mathcal{S}}_{t^{K-1}}})})$ can be considered to be $(w_{j,\mathcal{S}}^1,w_{j,\mathcal{S}}^2,\ldots,w_{j,\mathcal{S}}^{t^{K-1}})$.
The probability of such realization of the query block $\mathcal{S}$ at DB$_n$ is then given as:
\begin{align}
\Pr&\left[\cup_{j=1}^K \left((w_{j,\mathcal{S}}^{({\ell^{i,j,n,\mathcal{S}}_{1}})},w_{j,\mathcal{S}}^{({\ell^{i,j,n,\mathcal{S}}_{2}})},\ldots,w_{j,\mathcal{S}}^{({\ell^{i,j,n,\mathcal{S}}_{t^{K-1}}})}) =(w_{j,\mathcal{S}}^1,w_{j,\mathcal{S}}^2,\ldots,w_{j,\mathcal{S}}^{t^{K-1}})\right) \right]\nonumber \\
&= \prod_{j=1}^K \Pr\left[(w_{j,\mathcal{S}}^{({\ell^{i,j,n,\mathcal{S}}_{1}})},w_{j,\mathcal{S}}^{({\ell^{i,j,n,\mathcal{S}}_{2}})},\ldots,w_{j,\mathcal{S}}^{({\ell^{i,j,n,\mathcal{S}}_{t^{K-1}}})}) =(w_{j,\mathcal{S}}^1,w_{j,\mathcal{S}}^2,\ldots,w_{j,\mathcal{S}}^{t^{K-1}})\right]\nonumber\\
&= \prod_{j=1}^K \Pr\left[\ell^{i,j,n,\mathcal{S}}_1=1,\ \ell^{i,j,n,\mathcal{S}}_2=2,\ \ldots,\ \ell^{i,j,n,\mathcal{S}}_{t^{K-1}}=t^{K-1} \right] \nonumber\\
&= \left(\left(\frac{1}{t^K}\right)\left(\frac{1}{t^K-1}\right)\times\cdots\times\left(\frac{1}{t^K-t^{K-1}+1}\right)\right)^K,
\end{align} 
which is independent of the desired message index $i$, i.e., $I(i;Q_n^{[i]}) = 0$. 

Next, we show that the privacy condition in \eqref{eq:privacy-const} is satisfied through the following:
 \begin{align}
 I(i;Q_n^{[i]},A_n^{[i]},\W_{[1:K]},\Z_{[1:N]})\hspace{-1pt} = \hspace{-1pt}  I(i;Q_n^{[i]}) \hspace{-1pt} + \hspace{-1pt} I(i;\W_{[1:K]},\Z_{[1:N]}|Q_n^{[i]})\hspace{-1pt}  + \hspace{-1pt} I(i;A_n^{[i]}|\W_{[1:K]},\Z_{[1:N]},Q_n^{[i]}) \hspace{-1pt} =\hspace{-1pt} 0,
 \end{align}
 where the second term, $I(i;\W_{[1:K]},\Z_{[1:N]}|Q_n^{[i]})$ is zero since the user does not know the messages or the actual contents of the databases when he chooses the index $i$ and designs the query $Q_n^{[i]}$, while the third term, $I(i;A_n^{[i]}|\W_{[1:K]},\Z_{[1:N]},Q_n^{[i]})$ is zero since the answer $A_n^{[i]}$ is fully determined by DB$_n$ knowing the query $Q_n^{[i]}$ and the part of the storage content $Z_n$.
\end{proof}

\section{Conclusions}
In this paper, we characterized the optimal download cost of PIR for uncoded storage constrained databases. In particular, for any $(N,K)$,  we show that the optimal trade-off between the storage parameter, $\mu\in[1,N]$, and the download cost, $D(\mu)$, is given by the lower convex hull of the pairs $(\frac{t}{N}, \left(1+ \frac{1}{t}+ \frac{1}{t^{2}}+ \cdots + \frac{1}{t^{K-1}}\right))$ for $t\in[1:N]$. 
The main technical contribution of this paper is obtaining lower bounds on the download cost for PIR as a function of storage, which matches the achievable scheme in \cite{ICC2017}, and hence characterizes the optimal trade-off.
We first arrived to a lower bound on the download cost, which is valid for both coded and uncoded placement strategies. We then specialized the obtained bound for uncoded placement strategies, which helps in obtaining a linear program subject to message size and storage constraints. Solving this linear program, we arrive at a set of $N-1$ lower bounds, where each bound is tight in a certain range of storage.
There are several interesting future directions  on this important variation of storage-constrained PIR such as a) settling the tradeoff with coded storage allowed at databases, b) colluding databases and c) introducing additional reliability constraints on storage, such that data must be recoverable from any $N$ out of $M$ databases.

\bibliographystyle{unsrt}
\bibliography{./paper-arxiv.bib}

\begin{thebibliography}{10}

\bibitem{SunAndJaffar1}
H.~Sun and S.~A. Jafar.
\newblock The capacity of private information retrieval.
\newblock {\em IEEE Transactions on Information Theory}, 63(7):4075--4088,
  2017.

\bibitem{chor1995private}
B.~Chor, O.~Goldreich, E.~Kushilevitz, and M.~Sudan.
\newblock Private information retrieval.
\newblock In {\em Proceedings of 36th Annual Symposium on Foundations of
  Computer Science}, pages 41--50, 1995.

\bibitem{yekhanin2012locally}
S.~Yekhanin.
\newblock Locally decodable codes.
\newblock {\em Foundations and Trends{\textregistered} in Theoretical Computer
  Science}, 6(3):139--255, 2012.

\bibitem{gasarch2004survey}
W.~Gasarch.
\newblock A survey on private information retrieval.
\newblock In {\em Bulletin of the European Association for Theoretical Computer
  Science (EATCS)}. Citeseer, 2004.

\bibitem{song2000practical}
D.~X. Song, D.~Wagner, and A.~Perrig.
\newblock Practical techniques for searches on encrypted data.
\newblock In {\em Proceedings of IEEE Symposium one Security and Privacy},
  pages 44--55, 2000.

\bibitem{ostrovsky2007survey}
R.~Ostrovsky and W.~E. Skeith.
\newblock A survey of single-database private information retrieval: Techniques
  and applications.
\newblock In {\em International Workshop on Public Key Cryptography}, pages
  393--411. Springer, 2007.

\bibitem{shah2014one}
N.~Shah, K.~Rashmi, and K.~Ramchandran.
\newblock One extra bit of download ensures perfectly private information
  retrieval.
\newblock In {\em Proceedings of IEEE International Symposium on Information
  Theory (ISIT)}, pages 856--860, 2014.

\bibitem{SunAndJaffar2}
H.~Sun and S.~A. Jafar.
\newblock The capacity of robust private information retrieval with colluding
  databases.
\newblock {\em IEEE Transactions on Information Theory}, 64(4):2361--2370,
  2018.

\bibitem{Salim}
R.~Tajeddine and S.~El Rouayheb.
\newblock Private information retrieval from {MDS} coded data in distributed
  storage systems.
\newblock {\em to appear, IEEE Transactions on Information Theory}, 2018.

\bibitem{BanawanAndUlukus}
K.~Banawan and S.~Ulukus.
\newblock The capacity of private information retrieval from coded databases.
\newblock {\em IEEE Transactions on Information Theory}, 64(3):1945--1956,
  2018.

\bibitem{FreijGnilkeHollantiKarpuk}
R.~Freij{-}Hollanti, O.~W. Gnilke, C.~Hollanti, and D.~A. Karpuk.
\newblock Private information retrieval from coded databases with colluding
  servers.
\newblock {\em SIAM Journal on Applied Algebra and Geometry}, 1(1):647--664,
  2017.

\bibitem{SunAndJafar3}
H.~Sun and S.~A. Jafar.
\newblock Private information retrieval from {MDS} coded data with colluding
  servers: Settling a conjecture by {Freij-Hollanti} et al.
\newblock {\em IEEE Transactions on Information Theory}, 64(2):1000--1022,
  2018.

\bibitem{SunAndJafar4}
H.~Sun and S.~A. Jafar.
\newblock The capacity of symmetric private information retrieval.
\newblock In {\em Proceedings of IEEE {G}lobecom {W}orkshops}, pages 1--5,
  2016.

\bibitem{WangAndSkoglund}
Q.~Wang and M.~Skoglund.
\newblock Symmetric private information retrieval for {MDS} coded distributed
  storage.
\newblock In {\em Proceedings of IEEE International Conference on
  Communications (ICC)}, pages 1--6, 2017.

\bibitem{TandonCachePIR}
R.~Tandon.
\newblock The capacity of cache aided private information retrieval.
\newblock In {\em Proceedings of 55th Allerton Conference on Communications,
  Control and Computing, Monticello, IL}, 2017.

\bibitem{Ulukus-NewCache}
Y.~Wei, K.~Banawan, and S.~Ulukus.
\newblock Fundamental limits of cache-aided private information retrieval with
  unknown and uncoded prefetching.
\newblock {\em arXiv preprint arXiv:1709.01056}, 2017.

\bibitem{Kadhe-NewCache}
S.~Kadhe, B.~Garcia, A.~Heidarzadeh, S.~El Rouayheb, and A.~Sprintson.
\newblock Private information retrieval with side information.
\newblock {\em arXiv preprint arXiv:1709.00112}, 2017.

\bibitem{Jafar-NewCache}
Z.~Chen, Z.~Wang, and S.~Jafar.
\newblock The capacity of private information retrieval with private side
  information.
\newblock {\em arXiv preprint arXiv:1709.03022}, 2017.

\bibitem{BanawanAndUlukus3}
K.~Banawan and S.~Ulukus.
\newblock Multi-message private information retrieval: Capacity results and
  near-optimal schemes.
\newblock {\em IEEE Transactions on Information Theory}, 64(10):6842--6862,
  2018.

\bibitem{ZhangAndGennian}
Y.~Zhang and G.~Ge.
\newblock Private information retrieval from {MDS} coded databases with
  colluding servers under several variant models.
\newblock {\em arXiv preprint arXiv:1705.03186}, 2017.

\bibitem{sun2018multiround}
H.~Sun and S.~Jafar.
\newblock Multiround private information retrieval: Capacity and storage
  overhead.
\newblock {\em IEEE Transactions on Information Theory}, 64(8):5743--5754,
  2018.

\bibitem{BanawanAndUlukus2}
K.~Banawan and S.~Ulukus.
\newblock The capacity of private information retrieval from {Byzantine} and
  colluding databases.
\newblock {\em to appear, IEEE Transactions on Information Theory}, 2018.

\bibitem{tajeddine2018robust}
R.~Tajeddine, O.~W. Gnilke, D.~Karpuk, R.~Freij-Hollanti, and C.~Hollanti.
\newblock Robust private information retrieval from coded systems with
  {Byzantine} and colluding servers.
\newblock {\em arXiv preprint arXiv:1802.03731}, 2018.

\bibitem{banawan2018private}
K.~Banawan and S.~Ulukus.
\newblock Private information retrieval through wiretap channel {II}: Privacy
  meets security.
\newblock {\em arXiv preprint arXiv:1801.06171}, 2018.

\bibitem{banawan2018asymmetry}
K.~Banawan and S.~Ulukus.
\newblock Asymmetry hurts: Private information retrieval under asymmetric
  traffic constraints.
\newblock {\em arXiv preprint arXiv:1801.03079}, 2018.

\bibitem{tian2018shannon}
C.~Tian, H.~Sun, and J.~Chen.
\newblock A {Shannon}-theoretic approach to the storage-retrieval tradeoff in
  {PIR} systems.
\newblock In {\em IEEE International Symposium on Information Theory (ISIT)},
  pages 1904--1908, 2018.

\bibitem{ICC2017}
M.~Abdul{-}Wahid, F.~Almoualem, D.~Kumar, and R.~Tandon.
\newblock Private information retrieval from storage constrained databases -
  coded caching meets {PIR}.
\newblock In {\em Proceedings of IEEE International Conference on
  Communications (ICC)}, pages 1--6, 2018.

\bibitem{Maddah}
M.~A. Maddah-Ali and U.~Niesen.
\newblock Fundamental limits of caching.
\newblock {\em IEEE Transactions on Information Theory}, 60(5):2856--2867, May
  2014.

\bibitem{ISIT2018}
M.~A. Attia, D.~Kumar, and R.~Tandon.
\newblock {The Capacity of Uncoded Storage Constrained PIR}.
\newblock {\em Proceedings of IEEE International Symposium on Information
  Theory (ISIT)}, 2018.

\bibitem{wan2016optimality}
K.~Wan, D.~Tuninetti, and P.~Piantanida.
\newblock On the optimality of uncoded cache placement.
\newblock In {\em Proceedings of IEEE Information Theory Workshop (ITW)}, pages
  161--165, 2016.

\bibitem{attia2018near}
M.~A. Attia and R.~Tandon.
\newblock Near optimal coded data shuffling for distributed learning.
\newblock {\em arXiv preprint arXiv:1801.01875}, 2018.

\bibitem{Attia-ISIT2018-2}
M.~A. Attia and R.~Tandon.
\newblock {Approximately Optimal Distributed Data Shuffling}.
\newblock {\em Proceedings of IEEE International Symposium on Information
  Theory (ISIT)}, 2018.

\bibitem{bahrami2017towards}
M.~Bahrami, M.~A. Attia, R.~Tandon, and B.~Vasi{\'c}.
\newblock Towards the exact rate-memory trade-off for uncoded caching with
  secure delivery.
\newblock In {\em Proceedings of 55th Annual Allerton Conference on
  Communication, Control, and Computing (Allerton)}, pages 878--885, 2017.

\end{thebibliography}

\end{document}